\newcolumntype{d}{D{.}{.}{-1}}
\newcommand{\x}{{x}}
\newcommand{\defn}{:=}
\newcommand{\e}{e}
\newcommand{\cA}{\mathcal{A}}
\newcommand{\cB}{\mathcal{B}}
\newcommand{\cC}{\mathcal{C}}
\newcommand{\cD}{\mathcal{D}}
\newcommand{\cE}{\mathcal{E}}
\newcommand{\cI}{\mathcal{I}}
\newcommand{\cM}{\mathcal{M}}
\newcommand{\cN}{\mathcal{N}}
\newcommand{\cP}{\mathcal{P}}
\newcommand{\cO}{\mathcal{O}}
\newcommand{\cS}{\mathcal{S}}
\newcommand{\mB}{\mathfrak{B}}
\newcommand{\mZ}{\bm{\mathbb{Z}}}
\newcommand{\mx}{\mathsf{x}}
\newcommand{\mz}{\mathsf{z}}
\newcommand{\mP}{\mathfrak{P}}
\newcommand{\mfp}{\mathfrak{p}}
\newcommand{\mff}{\mathfrak{f}}
\newcommand{\mfg}{\mathfrak{g}}
\newcommand{\mT}{T}
\newcommand{\cV}{\mathcal{V}}
\newcommand{\intr}{\ensuremath{\operatorname{int}}}
\newtheorem{proposition}{Proposition}
\newtheorem{definition}{Definition}
\newtheorem{problem}{Problem}
\newtheorem{assumption}{Assumption}
\renewcommand{\t}{^{\mbox{\scriptsize \mT}}}
\newcommand{\bd}{\mathrm{bd}}
\def\BState{\State\hskip-\ALG@thistlm}
\title{Workspace Partitioning and Topology Discovery Algorithms for Heterogeneous Multi-Agent Networks} 
\author{Efstathios Bakolas \thanks{E. Bakolas
is an Associate Professor in the Department of Aerospace Engineering
and Engineering Mechanics, The University of Texas at Austin,
Austin, Texas 78712-1221, USA, Email: bakolas@austin.utexas.edu.
This work was supported in part by the National Science Foundation
(award no. CMMI-1753687).}}
\begin{document}

\maketitle

\begin{abstract} In this paper, we consider a class of workspace
partitioning problems that arise in the context of area coverage and
spatial load balancing for spatially distributed heterogeneous
multi-agent networks. It is assumed that each agent has certain
directions of motion or directions for sensing and exploration that
are more preferable than others. These preferences are measured by
means of convex and anisotropic (direction-dependent) quadratic
proximity metrics which are, in general, different for each agent.
These proximity metrics induce Voronoi-like partitions of the
network's workspace 
that are comprised of cells which may not always be convex (or even connected) sets but are necessarily contained in ellipsoids that are known to their
corresponding agents. The main contributions of this work are 1) a
distributed algorithm for the computation of a Voronoi-like
partition of the workspace of a heterogeneous multi-agent network
and 2) a systematic process to discover the network topology induced
by the latter Voronoi-like partition. 
Numerical simulations that illustrate the efficacy of the proposed
algorithms are also presented.
\end{abstract}

\section{Introduction}\label{s:intro}
Area coverage and spatial load balancing correspond to two
fundamental classes of problems for spatially distributed
multi-agent networks. Such problems are typically addressed by means
of distributed control algorithms that rely on the use of Voronoi or
Voronoi-like (also known as generalized Voronoi) partitions of the
workspace of the multi-agent network. For the distributed
implementation of these algorithms, each agent has to rely on
information encoded in its own cell from the spatial partition and
perhaps the cells of its neighbors. However, unless the Voronoi-like
partitions are computed by means of distributed partitioning
algorithms, the induced control algorithms are not truly
distributed. Therefore, the development of distributed partitioning
algorithms constitutes an integral component of any
Voronoi-distributed control architecture for a multi-agent network.
A partitioning algorithm can be characterized as distributed when
each agent can compute its own cell independently from its teammates
without utilizing a global reference frame while relying on exchange
of information with only a subset of them (e.g., those that lie
within its communication or sensing range). Ideally, an agent can
compute its own cell if it can exchange information with the agents
that correspond to its neighbors in the topology of the Voronoi-like
partition; these neighboring relations, however, are unknown before
the computation of the Voronoi-partition itself. We will refer to
the problem of characterizing the set of neighbors (or more
realistically, a superset of the latter set) in the topology induced
by the Voronoi-like partition as the ``network topology discovery
problem.'' 

In this work, we propose distributed algorithms that 1) compute
Voronoi-like partitions of the workspace of spatially distributed
heterogeneous multi-agent networks and 2) discover the network
topology induced by the latter partitions. In our approach, the
agents are allowed to have different preferences (hence the
qualifier ``heterogeneous'') which are measured in terms of relevant
proximity (generalized) metrics such as the sensing cost that an
agent will incur to obtain measurements from an arbitrary point in
its spatial domain or the transition cost (e.g., fuel or battery /
energy consumption) that will have to incur to reach it. In our
approach, we assume that the proximity metric associated with an
agent can be expressed as the sum of a convex quadratic form
associated with a positive definite matrix, which we refer to as
\textit{distance operator}~\cite{p:Labelle03}, and a constant term,
which we refer to as \textit{additive gain}. The distance operators
are not necessarily the same for all the agents given that their
workspace may exhibit anisotropic features (e.g., certain directions
of motion or exploration/sensing are more preferable than others).
Some characteristic examples of anisotropic workspaces are oceanic
environments, atmospheric domains and hilly terrains in which
anisotropic features are induced by ocean currents, winds and
elevation variance, respectively. Typically, such anisotropic
features are spatially varying and thus it is natural to associate
each agent with a different distance operator. We will refer to the
Voronoi-like partition of the workspace of a multi-agent network
whose agents utilize proximity metrics with different distance
operators as the Heterogeneous Quadratic Voronoi Partition (HQVP).
In general, the cells that comprise the HQVP may not be convex, or
even connected, sets. Consequently, the computation of HQVP and the
discovery of the induced network topology is not a straightforward
task in sharp contrast with standard Voronoi partitions or other
classes of well studied Voronoi-like partitions (e.g., power
diagrams).

\textit{Literature review:} Area coverage and spatial load balancing
problems for multi-agent networks have received significant
attention in the relevant literature. A well received approach which
leverages the so-called Lloyd's algorithm~\cite{p:Lloyd82} together
with sequences of standard Voronoi partitions can be found
in~\cite{p:cortez}. Several extensions of \cite{p:cortez} have
appeared in the relevant literature (see, for
instance,~\cite{p:cortes05,p:martinez06,p:slotine2009,p:cortes2010,p:Breiten10,p:pavoneTAC2011,p:Schwager2011,fb-rc-pf:08u,rp-pf-rb:13i,p:tzes13,p:bhatta2014,p:tvlloyd15}).
The aforementioned papers deal with multi-agent networks that are
homogeneous in the sense that all of their agents employ the same
proximity metric modulo, perhaps, a different constant term
(additive gain). In this work, a multi-agent network will not be
classified as heterogeneous unless at least two of its agents have
different distance operators and regardless if their additive gains
are the same or not. Coverage problems for heterogeneous networks
with different distance operators
are considered in \cite{p:anisosens2008} based on, however,
centralized techniques. Finally, the problem of discovering the
neighbors of an agent in the topology induced by a standard Voronoi
partition has been studied in \cite{p:hadji,p:lynchvoro2017}. The
applicability of the methods proposed in these references is limited
to standard Voronoi partitions and cannot be extended to the class
of spatial partitions considered in this paper.

In our previous work, we have addressed workspace partitioning
problems for area coverage by homogeneous multi-agent networks based
on proximity (generalized) metrics corresponding to the optimal
cost-to-go functions of relevant optimal control
problems~\cite{p:bt_autom10,p:baktsiaut13,p:bak2018}. In the special
case of linear quadratic optimal control problems, the latter
metrics correspond to convex quadratic functions whose associated
distance operators are, however, the same for all them. Under this
strong assumption, the induced Voronoi-like partitions admit a
special structure that renders them amenable to computation by means
of simple decentralized or distributed
algorithms~\cite{p:bakolas2013b,p:BAKOLAS2014,p:BAKOLAS2016}. The
problem of inferring the neighbors of an agent in the topology
induced by these class of spatial partitions is studied in
\cite{p:BAKOLAS2016,p:bak2018}. 

\textit{Statement of contributions:} The main contribution of this
work is two-fold. First, we show that under some mild technical
assumptions, each cell of the proposed Voronoi-like partition is
necessarily contained inside an ellipsoid that is known a priori to
its corresponding agent. Next, we present an algorithm which, by
leveraging the latter key geometric property, allows each agent to
independently compute its own cell from the HQVP. The proposed
partitioning algorithm executes a certain number of line searches
that seek for the boundary points of the cell of an agent. 
In contrast with the algorithms proposed in our previous
work~\cite{p:bakauto2014,p:bakaut2015,p:BAKOLAS2016,p:bak2018},
whose applicability is limited to partitions comprised of convex or
star convex cells, the algorithms proposed herein can successfully
characterize the cells of a HQVP despite the fact that the latter
may be non-convex or even disconnected sets. The proposed algorithms
rely on relative position measurements only and thus, neither a
global reference frame nor a common grid are required, which is in
contrast with most computational geometric techniques for
non-standard Voronoi-like partitions~\cite{p:Hoff:1999}. More
importantly, the proposed partitioning algorithm can be executed in
a distributed way (based on local information) when combined with a
network topology discovery algorithm. The main idea of the latter
algorithm is to have each agent adjust its communication range so
that it can communicate directly (point-to-point communication) with
a group of agents from the same network which is a superset of its
set of neighbors in the topology of the HQVP without having computed
the latter partition.


\textit{Structure of the paper:} The problem formulation and
corresponding preliminaries are presented in Section~\ref{s:form}.
In Section~\ref{s:partition}, we analyze the partitioning problem
and present certain key properties enjoyed by its solution. The
distributed partitioning algorithm is presented in
Section~\ref{s:mainalgo} whereas the network topology discovery
problem is analyzed and solved in Section~\ref{s:netwtopo}.
Section~\ref{s:simu} presents numerical simulations, and finally,
Section~\ref{s:concl} concludes the paper with a summary of remarks
together with directions for future work.

\vspace{-3mm}

\section{Preliminaries and Problem Formulation}\label{s:form}

\vspace{-2mm}

\subsection{Notation}
We denote by $\mathbb{R}^n$ the set of $n$-dimensional real vectors
and by $\mathbb{R}_{\geq 0}$ the set of non-negative real numbers.
We write $\mathbb{Z}$ to denote the set of integers. Given $\tau_1$,
$\tau_2 \in \mathbb{Z}$ with $\tau_1 \leq \tau_2$, we define the
\textit{discrete interval} from $\tau_1$ to $\tau_2$ as follows:
$[\tau_1, \tau_2]_{\mathbb{Z}} = [\tau_1, \tau_2] \cap \mathbb{Z}$.
We write $| \alpha|$ to denote the 2-norm of a vector
$\alpha\in\mathbb{R}^n$. Moreover, we write $\mathbf{A}\succ
\mathbf{0}$ to denote that a symmetric matrix $\mathbf{A} =
\mathbf{A}\t$ is positive definite. 
Given $\mathbf{A}=\mathbf{A}\t$, $\mathbf{B} = \mathbf{B}\t$, we
write $\mathbf{A} \succ \mathbf{B}$ if and only if $\mathbf{A} -
\mathbf{B} \succ \mathbf{0}$. Furthermore, given a symmetric matrix
$\mathbf{P} = \mathbf{P}\t$, we denote by
$\lambda_{\min}(\mathbf{P})$ and $\lambda_{\max}(\mathbf{P})$ its
minimum and maximum (real) eigenvalues, respectively. Given $x\in \mathbb{R}^n$, $\mathbf{\Sigma}  \succ \mathbf{0}$, and
$\gamma
> 0$, we write $\cE_{\gamma}(x;\mathbf{\Sigma}^{-1})$ to denote the
ellipsoid $\{z\in\mathbb{R}^n: (z - x)\t \mathbf{\Sigma} (z -x) \leq
\gamma\}$. We denote by $\cB_{\rho}(x_c)$ the closed ball of radius
$\rho>0$ centered at $x_c$, that is, $\cB_{\rho}(x_c):=
\{z\in\mathbb{R}^n: |z - x_c | \leq \rho\}$. Furthermore,
$\mathrm{bd}(\cA)$ and $\mathrm{rbd}(\cA)$ denote the boundary and
the relative boundary of a set $\cA$, whereas $\intr(\cA)$ and
$\mathrm{rint}(\cA)$ denote its interior and relative interior. The powerset of a set $\cA$ is denoted as $\wp(\cA)$. 
Given $\cA$, $\cB \subseteq \mathbb{R}^n$, we denote by $\cA \oplus \cB$ their Minkowski sum, that
is, $\cA \oplus \cB := \{x=y+z: y\in \cA~\text{and}~z\in\cB \}$, and
by $\cA \ominus \cB$ their Minkowski difference, that is, $\cA
\ominus \cB := \{ x: \{x\} \oplus \cB \subseteq \cA \}$. Given $\alpha$, $\beta\in\mathbb{R}^n$, we denote by
$[\alpha,\beta]$ the line segment connecting them (including the two
endpoints), that is, $[\alpha,\beta] :=
\{\x\in\mathbb{R}^n:~\x=t\alpha + (1-t)\beta,~t\in[0,1] \}$. In
addition, we denote by $]\alpha,\beta]$ and $[\alpha,\beta[$ the
sets $[\alpha,\beta]\backslash\{ \alpha \}$ and
$[\alpha,\beta]\backslash\{\beta\}$, respectively.

\subsection{The Partitioning Problem for a Heterogeneous Multi-Agent Network}

In this section, we formulate the partitioning problem for a
multi-agent network comprised of $n$ agents distributed over a
spatial domain $\cS$, which is assumed to be a convex and compact
set. To the latter network we attach an additional agent, which we
refer to as the $0$-th agent of the network. The latter agent may
correspond, for instance, to a vehicle station from which vehicles
are dispatched in response to requests issued in the vicinity of the
station or a ``mother vehicle'' that can deploy $n$ mobile sensors
to collect measurements from various nearby locations.
We will refer to the network that includes the $0$-th
agent as the \textit{extended} network. It is assumed that the
agents are located at $n+1$ distinct locations in $\cS$, which form
the point-set $X:=\{x_i\in\cS:~i\in[0,n]_{\mZ}\}$.


Our first objective is to subdivide $\cS$ into $n+1$ non-overlapping
subsets that will be associated with the $n+1$ agents of the
extended network in an one-to-one way. We will refer to these
subsets of $\cS$ as regions of influence (ROI) or simply cells that
comprise a spatial partition of the network's workspace. In
particular, the interior of each cell will consist exclusively of
points in $\cS$ that are ``closer'' to its corresponding agent than
to any other agent of the extended network. The closeness between
the $i$-th agent and an arbitrary point $x \in \cS$ will be measured
in terms of an appropriate convex quadratic \textit{proximity}
(generalized) metric $\delta(\cdot;x_i):\cS \rightarrow
\mathbb{R}_{\geq 0}$ with
\begin{equation}\label{eq:deltadef}
 \delta_i(x;x_i) \defn  (x - x_i)\t \mathbf{P}_i (x - x_i) +
 \mu_i,
\end{equation}
where $\mu_i \geq 0$ and $\mathbf{P}_i \succ \mathbf{0}$ for all
$i\in [0,n]_{\mathbb{Z}}$. We will refer to $\mu_i$ and
$\mathbf{P}_i$ as the $i$-th additive gain and distance operator,
respectively. The proximity metric $\delta_i(x;x_i)$ corresponds,
for instance, to the cost that the $i$-th agent will incur for its
transition from point $x_i$ to point $x$. 
Alternatively, it may reflect the sensing cost that the $i$-th
agent, which is located at $x_i$, will incur in order to obtain
measurements from point $x$. In particular, let us consider the bivariate Gaussian distribution with mean $m_i \in
\mathbb{R}^2$ and covariance $\mathbf{\Sigma}_i \succ \mathbf{0}$ whose probability density function is given by
\[
\rho_i(x) : = \big(2\pi \sqrt{\det(\mathbf{\Sigma}_i)} \big)^{-1}
\mathrm{exp}\big(-\tfrac{1}{2} (x-m_i)\t
\mathbf{\Sigma}^{-1}_i(x-m_i) \big)
\]
and let us define the sensing cost as
follows~\cite{p:arslan2019}:
\begin{align*}
c_i(x) & := -\mathrm{log}(\rho_i(x)) \nonumber \\
 & = \log\big(2\pi
\sqrt{\det(\mathbf{\Sigma}_i)} \big) 
+ \tfrac{1}{2}(x-m_i)\t \mathbf{\Sigma}^{-1}_i(x-m_i).
\end{align*}
Therefore, by taking $\mathbf{P}_i := \tfrac{1}{2}
\mathbf{\Sigma}^{-1}_i$, $x_i=m_i$ and $\mu_i:= \log(2\pi
\sqrt{\det(\mathbf{\Sigma}_i)}$, we have $\delta_i(x;x_i) = c_i(x)$.

It is worth noting that the $i$-th additive gain $\mu_i$ corresponds
to the minimum value of $\delta_i(x;x_i)$, which is attained at
$x=x_i$, that is, $\mu_i = \min_{x\in\cS} \delta_i(x;x_i) =
\delta_i(x_i;x_i)$. In addition, the $i$-th distance operator
$\mathbf{P}_i$ determines which directions, if any, are more
preferable to the $i$-th agent than others. In particular, if
$\mathbf{P}_i=\lambda_i \mathbf{I}$, where $\lambda_i
>0$, then the level sets of the quadratic form $(x - x_i)\t
\mathbf{P}_i (x - x_i)$ are circles and thus there are no preferable
directions; otherwise, the latter level sets become ellipses whose
major axes determine the most preferable directions. In the first
case, $\mathbf{P}_i$ is an isotropic distance operator (i.e.,
direction independent), whereas in the second, and more interesting
case, is an anisotropic (i.e., direction-dependent) distance
operator. It is worth noting that requiring the existence of
a matrix $\overline{\mathbf{P}} \succ \mathbf{0}$ such that
$\mathbf{P}_i = \overline{\mathbf{P}}$ for all $i \in
[0,n]_{\mathbb{Z}}$ can be a very restrictive assumption in
practice. In this work, we will consider the more general case in
which there always exists $(i,j)$ with $i\neq j$ such that
$\mathbf{P}_i \neq \mathbf{P}_j$ and we will refer to the
multi-agent network as ``heterogeneous.''

Next, we provide a number of technical, yet practically intuitive,
assumptions that will help us streamline the subsequent discussion
and analysis.
\begin{assumption}\label{assumption1}
For any $i \in [0,n]_{\mathbb{Z}}$, we have that $\delta_i(x_i;x_i)
<\delta_j(x_i;x_j)$ or, equivalently, 
\begin{equation}\label{eq:assu1}
(x_j - x_i)\t \mathbf{P}_j (x_j - x_i) + \mu_j > \mu_i,
\end{equation}
for all $j \neq i$, provided that $x_i \neq x_j$.
\end{assumption}
The previous assumption implies that the distance of the $j$-th
agent from the location $x_i$ of the $i$-th agent, which is equal to
$\delta_j(x_i;x_j)$, has to be greater than the distance of the
$i$-th agent from itself, which is equal to $\delta_i(x_i;x_i) =
\mu_i$. For instance, in the case of a sensor network, condition
\eqref{eq:assu1} implies that no sensor different from the $i$-th
sensor can obtain more accurate measurements from the location $x_i$
of the $i$-th agent. 

\begin{remark}
Although Assumption~\ref{assumption1} is quite intuitive, one may argue that there may exist applications in which it may not hold true. It should be mentioned here that the partitioning algorithm that will be presented herein can be applied even when Assumption~1 is removed, after the necessary modifications have been carried out (we will comment on some of these modifications later on). Assumption~\ref{assumption1} will allow us to streamline the presentation and avoid discussing special cases of low interest.
\end{remark}

\begin{assumption}\label{assumption2}
We assume that
\begin{equation}\label{eq:partialorder}
 \mathbf{P}_i \succ \mathbf{P}_0 \succ \mathbf{0}, ~~~~ \mu_i \geq \mu_0 \geq 0,~~~~\forall i \in
 [1,n]_{\mZ}.
\end{equation}
\end{assumption}

The following proposition will allow us to better understand the
implications of Assumption~\ref{assumption2}.

\begin{proposition}\label{pr:setcontain}
Let $\gamma > \max_{i\in[1,n]_{\mZ}}\mu_i$ and let $\mx \in \cS$. In
addition, let $\cD^0_{\gamma}(\mx)$ and $\cD^i_{\gamma}(\mx)$ denote
the $\gamma$- sublevel-sets of, respectively, $\delta_0(\cdot;x_0)$
and $\delta_i(\cdot;x_i)$ when $x_i\equiv x_0 \equiv \mx$ for all
$i\in[1,n]_{\mZ}$, that is, $\cD_\gamma^0(\mx):= \{ x\in
\cS:~\delta_0(x;\mx) \leq \gamma \}$ and $\cD_{\gamma}^i(\mx):= \{
x\in \cS:~\delta_i(x;\mx) \leq \gamma\}$, for $i \in [1,n]_{\mZ}$.
Then, the following set inclusion holds:
\begin{equation}\label{eq:setcontain}
\cD_{\gamma}^i(\mx) \subsetneq \cD_{\gamma}^0(\mx),~~~\forall i \in
[1,n]_{\mZ}.
\end{equation}
\end{proposition}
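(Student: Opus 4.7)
The plan is to reduce the containment $\cD_\gamma^i(\mx) \subseteq \cD_\gamma^0(\mx)$ to a pointwise comparison of the two quadratic proximity metrics via Assumption~\ref{assumption2}, and then to promote this to a strict inclusion by constructing an explicit witness in $\cD_\gamma^0(\mx) \setminus \cD_\gamma^i(\mx)$. The subset step is essentially a one-line chain of inequalities: for any $x \in \cD_\gamma^i(\mx)$, the ordering $\mathbf{P}_i \succ \mathbf{P}_0$ gives $(x-\mx)\t \mathbf{P}_0 (x-\mx) \leq (x-\mx)\t \mathbf{P}_i (x-\mx)$, and combining this with $\mu_0 \leq \mu_i$ and the defining inequality $\delta_i(x;\mx) \leq \gamma$ yields $\delta_0(x;\mx) \leq \gamma$, so $x \in \cD_\gamma^0(\mx)$.

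To build the witness, I would move from $\mx$ along a direction that maximally exploits the anisotropy gap between $\mathbf{P}_i$ and $\mathbf{P}_0$. Concretely, let $v$ be a unit eigenvector of $\mathbf{P}_i - \mathbf{P}_0$ corresponding to its largest (strictly positive) eigenvalue $\lambda$, and set $x^\star(t) := \mx + t v$. A direct calculation gives $\delta_i(x^\star(t);\mx) - \delta_0(x^\star(t);\mx) = \lambda t^2 + (\mu_i - \mu_0) > 0$ for every $t>0$, and using $\gamma > \mu_i$ together with $v\t \mathbf{P}_0 v < v\t \mathbf{P}_i v$ and $\gamma - \mu_0 \geq \gamma - \mu_i$ one verifies that the interval $t^2 \in \big((\gamma - \mu_i)/(v\t \mathbf{P}_i v),\, (\gamma - \mu_0)/(v\t \mathbf{P}_0 v)\big]$ is nonempty. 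Any $t$ in this interval produces $\delta_0(x^\star(t);\mx) \leq \gamma < \delta_i(x^\star(t);\mx)$.

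The main obstacle is verifying that such a witness $x^\star(t)$ can actually be taken inside $\cS$, since the sublevel sets are intersected with the workspace. Because $\mx \in \cS$ and $\cS$ is convex and compact, the evenness of the quadratic form in $t$ lets me walk along either $+v$ or $-v$; whenever $\cD_\gamma^i(\mx)$ is a proper subset of $\cS$, at least one of these rays meets the admissible $t$-interval before leaving $\cS$. The positive definiteness of $\mathbf{P}_i - \mathbf{P}_0$ ensures that \emph{every} unit direction yields a positive gap between $\delta_i$ and $\delta_0$, so I have full freedom to switch to a different direction if the chosen eigenvector is blocked by $\mathrm{bd}(\cS)$; this geometric bookkeeping is the only nontrivial part of the argument, and it implicitly relies on the workspace being large enough (relative to $\gamma$) that the smaller ellipsoid $\cD_\gamma^i(\mx)$ does not already exhaust $\cS$.
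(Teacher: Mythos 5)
Your proof is correct and, for the non-strict inclusion, is essentially the paper's argument in pointwise form: the paper sandwiches $\cD_{\gamma}^i(\mx)$ between the two sets $\{x\in\cS:\,(x-\mx)\t\mathbf{P}_i(x-\mx)\leq\gamma-\mu_i\}$ and $\{x\in\cS:\,(x-\mx)\t\mathbf{P}_0(x-\mx)\leq\gamma-\mu_i\}\subseteq\cD_{\gamma}^0(\mx)$ using exactly the two facts you use ($\mu_0\leq\mu_i$ and $\mathbf{P}_i\succ\mathbf{P}_0$). Where you genuinely diverge is on strictness: the paper simply declares the middle inclusion to be proper ``because $\mathbf{P}_i\succ\mathbf{P}_0$'' and produces no witness, whereas you construct one along a top eigenvector of $\mathbf{P}_i-\mathbf{P}_0$ and, in doing so, correctly isolate the one real obstruction that the paper glosses over: both sublevel sets are intersected with $\cS$, so if $\cS\subseteq\cE_{\gamma-\mu_i}(\mx;\mathbf{P}_i^{-1})$ then $\cD_{\gamma}^i(\mx)=\cD_{\gamma}^0(\mx)=\cS$ and the inclusion is \emph{not} strict. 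The proposition (and the paper's proof) therefore tacitly assumes $\cD_{\gamma}^i(\mx)\neq\cS$; your flagging of this is a feature, not a defect. Granting that assumption, your ``geometric bookkeeping'' closes more cleanly if you abandon the eigenvector direction (whose two rays could in principle both exit $\cS$ before reaching the admissible $t$-interval) and instead aim at any $y\in\cS\setminus\cD_{\gamma}^i(\mx)$: with $q_j:=(y-\mx)\t\mathbf{P}_j(y-\mx)$ for $j\in\{0,i\}$, the point $z:=\mx+s(y-\mx)$ with $s^2=\min\{1,(\gamma-\mu_0)/q_0\}$ lies in $\cS$ by convexity and satisfies $\delta_0(z;\mx)\leq\gamma<\delta_i(z;\mx)$, since $q_i>q_0>0$ and $q_i>\gamma-\mu_i$. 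This uses only that \emph{some} direction reaches the complement of $\cD_{\gamma}^i(\mx)$ inside $\cS$, which is exactly the minimal hypothesis under which strictness can hold at all.
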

\begin{proof}
In view of \eqref{eq:deltadef}, $\cD_{\gamma}^0(\mx)$ and
$\cD_{\gamma}^i(\mx)$ can be expressed as follows:
\begin{align*}
\cD_{\gamma}^0(\mx) & = \{ x\in \cS :~(x- \mx)\t \mathbf{P}_0(x -
\mx)
\leq \gamma  - \mu_0\}, \\
\cD_{\gamma}^i(\mx) & = \{ x\in \cS:~(x- \mx)\t \mathbf{P}_i(x -
\mx) \leq \gamma - \mu_i\}.
\end{align*}
By hypothesis
$\gamma
> \mu_i \geq \mu_0 \geq 0$, and thus
\begin{align*}
\cD_{\gamma}^0(\mx) & \supseteq \{ x\in \cS
:~(x- \mx)\t \mathbf{P}_0(x - \mx) \leq \gamma  - \mu_i\} \\
& \supsetneq \{ x\in \cS:~(x- \mx)\t \mathbf{P}_i(x - \mx) \leq
\gamma - \mu_i\} \\ & = \cD_{\gamma}^i(\mx),
\end{align*}
where the second set inclusion follows from the fact that
$\mathbf{P}_i \succ \mathbf{P}_0 \succ \mathbf{0}$. Thus, the set
inclusion \eqref{eq:setcontain} holds true.
\end{proof}

\begin{remark}
It is worth noting that $\cD_{\gamma}^0(\mx) = \cE_{\gamma -
\mu_0}(\mx;\mathbf{P}_0^{-1}) \cap \cS$ and $\cD_{\gamma}^i(\mx) =
\cE_{\gamma - \mu_i}(\mx;\mathbf{P}_i^{-1}) \cap \cS$.
Proposition~\ref{pr:setcontain} implies that the footprint of the
set of points that are within distance $\gamma$ from the $0$-th
agent (distance measured in terms of $\delta_0$) is greater than the
footprint of the set of points that are within distance $\gamma$
from the $i$-th agent (distance measured now in terms of $\delta_i$)
when both of the agents are placed at an arbitrary common point $\mx
\in \cS$.
\end{remark}

\subsection{Formulation of the Workspace Partitioning Problem}

We can now give the precise definitions of the Voronoi-like
partition of $\cS$ generated by the extended multi-agent network
based on the quadratic proximity metrics defined in
\eqref{eq:deltadef}.
\begin{definition}\label{defn:QVP}
Suppose that $\cS \in \mathbb{R}^2$ is a compact and convex set and
let $X \subset \cS$ be a set comprised of $n+1$ distinct points
(locations of the agents). Then, we say that the collection of sets
$\cV(X;\cS) := \{ \cV^i \in \wp(\cS):~ i\in [0, n]_{\mZ} \}$ where
\begin{equation}\label{eq:voronoi}
\cV^i:=\{x \in \cS:~\delta_i(x; x_i) \leq
\min_{j\neq i} \delta_j(x;x_j)\},
\end{equation}
forms a Heterogeneous Quadratic Voronoi Partition (HQVP) of $\cS$
that is generated by $X$. In particular, $\mathrm{i})$ $\cS =
\cup_{i \in [0,n]_{\mathbb{Z}} } \cV^i$ and $\mathrm{ii})$ $\intr(\cV^i)
\cap \intr(\cV^j) = \varnothing$, for $i\neq j$. We will refer to
the set $\cV^i$ as the $i$-th cell or region-of-influence (ROI).
\end{definition}
The following proposition highlights some fundamental properties of
the HQVP.
\begin{proposition}\label{prop:basicprop}
Let $\cV^i \in \cV(X;\cS)$. Then, $\delta_i(x; x_i) \leq \min_{j\neq
i} \delta_j(x;x_j)$ for all $x \in \cV^i$ and in particular,
\begin{enumerate}
\item
 $\delta_i(x; x_i) < \min_{j\neq i} \delta_j(x;x_j),~~\forall x \in \intr(\cV^i)$ 
\item $\delta_i(x; x_i) = \min_{j\neq i} \delta_j(x;x_j),~~\forall x\in \mathrm{bd}(\cV^i) \backslash \mathrm{bd}(\cS)$, that is,
there exists $j=j_x$ such that $\delta_i(x; x_i) =
\delta_{j_x}(x;x_{j_x})$.
\end{enumerate}
\end{proposition}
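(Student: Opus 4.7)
The master inequality $\delta_i(x; x_i) \leq \min_{j \neq i}\delta_j(x; x_j)$ on $\cV^i$ is just a rewording of \eqref{eq:voronoi}, so the substance lies in the two bulleted refinements. I would prove each by contradiction.

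For claim (1), the plan is to suppose $x \in \intr(\cV^i)$ yet $\delta_i(x; x_i) = \delta_j(x; x_j)$ for some $j \neq i$, and to examine the quadratic polynomial $\phi(y) \defn \delta_j(y; x_j) - \delta_i(y; x_i)$. Picking an open neighborhood $U$ of $x$ with $U \subseteq \cV^i$, I have $\phi(x) = 0$ and $\phi \geq 0$ on $U$, so $x$ is a local minimizer of the quadratic $\phi$. The second-order necessary condition then forces the Hessian $2(\mathbf{P}_j - \mathbf{P}_i)$ to be positive semidefinite, so $\phi$ is convex and its local minimum is in fact global: $\phi(y) \geq 0$ for every $y \in \mathbb{R}^2$. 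The punchline is the test point $y = x_j$, at which direct substitution gives $\phi(x_j) = \delta_j(x_j; x_j) - \delta_i(x_j; x_i)$, which is strictly negative by Assumption~\ref{assumption1} applied with the roles of $i$ and $j$ swapped; that is the desired contradiction.

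For claim (2), I would instead suppose $x \in \bd(\cV^i) \setminus \bd(\cS)$ yet $\delta_i(x; x_i) < \delta_j(x; x_j)$ strictly for every $j \neq i$. With $\epsilon \defn \min_{j \neq i}\bigl(\delta_j(x; x_j) - \delta_i(x; x_i)\bigr) > 0$, continuity of each $\delta_k(\cdot;x_k)$ provides a ball $U$ about $x$ on which $\delta_j(y; x_j) - \delta_i(y; x_i) > \epsilon/2$ for all $j \neq i$; since $x \in \intr(\cS)$, I can shrink $U$ so that $U \subseteq \cS$, whence $U \subseteq \cV^i$ by \eqref{eq:voronoi}, contradicting $x \in \bd(\cV^i)$. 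The main obstacle in the overall argument is step (1): the crucial observation is that local nonnegativity of the quadratic $\phi$ at an interior point of $\cV^i$ self-promotes, via convexity of $\phi$, to global nonnegativity, after which Assumption~\ref{assumption1} read symmetrically delivers the sign violation at $y = x_j$ needed to close the loop. Step (2) is then a routine continuity argument.
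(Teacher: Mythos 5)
Your proof is correct, but there is nothing in the paper to compare it against: the paper states Proposition~\ref{prop:basicprop} without any proof, evidently treating it as an immediate consequence of \eqref{eq:voronoi}. Your argument supplies the missing details, and it is worth noting that part (1) is not actually trivial: the claim can fail for general proximity functions (e.g., if two agents had identical $\delta$'s, every interior point of their common cell would be equidistant from both), so some use of Assumption~\ref{assumption1} is unavoidable. Your local-to-global mechanism is exactly the right lever here: local nonnegativity of the quadratic $\phi = \delta_j(\cdot;x_j) - \delta_i(\cdot;x_i)$ at an interior minimizer forces $\nabla\phi(x)=0$ and $\mathbf{P}_j - \mathbf{P}_i \succeq \mathbf{0}$, whence $\phi(y) = (y-x)\t(\mathbf{P}_j-\mathbf{P}_i)(y-x) \geq 0$ everywhere, and evaluating at $y = x_j$ (legitimate since the generators are distinct, so Assumption~\ref{assumption1} applies with $i$ and $j$ interchanged) gives the contradiction. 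Part (2) is, as you say, a routine continuity argument; the only points worth making explicit are that $\cV^i$ is closed so $\mathrm{bd}(\cV^i) \subseteq \cV^i$ gives the non-strict inequality to begin with, that the ball $U$ is obtained by intersecting finitely many neighborhoods (one per $j \neq i$), and that $x \notin \mathrm{bd}(\cS)$ together with $x \in \cS$ closed yields $x \in \intr(\cS)$ so that $U$ can be shrunk into $\cS$. Both steps are sound and the proof is complete.
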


It is worth considering what would happen if we dropped
Assumption~\ref{assumption2} and assumed instead that $\mu_i =
\bar{\mu}$ and $\mathbf{P}_i= \lambda \mathbf{I}$, for all
$i\in[0,n]_{\mZ}$, where $\bar{\mu} \geq 0$ and $\lambda
>0$. In this special case, each agent employs the same proximity metric; in
particular, $\delta_i(x; x_i) = \lambda
 |x-x_i|^2 + \bar{\mu}$, for all $i \in [0,n]_{\mZ}$. In this case,
\begin{align*}
\cV^i & :=\{x \in \cS:~ \lambda
 |x-x_i|^2 \leq \lambda \min_{j\in[0,n]_{\mZ}}
 |x-x_j|^2\}\\
&=\{x \in \cS:~
 |x-x_i| \leq \min_{j\in[0,n]_{\mZ}} |x-x_j|\},
\end{align*}
which is precisely the definition of the $i$-th cell of the
\textit{standard} Voronoi partition~\cite{p:voronoi}. Consequently,
in this special case, the HQVP reduces to the standard Voronoi
partition which has combinatorial complexity in $\mathcal{O}(n)$ and
computational complexity in $\cO(n \log(n))$. Another special case
while keeping Assumption~\ref{assumption2} inactive, is when there
is a pair $(i,j)$, with $i \neq j$, such that $\mu_i \neq \mu_j$ and
$\mathbf{P}_i = \overline{\mathbf{P}}$, for all $i\in [0, n]_{\mZ}$,
where $\overline{\mathbf{P}} \succ \mathbf{0}$. As we have shown
in~\cite{p:bakolas2013b}, the HQVP in the latter case reduces to an
\textit{affine diagram}, which has combinatorial complexity in
$\Theta(n)$ and computational complexity in $\Theta(n \log n +
n)$~\cite{b:boigeom} (note that the latter complexities are modest
and close to those of the standard Voronoi partition). In this work,
in view of Assumption~\ref{assumption2}, there always exists a pair
$(i,j)$, with $i\neq j$, such that $\mathbf{P}_i \neq \mathbf{P}_j$
(one can take $j=0$ and any $i\in [1,n]_{\mZ}$). According
to~\cite{p:BoissoAnisoVor2008}, the HQVP has combinatorial
complexity $\Theta(n^3)$ and computational complexity in $\cO(n^3 +
n\log(n))$; these complexities are significantly higher than those
of the standard and the affine Voronoi partitions. One important
fact is that the cells of HQVP are not necessarily convex sets (they
may even be disconnected sets), which makes their computation by
means of distributed algorithms quite challenging. By virtue of the previous discussion, it should become clear that the partitioning algorithms proposed in our previous work~\cite{p:bakauto2014,p:bakaut2015,p:BAKOLAS2016,p:bak2018}, which can only compute affine partitions or partitions comprised of star convex cells for homogeneous multi-agent networks, are not applicable to the partitioning problem for heterogeneous networks which is considered herein. The latter problem requires the development of new and more powerful tools which are applicable to partitions comprised of cells which can be non-convex or even disconnected sets.

Next, we formulate the uncoupled partitioning problem in which the
$i$-th agent of the network is required to compute its own cell in
HQVP independently from its teammates.
\begin{problem}\label{problemDP}
\textit{Uncoupled Partitioning Problem over $\cS$:}~Let $\cV(X;\cS)
=\{\cV_i\in \wp(\cS):~i\in[0,n]_{\mZ}\}$ be the HQVP of $\cS$
generated by the point-set $X : = \{ x_i\in\cS:~i\in[0,n]_{\mZ}\}$.
For a given $i \in [1,n]_{\mZ}$, compute the cell $\cV^i \in
\cV(X;\cS)$, independently from the other cells of the same
partition.
\end{problem}

\begin{remark}
It is worth noting that the computation of the cell $\cV^0$ which is
assigned to the $0$-th agent of the extended network is not included
in the formulation of Problem~\ref{problemDP}. The latter set
corresponds to the part of the spatial domain $\cS$ that is not
claimed by any agent of the actual network or in other words, the
coverage hole of the latter network, that is, $\cV^0 = \cS
\backslash \big( \cup_{i=1}^{n} \cV^i \big)$. Intuitively, this means that at any point in $\cV^0$, the ground station or mother vehicle (the latter correspond to interpretations of the hypothetical $0$-th agent) can rely to their own sensing capabilities and therefore, they do not have to dispatch any mobile sensors from the actual network to take in-situ measurements there. Note that the non-emptiness of the coverage hole $\cV^0$ is a
direct consequence of Assumption~\ref{assumption2}. 
\end{remark}

\subsection{Formulation of the Network Topology Discovery Problem}

In a nutshell, the goal of the network topology discovery problem is
to find a systematic way that will allow the $i$-th agent of the
network to determine its neighbors in the topology induced by the
HQVP.

\begin{definition}\label{def:Topology}
The $i$-th agent and the $j$-th agent, which are located at $x_i\in
X$ and $x_j\in X$, respectively, are \textit{neighbors} in the topology of
$\cV(X;\cS)$, if the boundaries of their cells have a non-empty
intersection, that is, $\mathrm{bd}(\cV^i) \cap \mathrm{bd}(\cV^j)
\neq  \varnothing$.
\end{definition}

Now, let us denote by $\cN_i$ the index set of the neighbors of the
$i$-th agent. In view of Definition~\ref{def:Topology},
\begin{equation}\label{eq:Ni}
\cN_i := \{ \ell \in [0,n]_{\mZ} \backslash \{i\}:
\mathrm{bd}(\cV^{\ell}) \cap \mathrm{bd}(\cV^i) \neq \varnothing \}.
\end{equation}
\begin{proposition}
The index-set of the neighbors of the $i$-th agent, $\cN_i$,
consists of all $\ell \in [0,n]_{\mZ} \backslash \{i\}$ such that
$\delta_{\ell}(x;x_{\ell}) = \delta_i(x;x_i)$ for some $x \in
\mathrm{bd}(\cV^i) \backslash \mathrm{bd}(\cS)$.
\end{proposition}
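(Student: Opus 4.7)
The statement claims an operational characterization of the neighbor set $\cN_i$, so my plan is to establish both implications of the ``if and only if'' separately, relying primarily on Proposition~\ref{prop:basicprop} and on the fact that each cell is closed (a consequence of the non-strict inequalities in~\eqref{eq:voronoi}).

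For the sufficiency direction, I would assume that some $x \in \mathrm{bd}(\cV^i) \backslash \mathrm{bd}(\cS)$ satisfies $\delta_\ell(x;x_\ell) = \delta_i(x;x_i)$ and show that $x$ must also lie in $\mathrm{bd}(\cV^\ell)$, which by Definition~\ref{def:Topology} places $\ell$ in $\cN_i$. Statement~2 of Proposition~\ref{prop:basicprop} already supplies $\delta_i(x;x_i) = \min_{j \neq i}\delta_j(x;x_j)$, so the hypothesized equality immediately yields $\delta_\ell(x;x_\ell) \leq \delta_j(x;x_j)$ for every $j \neq \ell$, placing $x$ in $\cV^\ell$. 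Moreover, $x$ cannot lie in $\intr(\cV^\ell)$: if it did, statement~1 of the same proposition would give $\delta_\ell(x;x_\ell) < \delta_i(x;x_i)$, contradicting the assumed equality. Hence $x \in \mathrm{bd}(\cV^i) \cap \mathrm{bd}(\cV^\ell)$ and $\ell \in \cN_i$.

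For the necessity direction, I would pick any $x^* \in \mathrm{bd}(\cV^i) \cap \mathrm{bd}(\cV^\ell)$. Because the cells are closed, $x^* \in \cV^i \cap \cV^\ell$, so applying the two cell inequalities at $x^*$ (once with $j = \ell$, once with $j = i$) yields $\delta_i(x^*;x_i) \leq \delta_\ell(x^*;x_\ell) \leq \delta_i(x^*;x_i)$ and forces equality. The main obstacle, and the step that deserves the most care, is ensuring that $x^*$ can be chosen to avoid $\mathrm{bd}(\cS)$. I would address this by noting that the equality locus $\cL := \{x \in \mathbb{R}^2 : \delta_i(x;x_i) = \delta_\ell(x;x_\ell)\}$ is the zero set of a single quadratic polynomial and is therefore either empty or a (possibly degenerate) conic, and that the shared cell boundary is contained in $\cL$ and locally separates the two-dimensional cells $\cV^i$ and $\cV^\ell$ inside $\cS$. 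A local argument based on the non-emptiness of the interiors of $\cS$, $\cV^i$, and $\cV^\ell$ then shows that an arc of this separating conic through $x^*$ must penetrate $\intr(\cS)$, producing a nearby point in $\mathrm{bd}(\cV^i) \cap \mathrm{bd}(\cV^\ell) \setminus \mathrm{bd}(\cS)$ at which the required equality holds by construction.
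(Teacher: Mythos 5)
Your sufficiency direction is correct and is precisely the content the paper extracts from Proposition~\ref{prop:basicprop}: at a point $x\in\mathrm{bd}(\cV^i)\backslash\mathrm{bd}(\cS)$, statement~2 of that proposition gives $\delta_i(x;x_i)=\min_{j\neq i}\delta_j(x;x_j)$, so the hypothesized equality forces $x\in\cV^\ell$, statement~1 rules out $x\in\intr(\cV^\ell)$, and hence $\ell\in\cN_i$. The derivation of the equality $\delta_i(x^*;x_i)=\delta_\ell(x^*;x_\ell)$ at a shared boundary point in the necessity direction is also fine. The paper's own proof is a one-line appeal to Proposition~\ref{prop:basicprop} and says no more than this.

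The genuine gap is in your final step, the one you yourself flag as delicate: producing a witness point off $\mathrm{bd}(\cS)$. Your argument is that the shared boundary lies on the conic $\cL=\{x:\delta_i(x;x_i)=\delta_\ell(x;x_\ell)\}$ and that an arc of $\cL$ through $x^*$ penetrating $\intr(\cS)$ yields nearby points of $\mathrm{bd}(\cV^i)\cap\mathrm{bd}(\cV^\ell)$. That inference is invalid: membership in the bisector $\mB_{i,\ell}$ is necessary but not sufficient for membership in $\mathrm{bd}(\cV^i)$ --- along the arc near $x^*$ a third agent $m$ may satisfy $\delta_m<\delta_i=\delta_\ell$, in which case those bisector points lie in $\intr(\cV^m)$ and belong to neither cell's boundary. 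Moreover, in the degenerate configuration where $\cV^i$ and $\cV^\ell$ meet only at a single point of $\mathrm{bd}(\cS)$ (for instance, the relevant bisector arc is tangent to $\mathrm{bd}(\cS)$ from outside $\cS$, or is absorbed by a third cell everywhere except at that one point), there is no nearby witness at all: Definition~\ref{def:Topology} makes the pair neighbors, yet no $x\in\mathrm{bd}(\cV^i)\backslash\mathrm{bd}(\cS)$ certifies it. So either the proposition must be read as excluding such non-generic contact, or the necessity direction should drop the exclusion of $\mathrm{bd}(\cS)$ (which is in effect how the paper later uses the statement in the proof of Proposition~\ref{prop:neigh}). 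As written, your local separation argument cannot be completed to cover these cases.
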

\begin{proof}
The proof follows readily from Proposition~\ref{prop:basicprop}.
\end{proof}

The network topology discovery problem seeks for a lower bound on
the communication range $\eta_i$ of the $i$-th agent such that its
communication region $\cB_{\eta_i}(x_i)$ contains all of its
neighbors in the topology of HQVP. 
\begin{problem}
\textit{Network Topology Discovery Problem:} Find a lower bound
$\underline{\eta_i} > 0$ on the communication range $\eta_i$ of the
$i$-th agent, for $i \in [1,n]_{\mZ}$, such that its communication
region, $\cB_{\eta_i}(x_i)$, contains all of its neighbors, that is,
\begin{equation}
\cB_{\eta_i}(x_i) \supsetneq \{ x_k \in X:~k\in \cN_i \},~~\forall
\eta_i \geq \underline{\eta_i}.
\end{equation}
\end{problem}

\section{Analysis and Solution of the Uncoupled Partitioning
Problem}\label{s:partition}

\subsection{Analysis of the Uncoupled Partitioning Problem}
In this section, we will present some useful properties enjoyed by
the cells comprising the HQVP which we will subsequently leverage to
develop distributed algorithms for the computation of the solution
to Problem~\ref{problemDP}. The first step of our analysis will be
the characterization of the bisector, $\mB_{i,j}$, that corresponds
to the loci of all points in $\cS$ that are equidistant from the
$i$-th and the $j$-th agents with $i \neq j$, that is,
\begin{equation}\label{eq:bisector0}
\mB_{i,j} := \{ x\in\cS:~\delta_i(x;x_i) = \delta_j(x;x_j)\}.
\end{equation}
The equation $\delta_i(x;x_i) = \delta_j(x;x_j)$ is equivalent to
\begin{align*}
(x - x_i)\t \mathbf{P}_i (x - x_i) + \mu_i = (x - x_j)\t
\mathbf{P}_j (x - x_j) + \mu_j
\end{align*}
which can be written more compactly as follows
\begin{align}\label{eq:bisector}
 x\t \mathbf{P}_{i,j} x - 2\chi_{i,j}\t x + \sigma_{i,j}=0,
\end{align}
where
\begin{subequations}
\begin{align}
\mathbf{P}_{i,j} & := \mathbf{P}_i -\mathbf{P}_j, \label{eq:Pij} \\
\chi_{i,j} & := \mathbf{P}_i x_i -\mathbf{P}_j x_j, \label{eq:chiij} \\
\sigma_{i,j} & := |\mathbf{P}_i^{1/2} x_i|^2 + \mu_i
-|\mathbf{P}^{1/2}_j x_j|^2 - \mu_j. \label{eq:sigmaij}
\end{align}
\end{subequations}
If $\mathbf{P}_{i,j}=\mathbf{0}$, that is, $\mathbf{P}_i =
\mathbf{P}_j$, equation \eqref{eq:bisector} describes a straight
line. In the more interesting case when $\mathbf{P}_{i,j} \neq
\mathbf{0}$, \eqref{eq:bisector} corresponds to a quadratic vector
equation that determines a conic section.

Next, we will leverage Assumption~\ref{assumption2} to show that the
cell $\cV^i$, for $i \in [1,n]_{\mZ}$, enjoys an important property
that will prove very useful in our subsequent analysis. To this aim,
we first note that, in view of Assumption~\ref{assumption2},
$\mathbf{P}_i \succ \mathbf{P}_0$ or equivalently
$\mathbf{P}_{i,0}\succ \mathbf{0}$. Next, by completing the square
in \eqref{eq:bisector} and then setting $j=0$, we get
\begin{align*}
0 & = x\t \mathbf{P}_{i,0} x - 2\chi_{i,0}\t \mathbf{P}_{i,0}^{-1/2}
\mathbf{P}_{i,0}^{1/2} x + \chi_{i,0}\t \mathbf{P}_{i,0}^{-1}
\chi_{i,0} \\
&~~~ - \chi_{i,0}\t \mathbf{P}_{i,0}^{-1} \chi_{i,0} + \sigma_{i,0}
\end{align*}
from which it follows that
\begin{equation}\label{eq:bisector2}
|\mathbf{P}_{i,0}^{1/2} (x - \mathbf{P}_{i,0}^{-1}  \chi_{i,0})|^2 =
|\mathbf{P}_{i,0}^{-1/2} \chi_{i,0}|^2 - \sigma_{i,0}.
\end{equation}
Therefore, the bisector $\mB_{i,0}$ consists of all points $x \in
\cS$ that satisfy Eq.~\eqref{eq:bisector2}, which is the equation of
an ellipse provided that the right hand side of the latter equation
is a strictly positive number.

\begin{proposition}\label{prop:baslemma}
Let $i \in [1,n]_{\mZ}$ and let
\begin{equation}\label{eq:ellij}
\ell_{i,0}:=| \mathbf{P}_{i,0}^{-1/2} \chi_{i,0}|^2 - \sigma_{i,0},
\end{equation}
where $\mathbf{P}_{i,0}$, $\chi_{i,0}$ and $\sigma_{i,0}$ are as
defined in \eqref{eq:Pij}--\eqref{eq:sigmaij} for $j=0$. Then, $\ell_{i,0}>0$
and the bisector $\mB_{i,0}$ satisfies 
\begin{equation}\label{bisdefn}
\mB_{i,0} = \mathrm{bd}(E_i) \cap \cS,
\end{equation}
where $E_i := \mathcal{E}_{\ell_{i,0}}(\mathbf{P}_{i,0}^{-1}
\chi_{i,0}; \mathbf{P}_{i,0}^{-1} )$.
\end{proposition}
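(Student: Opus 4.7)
The plan is to prove this in two stages: first, establish strict positivity of $\ell_{i,0}$ using Assumption~\ref{assumption1}; second, combine this with the algebraic identity \eqref{eq:bisector2} already derived just before the statement to identify the bisector with an ellipsoidal boundary intersected with $\cS$.

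For the first (and main) stage, the key observation is that the defining polynomial of $\mB_{i,0}$ is nothing but the difference of the two proximity metrics. Indeed, subtracting $\delta_0(x;x_0)$ from $\delta_i(x;x_i)$ and expanding yields, by construction of $\mathbf{P}_{i,0}$, $\chi_{i,0}$, $\sigma_{i,0}$,
\begin{equation*}
\delta_i(x;x_i) - \delta_0(x;x_0) = x\t \mathbf{P}_{i,0}x - 2\chi_{i,0}\t x + \sigma_{i,0},
\end{equation*}
so that, after completing the square exactly as in \eqref{eq:bisector2}, one obtains the pointwise identity
\begin{equation*}
\delta_i(x;x_i) - \delta_0(x;x_0) = | \mathbf{P}_{i,0}^{1/2}(x - \mathbf{P}_{i,0}^{-1}\chi_{i,0})|^2 - \ell_{i,0}.
\end{equation*}
I would now plug in $x = x_i$. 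The left-hand side evaluates to $\mu_i - \delta_0(x_i;x_0)$, which is \emph{strictly negative} by Assumption~\ref{assumption1} (with the roles of $i$ and $j$ taken as $i$ and $0$). This forces $\ell_{i,0} > |\mathbf{P}_{i,0}^{1/2}(x_i - \mathbf{P}_{i,0}^{-1}\chi_{i,0})|^2 \ge 0$, so in particular $\ell_{i,0} > 0$, which is the hard part of the statement.

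For the second stage, since $\ell_{i,0}>0$ and $\mathbf{P}_{i,0} \succ \mathbf{0}$ (by Assumption~\ref{assumption2}), equation \eqref{eq:bisector2} is precisely the equation of the boundary of the ellipsoid $\mathcal{E}_{\ell_{i,0}}(\mathbf{P}_{i,0}^{-1}\chi_{i,0}; \mathbf{P}_{i,0}^{-1}) = E_i$. A point $x \in \cS$ lies in $\mB_{i,0}$ iff $\delta_i(x;x_i) = \delta_0(x;x_0)$ iff it satisfies \eqref{eq:bisector2} iff it lies on $\mathrm{bd}(E_i)$, which yields $\mB_{i,0} = \mathrm{bd}(E_i) \cap \cS$.

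The only real subtlety is making sure that the whole argument exploits the right assumption at the right place: positive definiteness of $\mathbf{P}_{i,0}$ (Assumption~\ref{assumption2}) is what guarantees invertibility of $\mathbf{P}_{i,0}$ so that completing the square is valid and the locus is an ellipsoid rather than a hyperboloid or parabola, whereas strict positivity of $\ell_{i,0}$ (so that this ellipsoid is non-degenerate) requires Assumption~\ref{assumption1}. Without Assumption~\ref{assumption1} the identity would still hold, but one could have $\ell_{i,0}\le 0$, in which case the bisector would either degenerate to a point or be empty, consistent with the remark in the paper about what happens if Assumption~\ref{assumption1} is relaxed.
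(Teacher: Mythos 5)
Your proof is correct, and for the key claim $\ell_{i,0}>0$ it takes a genuinely different and considerably more economical route than the paper. The paper expands $\ell_{i,0}$ as a quadratic form $[x_i\t,\,x_0\t]\,\mathbf{\Psi}\,[x_i\t,\,x_0\t]\t$ (after absorbing the Assumption~\ref{assumption1} inequality into the constant term) and then proves $\mathbf{\Psi}\succ\mathbf{0}$ via a Schur-complement computation that leans on the full strength of Assumption~\ref{assumption2}, namely $\mathbf{P}_i\mathbf{P}_{i,0}^{-1}\mathbf{P}_i\succ\mathbf{P}_i$. You instead exploit the pointwise identity
\begin{equation*}
\delta_i(x;x_i)-\delta_0(x;x_0)=|\mathbf{P}_{i,0}^{1/2}(x-\mathbf{P}_{i,0}^{-1}\chi_{i,0})|^2-\ell_{i,0},
\end{equation*}
evaluate it at $x=x_i$, and read off $\ell_{i,0}>|\mathbf{P}_{i,0}^{1/2}(x_i-\mathbf{P}_{i,0}^{-1}\chi_{i,0})|^2\geq 0$ directly from Assumption~\ref{assumption1} (which applies with strict inequality since the generators are distinct, so $x_i\neq x_0$). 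This avoids the block-matrix machinery entirely, uses Assumption~\ref{assumption2} only through $\mathbf{P}_{i,0}\succ\mathbf{0}$, and as a bonus shows that $x_i$ lies in the interior of $E_i$, which is geometrically informative. The second stage of your argument (the chain of equivalences identifying $\mB_{i,0}$ with $\mathrm{bd}(E_i)\cap\cS$) matches the paper's and is fine; note only that the paper phrases this as two set inclusions, whereas your biconditional chain delivers both at once. No gaps.
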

\begin{proof}
In view of \eqref{eq:Pij}-\eqref{eq:chiij} for $j=0$, we have
\begin{align*}
|\mathbf{P}_{i,0}^{-1/2} \chi_{i,0}|^2  &= |\mathbf{P}_{i,0}^{-1/2}
(\mathbf{P}_i x_i -\mathbf{P}_0 x_0)|^2\\
& = x_i\t\mathbf{P}_i\mathbf{P}_{i,0}^{-1}\mathbf{P}_i x_i +
x_0\t\mathbf{P}_0\mathbf{P}_{i,0}^{-1}\mathbf{P}_0 x_0 \nonumber \\
&~~~~ - 2 x_i\t\mathbf{P}_i\mathbf{P}_{i,0}^{-1}\mathbf{P}_0 x_0
\nonumber \\
& = [x_i\t,~x_0\t] \begin{bmatrix}
\mathbf{P}_i\mathbf{P}_{i,0}^{-1}\mathbf{P}_i &
-\mathbf{P}_i\mathbf{P}_{i,0}^{-1}\mathbf{P}_0 \\
-\mathbf{P}_0\mathbf{P}_{i,0}^{-1}\mathbf{P}_i &
\mathbf{P}_0\mathbf{P}_{i,0}^{-1}\mathbf{P}_0
\end{bmatrix} \begin{bmatrix} x_i \\ x_0 \end{bmatrix}.
\end{align*}
In addition, from \eqref{eq:sigmaij} for $j=0$, we get
\begin{align*}
\sigma_{i,0} & = |\mathbf{P}_i^{1/2} x_i|^2 +
\mu_i -| \mathbf{P}^{1/2}_0 x_0 |^2 - \mu_0\\
& = x_i\t \mathbf{P}_i x_i - x_0\t \mathbf{P}_0 x_0 + \mu_i - \mu_0 \\
& = [x_i\t,~x_0\t] \begin{bmatrix}
\mathbf{P}_i  & \mathbf{0}\\
\mathbf{0} & -\mathbf{P}_0
\end{bmatrix} \begin{bmatrix}x_i \\ x_0 \end{bmatrix} + \mu_i - \mu_0.
\end{align*}
Therefore, we have that
\begin{align}
\ell_{i,0} & = |\mathbf{P}_{i,0}^{-1/2} \chi_{i,0}|^2 - \sigma_{i,0} \nonumber \\
& = [x_i\t,~x_0\t] \begin{bmatrix}
\mathbf{P}_i\mathbf{P}_{i,0}^{-1}\mathbf{P}_i - \mathbf{P}_i &
-\mathbf{P}_i \mathbf{P}_{i,0}^{-1}\mathbf{P}_0 \\
-\mathbf{P}_0 \mathbf{P}_{i,0}^{-1}\mathbf{P}_i & \mathbf{P}_0
\mathbf{P}_{i,0}^{-1}\mathbf{P}_0 + \mathbf{P}_0
\end{bmatrix} \begin{bmatrix}x_i \\ x_0\end{bmatrix} \nonumber \\
&~~~ +\mu_0 - \mu_i.\label{eq:intproof1}
\end{align}
Now, in view of \eqref{eq:assu1} for $j=0$, we have that
\begin{align}\label{eq:helpeq1}
\mu_0 - \mu_i & > - (x_0 -x_i)\t\mathbf{P}_{0} (x_0 - x_i) \nonumber \\
&= [x_i\t,~~x_0\t] \begin{bmatrix}
-\mathbf{P}_0  & \mathbf{P}_0\\
\mathbf{P}_0 & -\mathbf{P}_0
\end{bmatrix} \begin{bmatrix}x_i\\x_0\end{bmatrix}.
\end{align}
Therefore, in view of \eqref{eq:intproof1}, \eqref{eq:helpeq1} gives
\begin{equation}\label{eq:lijproof}
\ell_{i,0}  > [x_i\t,~x_0\t] \mathbf{\Psi}
\begin{bmatrix}x_i\\x_0\end{bmatrix},~~~~~~\mathbf{\Psi} := \begin{bmatrix} \mathbf{\Psi}_{11} & \mathbf{\Psi}_{12} \\
\mathbf{\Psi}_{12}\t & \mathbf{\Psi}_{22} \end{bmatrix},
\end{equation}
where $\mathbf{\Psi}_{11}, \mathbf{\Psi}_{12}$, $\mathbf{\Psi}_{13}
\in \mathbb{R}^{2\times2}$ are defined as follows:
\begin{subequations}
\begin{align}\label{eq:Psi}
\mathbf{\Psi}_{11} & :=
\mathbf{P}_i\mathbf{P}_{i,0}^{-1}\mathbf{P}_i
- \mathbf{P}_i - \mathbf{P}_0, \\
\mathbf{\Psi}_{12} & :=
-\mathbf{P}_i\mathbf{P}_{i,0}^{-1}\mathbf{P}_0 + \mathbf{P}_0, \\
\mathbf{\Psi}_{22} & := \mathbf{P}_0 \mathbf{P}_{i,0}^{-1}
\mathbf{P}_0. \label{eq:Psi3}
\end{align}
\end{subequations}
Note that $\mathbf{\Psi}_{22} =
\mathbf{P}_0\mathbf{P}_{i,0}^{-1}\mathbf{P}_0 \succ \mathbf{0}$.
Next, we show that the Schur complement of the block
$\mathbf{\Psi}_{22}$ of the block matrix $\mathbf{\Psi}$, which is
denoted as $(\mathbf{\Psi}/ \mathbf{\Psi}_{22})$ and defined as
$(\mathbf{\Psi}/ \mathbf{\Psi}_{22}) :=\mathbf{\Psi}_{11} -
\mathbf{\Psi}_{12} \mathbf{\Psi}_{22}^{-1} \mathbf{\Psi}_{12}\t$, is
positive definite, that is, $(\mathbf{\Psi}/ \mathbf{\Psi}_{22})
\succ \mathbf{0}$. Indeed, in view of \eqref{eq:Psi}-\eqref{eq:Psi3}
\begin{align}\label{eq:SchurPsi}
 (\mathbf{\Psi}/ \mathbf{\Psi}_{22})
& = \mathbf{P}_i \mathbf{P}_{i,0}^{-1} \mathbf{P}_i - \mathbf{P}_i,
\end{align}
where $\mathbf{P}_{i,0} = \mathbf{P}_i - \mathbf{P}_0$. Furthermore,
in light of \eqref{eq:partialorder}, we have that $\mathbf{0} \prec
\mathbf{P}_{i,0} = \mathbf{P}_i - \mathbf{P}_0 \prec \mathbf{P}_i$
which implies that $\mathbf{0} \prec \mathbf{P}^{-1}_i \prec
\mathbf{P}_{i,0}^{-1}$ and thus
\begin{equation}\label{eq:bPSDineq0}
\mathbf{I} \prec \mathbf{P}^{1/2}_i \mathbf{P}_{i,0}^{-1}
\mathbf{P}^{1/2}_i.
\end{equation}
After pre- and post-multiply \eqref{eq:bPSDineq0} with
$\mathbf{P}^{1/2}_i$, we take
\begin{equation}\label{eq:bPSDineq}
\mathbf{P}_i \mathbf{P}_{i,0}^{-1} \mathbf{P}_i \succ \mathbf{P}_i.
\end{equation}
In view of \eqref{eq:bPSDineq}, \eqref{eq:SchurPsi} implies that
$(\mathbf{\Psi}/ \mathbf{\Psi}_{22}) \succ \mathbf{0}$. The fact
that $(\mathbf{\Psi}/ \mathbf{\Psi}_{22}) \succ \mathbf{0}$ and
$\mathbf{\Psi}_{22} \succ \mathbf{0}$ imply that $\mathbf{\Psi}
\succ \mathbf{0}$. Consequently, by virtue of \eqref{eq:lijproof},
we take $\ell_{i,0} > 0$, for all $i\in[1,n]_{\mZ}$. Then, all
points $x\in\cS$ that satisfy \eqref{eq:bisector2} belong to the
boundary of the ellipsoid $E_i$, and thus $\mB_{i,0} \subseteq
\mathrm{bd}(E_i) \cap \cS$. The set inclusion $\mB_{i,0} \supseteq
\mathrm{bd}(E_i) \cap \cS$ can be shown similarly and thus, equation
\eqref{bisdefn} follows readily. The proof is now complete.
\end{proof}

\begin{proposition}\label{prop:main}
Let $i\in[1,n]_{\mZ}$ and let
$E_i:=\mathcal{E}_{\ell_{i,0}}(\mathbf{P}_{i,0}^{-1} \chi_{i,0};
\mathbf{P}_{i,0}^{-1} )$. Then, the cell $\cV^i \in\cV(X;\cS)$
satisfies the following set inclusion:
\begin{equation}\label{eq:mainsetincl}
\cV^i \subseteq E_i \cap \cS.
\end{equation}
\end{proposition}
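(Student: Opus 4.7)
The plan is to exploit the fact that Proposition~\ref{prop:baslemma} has already done essentially all the algebra needed: the bisector $\mB_{i,0}$ coincides with $\mathrm{bd}(E_i)\cap\cS$, and the ellipsoid $E_i$ is nondegenerate since $\ell_{i,0}>0$. What remains is to repeat the completing-the-square manipulation with an \emph{inequality} in place of the equality that defines the bisector.

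Concretely, I would pick an arbitrary $x \in \cV^i$. Since $0 \in [0,n]_{\mZ}\setminus\{i\}$, the defining inequality of $\cV^i$ in \eqref{eq:voronoi} yields $\delta_i(x;x_i) \leq \min_{j\neq i} \delta_j(x;x_j) \leq \delta_0(x;x_0)$. Substituting \eqref{eq:deltadef} and regrouping (exactly as was done in deriving \eqref{eq:bisector}) gives
\begin{equation*}
x\t \mathbf{P}_{i,0}\, x - 2\chi_{i,0}\t x + \sigma_{i,0} \leq 0,
\end{equation*}
with $\mathbf{P}_{i,0}$, $\chi_{i,0}$, $\sigma_{i,0}$ as in \eqref{eq:Pij}--\eqref{eq:sigmaij}. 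By Assumption~\ref{assumption2}, $\mathbf{P}_{i,0}\succ\mathbf{0}$, so completing the square around $\mathbf{P}_{i,0}^{-1}\chi_{i,0}$ preserves the inequality and produces the ellipsoidal inequality
\begin{equation*}
\bigl|\mathbf{P}_{i,0}^{1/2}\bigl(x - \mathbf{P}_{i,0}^{-1}\chi_{i,0}\bigr)\bigr|^2 \leq |\mathbf{P}_{i,0}^{-1/2}\chi_{i,0}|^2 - \sigma_{i,0} = \ell_{i,0}.
\end{equation*}
This is precisely the statement that $x \in \mathcal{E}_{\ell_{i,0}}(\mathbf{P}_{i,0}^{-1}\chi_{i,0};\mathbf{P}_{i,0}^{-1}) = E_i$. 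Combined with $x\in\cS$, we obtain $x \in E_i \cap \cS$, and since $x$ was arbitrary, the set inclusion \eqref{eq:mainsetincl} follows.

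There is really no hard step here: Proposition~\ref{prop:baslemma} already established that $\ell_{i,0}>0$ (so that $E_i$ is a genuine ellipsoid rather than a degenerate or empty set) and carried out the square-completion for the equality case. The only thing to verify carefully is that the direction of the inequality is preserved under the rearrangement, which is immediate from $\mathbf{P}_{i,0}\succ\mathbf{0}$. The proof is essentially a one-liner leveraging the much heavier groundwork in Proposition~\ref{prop:baslemma}.
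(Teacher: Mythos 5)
Your proof is correct and follows essentially the same route as the paper: both reduce membership in $\cV^i$ to the single inequality $\delta_i(x;x_i)\leq\delta_0(x;x_0)$ and complete the square using $\mathbf{P}_{i,0}\succ\mathbf{0}$ to land in $E_i$. The only cosmetic difference is that the paper introduces the intermediate set $\cS_{i,0}:=\{x\in\cS:\delta_i(x;x_i)\leq\delta_0(x;x_0)\}$ and proves the full equality $\cS_{i,0}=E_i\cap\cS$, whereas you prove only the one inclusion actually needed, which is a slight streamlining rather than a different argument.
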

\begin{proof}
Let us consider the two disjoint sets $\cS_{i,0}:= \{ x\in\cS:
\delta_i(x;x_i) \leq \delta_0(x;x_0) \}$ and $\cS^c_{i,0} := \{
x\in\cS: \delta_i(x;x_i)
> \delta_0(x;x_0) \}$ whose union is equal to $\cS$. 
By definition,
\begin{align}\label{eq:profSi0}
\cS_{i,0} \supseteq \{ x\in\cS: \delta_i(x;x_i) \leq
\min_{j\in[0,n]_{\mZ}}\delta_j(x;x_j) \} = \cV^i,
\end{align}
where the last set equality follows from \eqref{eq:voronoi}. Next,
we show that $\cS_{i,0} = E_i \cap \cS$. Indeed, let $x \in E_i\cap
\cS$. Then, in view of \eqref{eq:bisector2} and \eqref{eq:ellij}, we
have that
\begin{equation}
|\mathbf{P}_{i,0}^{1/2} (x - \mathbf{P}_{i,0}^{-1} \chi_{i,0})|^2
\leq \ell_{i,0},
\end{equation}
which implies, after following backwards the derivation from
\eqref{eq:bisector0}--\eqref{eq:sigmaij} for $j=0$, that
\[ (x - x_i)\t \mathbf{P}_i (x - x_i) + \mu_i \leq (x -
x_0)\t \mathbf{P}_j (x - x_0) + \mu_0
\]
which proves that $x \in \cS_{i,0}$ and thus $\cS_{i,0} \subseteq
E_i \cap \cS$. The set inclusion $E_i \cap \cS \subseteq \cS_{i,0}$
can be proven similarly. Therefore, $\cS_{i,0} = E_i \cap \cS$ and
thus, in view of \eqref{eq:profSi0}, we conclude that $\cV^i
\subseteq E_i \cap \cS$ which completes the proof.
\end{proof}

\begin{remark}
Proposition~\ref{prop:main} implies that the $i$-th agent can
determine the compact and convex set $E_{i} \cap \cS$ that will
necessarily contain its cell $\cV^i$ provided that the quantities
$\mathbf{P}_0$, $\mu_0$, and $x_0$, which are associated with the
$0$-th agent of the extended network, are known to it. All of these
quantities can be determined by the agents of the actual network by
means of distributed algorithms. For instance, $x_0$ can be taken to
be the average position of the agents of the actual network and thus
can be computed by means of standard average consensus
algorithms~\cite{p:consepropa2006,p:XIAO2007}. In addition, we can
set $\mu_0 := \min\{\mu_i\in \mathbb{R}_{\geq
0}:~i\in[1,n]_{\mZ}\}$, which is in accordance with
Assumption~\ref{assumption2} and can be computed by means of, for
instance, the flooding algorithm which is one of the simplest
distributed algorithms~\cite{b:lynch1996}. Furthermore, we can take
$\mathbf{P}_0 = \lambda_0 I$, where $0<\lambda_0 < \min\{
\lambda_{\min}(\mathbf{P}_i):~i\in[1,n]_{\mZ}\}$ so that
Assumption~\ref{assumption2} is respected; again, one can compute
$\lambda_0$ by means of a flooding-type distributed algorithm. 
\end{remark}

\subsection{The $i$-th lower envelope $\Delta_i$}\label{ss:param}

Let us consider the $i$-th lower envelope function
$\Delta_i(\cdot;X): \cS \rightarrow \mathbb{R}$ with
\begin{equation}\label{eq:Deltaeq}
\Delta_i(x;X) := \min_{\ell \neq i} \delta_{\ell}(x;x_{\ell}) -
\delta_i(x;x_i).
\end{equation}

\begin{proposition}\label{prop:globineq}
Let $i\in[1,n]_{\mZ}$ and let
$E_i:=\mathcal{E}_{\ell_{i,0}}(\mathbf{P}_{i,0}^{-1} \chi_{i,0};
\mathbf{P}_{i,0}^{-1} )$. Then, $x \in \cV^i$ if and only if $\Delta_i(x;X)\geq 0$, that is, 
\begin{equation}\label{eq:Vialter}
\cV^i = \{ x\in E_i \cap \cS: \Delta_i(x;X)\geq 0\}.
\end{equation}
Moreover,
\begin{subequations}
\begin{align}\label{eq:globineq}
\Delta_i(x;X) & > 0,~~~\forall x\in \mathrm{int}(\cV^i), \\
\Delta_i(x;X) & = 0,~~~\forall x\in \mathrm{bd}(\cV^i) \backslash
\mathrm{bd}(\cS). \label{eq:globineq2}
\end{align}
\end{subequations}
\end{proposition}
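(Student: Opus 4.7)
The plan is to derive the three claims almost immediately from Proposition~\ref{prop:basicprop} and Proposition~\ref{prop:main}, since the lower-envelope function $\Delta_i$ has been designed precisely so that its sign encodes cell membership.

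First I would rewrite the definition of $\cV^i$ in terms of $\Delta_i$. By \eqref{eq:voronoi} and \eqref{eq:Deltaeq}, the condition $\delta_i(x;x_i) \leq \min_{\ell \neq i} \delta_\ell(x;x_\ell)$ is literally $\Delta_i(x;X) \geq 0$, so $\cV^i = \{x \in \cS : \Delta_i(x;X) \geq 0\}$. To upgrade this to \eqref{eq:Vialter}, I would invoke Proposition~\ref{prop:main}, which asserts $\cV^i \subseteq E_i \cap \cS$. The forward direction is then immediate: any $x \in \cV^i$ lies in $E_i \cap \cS$ and satisfies $\Delta_i(x;X) \geq 0$. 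For the backward direction, any $x \in E_i \cap \cS$ with $\Delta_i(x;X) \geq 0$ in particular lies in $\cS$ and satisfies $\delta_i(x;x_i) \leq \min_{\ell \neq i} \delta_\ell(x;x_\ell)$, so $x \in \cV^i$ by \eqref{eq:voronoi}.

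For the refinements \eqref{eq:globineq} and \eqref{eq:globineq2}, I would simply translate the two items of Proposition~\ref{prop:basicprop} through the definition of $\Delta_i$: the strict inequality $\delta_i(x;x_i) < \min_{\ell \neq i} \delta_\ell(x;x_\ell)$ on $\intr(\cV^i)$ is exactly $\Delta_i(x;X) > 0$, and the equality on $\mathrm{bd}(\cV^i) \setminus \mathrm{bd}(\cS)$ is exactly $\Delta_i(x;X) = 0$.

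I do not expect a substantive obstacle; essentially all the geometric work has already been carried out upstream (the positive-definiteness argument in Proposition~\ref{prop:baslemma} and the containment in Proposition~\ref{prop:main}), so Proposition~\ref{prop:globineq} is in effect a restatement of cell membership in terms of the single scalar function $\Delta_i$. The one subtlety worth flagging is to avoid the false assertion $E_i \cap \cS \subseteq \cV^i$: the description \eqref{eq:Vialter} only \emph{restricts the ambient workspace} from $\cS$ to $E_i \cap \cS$, which is legitimate because $\cV^i$ is already contained in the smaller set, so no points of $\cV^i$ are lost when the test $\Delta_i(x;X) \geq 0$ is applied there.
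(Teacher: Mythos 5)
Your proposal is correct and follows exactly the route of the paper's own (very terse) proof, which derives \eqref{eq:Vialter} from Definition~\ref{defn:QVP} together with Proposition~\ref{prop:main}, and obtains \eqref{eq:globineq}--\eqref{eq:globineq2} by translating Proposition~\ref{prop:basicprop} through the definition of $\Delta_i$. Your added remark about why restricting the ambient set from $\cS$ to $E_i \cap \cS$ loses no points is a useful clarification but not a departure from the paper's argument.
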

\begin{proof}
Equation~\eqref{eq:Vialter} follows from Definition~\ref{defn:QVP}
and Proposition~\ref{prop:main}. In addition,
\eqref{eq:globineq}--\eqref{eq:globineq2} follows from
Proposition~\ref{prop:basicprop} and \ref{prop:main}.
\end{proof}

Besides the $i$-th lower envelope, we can also define the global lower envelope function $\Delta(\cdot;X):\cS \rightarrow \mathbb{R}$ with
\begin{equation}\label{eq:Deltaglobal}
\Delta(x;X) := \min_{\ell \in [0,n]_{\mathbb{Z}}} \delta_{\ell}(x;x_{\ell}).
\end{equation}
In view of the definition of the $\cV^i$ given in \eqref{eq:voronoi}, it follows immediately that a point $x\in \cV^i$ if and only if $\delta_i(x;x_i) =\Delta(x;X)$. In this work, we will use the $i$-th lower envelope $\Delta_i(x;X)$ because we are interested in solving the decoupled partitioning problem (the global lower envelope is relevant to the centralized computation of $\cV(X;\cS)$). Figure~\ref{f:envelope} illustrates the concepts of both the $i$-th lower envelope $\Delta_i$ and the global lower envelope $\Delta$ for a scenario with three agents. To make the illustrations more transparent, we consider an one-dimensional scenario in which the domain $\cS$ is the line segment $[0,1]$ and the set of generators $X$ is the point-set $\{x_0, x_1, x_2\}$ with $0<x_0 < x_1 < x_2 <1$ which are denoted as black crosses in the $x$-axis. In addition, $\delta_i(x;X) = c + \alpha_i(x-x_i)^2$, for $i \in \{0,1,2\}$, with $0 < \alpha_0 < \alpha_1 < \alpha_2$ and $c\geq 0$ (which is in accordance with Assumption~\ref{assumption1}). The graphs of the (generalized) proximity metrics $\delta_i$ and the cells $\cV^i$, for $i\in\{0,1,2\}$ are illustrated with different colors for each agent. The three cells correspond to line segments in $\cS$ whose boundaries are denoted as black squares. We note that $\cV^1$ consists of two disconnected components. The global lower envelope $\Delta$ is illustrated as a dashed curve which corresponds to what an observer sees while looking at the graphs of $\delta_0$, $\delta_1$, and $\delta_2$ from below (from the $x$-axis in Fig.~\ref{f:envelope}). Note that the projection on $\cS$ of the part of the graph of $\Delta$ over which the latter overlaps with the graph of the $i$-th proximity metric $\delta_i$ corresponds to the cell $\cV^i$. The 1st lower envelope $\Delta_1$ (associated with agent $i=1$) is illustrated as a grey dashed-dotted curve. In agreement with Proposition~\ref{prop:globineq}, $\Delta_1\geq 0$ over the two disconnected line segments of $\cS$ that comprise $\cV^1$ and $\Delta_1 < 0$ elsewhere.

\begin{figure}[htb]
\centering
 \psfrag{x}[][][0.8]{$x$}
  \psfrag{y}[][][0.8]{$\delta_i$}
 \psfrag{a}[][][0.8]{$\cV^0$}
  \psfrag{b}[][][0.8]{$\cV^1$}
  \psfrag{c}[][][0.8]{$\cV^2$}
  \psfrag{A}[][][0.8]{$x_0$}
  \psfrag{B}[][][0.8]{$x_1$}
  \psfrag{C}[][][0.8]{$x_2$}
  \psfrag{N}[][][0.8]{$~~~\delta_1$}
  \psfrag{K}[][][0.8]{$~~~\delta_0$}
  \psfrag{M}[][][0.8]{$~~~\delta_2$}
  \psfrag{L}[][][0.8]{$~~~\Delta$}
  \psfrag{D}[][][0.8]{$~~~\Delta_1$}
 \epsfig{file = 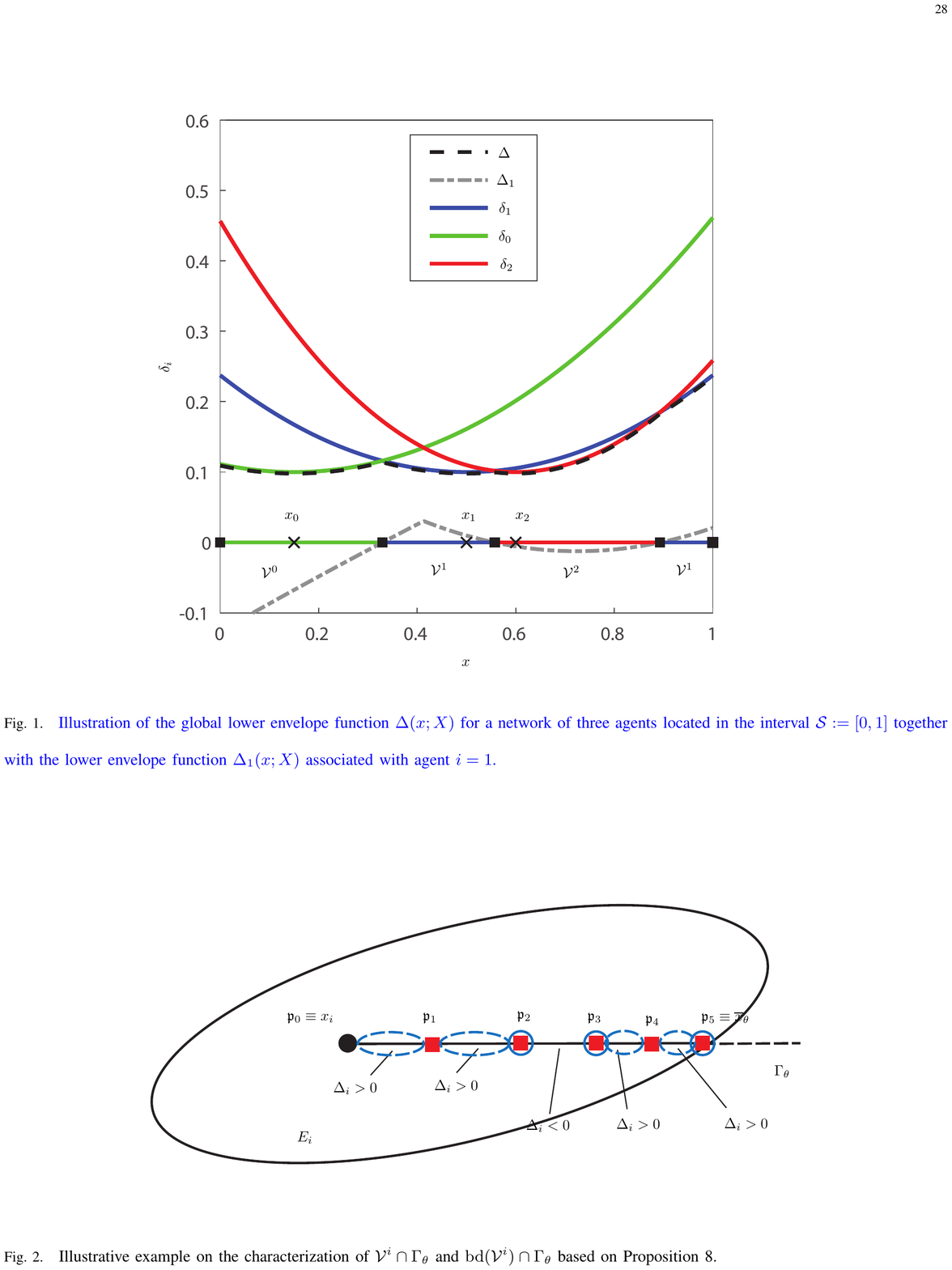,clip=,width=0.68\linewidth}
\caption{\small{Illustration of the global lower envelope function $\Delta(x;X)$ for a network of three agents located in the interval $\cS:=[0,1]$ together with the lower envelope function $\Delta_1(x;X)$ associated with agent $i=1$.}}
\vspace{-2mm} \label{f:envelope}
\end{figure}

It is worth noting that for the computation of $\Delta_i(x;X)$, the
$i$-th agent doesn't need to know neither $x$ nor the set $X$ but
instead the relative position $x-x_i$ and the positions of the other
agents relative to itself (no global reference frame is required).

\begin{proposition}\label{prop:dece}
Let $i\in[0,n]_{\mZ}$. There exists a function $\phi_i:\cS\ominus
\{x_i\} \rightarrow \mathbb{R}$ such that
\begin{equation}
\Delta_i(x;X) = \phi_i(x-x_i;  X \ominus \{x_i\} ),~~\forall
x\in\cS.
\end{equation}
\end{proposition}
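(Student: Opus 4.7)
The claim is a translation-invariance statement: the $i$-th lower envelope, viewed from the $i$-th agent's own frame, is fully determined by the relative positions of $x$ and of the other generators with respect to $x_i$. The plan is therefore to perform a change of variables $y := x - x_i$ and $z_\ell := x_\ell - x_i$ for $\ell \neq i$, rewrite every quadratic proximity metric appearing in $\Delta_i$ using only these differences, and then read off the desired function $\phi_i$.

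First I would unpack the definitions. Setting $y = x - x_i$, equation \eqref{eq:deltadef} gives directly $\delta_i(x;x_i) = y\t \mathbf{P}_i y + \mu_i$, which depends on $x$ only through $y$. For $\ell \neq i$, I would write $x - x_\ell = (x - x_i) - (x_\ell - x_i) = y - z_\ell$, so that
\begin{equation*}
\delta_\ell(x;x_\ell) = (y - z_\ell)\t \mathbf{P}_\ell (y - z_\ell) + \mu_\ell.
\end{equation*}
Substituting these two expressions into \eqref{eq:Deltaeq} yields
\begin{equation*}
\Delta_i(x;X) = \min_{\ell \neq i} \bigl[(y - z_\ell)\t \mathbf{P}_\ell (y - z_\ell) + \mu_\ell\bigr] - \bigl[y\t \mathbf{P}_i y + \mu_i\bigr],
\end{equation*}
whose right-hand side depends on $x$ and $X$ solely through $y$ and the collection $\{z_\ell\}_{\ell \neq i}$.

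Next I would check that the domains match. By the Minkowski difference convention in the Notation subsection, $X \ominus \{x_i\} = \{\,w : w + x_i \in X\,\} = \{x_\ell - x_i : \ell \in [0,n]_{\mZ}\}$, which is precisely the set of relative positions $z_\ell$ (including the trivial entry $z_i = 0$, which plays no role in the $\min$). Likewise, $\cS \ominus \{x_i\} = \{y : y + x_i \in \cS\}$ is exactly the set of admissible values of $y$ as $x$ ranges over $\cS$. I would then define
\begin{equation*}
\phi_i(y;Z) := \min_{\ell \neq i} \bigl[(y - z_\ell)\t \mathbf{P}_\ell (y - z_\ell) + \mu_\ell\bigr] - \bigl[y\t \mathbf{P}_i y + \mu_i\bigr]
\end{equation*}
for $y \in \cS \ominus \{x_i\}$ and $Z = \{z_\ell\}$, and the identity $\Delta_i(x;X) = \phi_i(x - x_i;~X \ominus \{x_i\})$ follows by construction for every $x \in \cS$.

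There is no real obstacle here: the only conceptual point is to verify that the Minkowski-difference notation correctly captures the translated point-set and domain, and the only algebraic point is the identity $x - x_\ell = (x - x_i) - (x_\ell - x_i)$, which makes each quadratic $\delta_\ell$ a function of differences alone. Note that the matrices $\mathbf{P}_\ell$ and gains $\mu_\ell$ are agent-specific parameters that the $i$-th agent is assumed to know (they are not part of the spatial configuration), so $\phi_i$ legitimately depends only on the relative positions passed as arguments.
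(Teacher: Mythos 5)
Your proof is correct and takes essentially the same route as the paper's: both perform the change of variables to relative coordinates $x-x_i$ and $x_\ell - x_i$ and observe that each $\delta_\ell$ becomes a function of these differences alone (the paper merely expands the quadratic form $(y-z_\ell)\t \mathbf{P}_\ell (y-z_\ell)$ into its three terms explicitly). Your additional check that the Minkowski-difference notation matches the translated domain and point-set is a welcome bit of care that the paper leaves implicit.
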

\begin{proof}
Indeed, for any $\ell \in [0,n]_{\mZ}\backslash \{i\}$, we have that
\begin{align*}
\delta_{\ell}(x;x_{\ell}) & = (x-x_{\ell})\t \mathbf{P}_{\ell}
(x-x_{\ell}) + \mu_\ell \\
& = (x - x_i + x_i - x_{\ell})\t \mathbf{P}_{\ell} (x - x_i + x_i
-x_{\ell}) + \mu_\ell \\
& = (x - x_i )\t \mathbf{P}_{\ell} (x - x_i) + (x_i - x_{\ell})\t
\mathbf{P}_{\ell} (x_i -x_{\ell}) \\
&~~~ + 2 (x - x_i )\t \mathbf{P}_{\ell} (x_i - x_{\ell}) + \mu_\ell.
\end{align*}
Therefore,
\begin{align*}
\Delta_i(x;X)
& = \min_{\ell \neq i} \big( (x - x_i )\t \mathbf{P}_{\ell} (x - x_i) \\
& ~~~\qquad~~ + ( x_{\ell} - x_i )\t \mathbf{P}_{\ell} ( x_{\ell} - x_i ) \\
& ~~~\qquad~~ - 2 (x - x_i )\t \mathbf{P}_{\ell} (x_{\ell} - x_i) + \mu_\ell \\
& ~~~\qquad~~ - (x - x_i )\t \mathbf{P}_{i} (x - x_i) - \mu_i \big) \\
& =\min_{\ell \neq i} \big( (x - x_i )\t (\mathbf{P}_{\ell}-
\mathbf{P}_{i})(x - x_i)  \\
& ~~~\qquad~~ + (x_{\ell} - x_i)\t \mathbf{P}_{\ell} (x_{\ell} - x_i) \\
& ~~~\qquad~~ - 2 (x - x_i )\t \mathbf{P}_{\ell} (x_{\ell} - x_i) +
\mu_\ell - \mu_i \big).
\end{align*}
Therefore, $\Delta_i(x;X)$ depends on the relative positions $x-x_i$
and $x_i-x_{\ell}$, for $\ell \neq i$. The result follows readily.
\end{proof}

\begin{remark} 
In light of Proposition~\ref{prop:dece}, the computation of the $i$-th lower envelope $\Delta_i$ does not require a global reference frame but it does require, in principle, that all the agents communicate with each other in order to compute the quantity $\min_{\ell \neq i} \delta_{\ell}(x;x_{\ell})$ in a centralized way (all-to-all communication). Later on, however, we will see that the $i$-th agent can characterize $\Delta_i$ by communicating with only a subset of its teammates (the $i$-th agent will find the latter agents by discovering the network topology induced by the HQVP; the latter problem is addressed in Section~\ref{s:netwtopo}), and thus, the computation of $\Delta_i$ can take place in a distributed way.
\end{remark}

\subsection{Parametrization of $\cV^i$ and $\mathrm{bd}(\cV^i)$}

Next, we will show that the cell $\cV^i \in \cV(X;\cS)$ and its
boundary $\mathrm{bd}(\cV^i)$, for $i \in [1, n]_{\mZ}$, admit
convenient parametrizations. These parametrizations will allow us to
propose a systematic way to compute proxies of $\cV^i$ and
$\mathrm{bd}(\cV^i)$ in a finite number of steps. Before we proceed
any further, we introduce some useful notation. In particular, for a
given $i\in [1,n]_{\mathbb{Z}}$ and $\theta \in[0,2\pi[$, we will
denote by $\Gamma_{\theta}$ the ray that starts from $x_i$ and is
parallel to the unit vector $e_\theta =[\cos \theta,~\sin
\theta]\t$, that is, $\Gamma_{\theta} := \{ x \in \mathbb{R}^2:~x=
x_i + \rho e_\theta,~\rho \geq 0 \}$. In addition, we denote as
$\overline{x}_{\theta}$ the point of intersection of
$\Gamma_{\theta}$ with $\mathrm{bd}(E_i \cap \cS)$ where $E_i :=
\mathcal{E}_{\ell_{i,0}}(\mathbf{P}_{i,0}^{-1}
\chi_{i,0}; \mathbf{P}_{i,0}^{-1} )$. 

In view of Proposition~\ref{prop:globineq}, to characterize
$\mathrm{bd}(\cV^i)$ one has to find the roots of $\Delta_i=0$ in
$E_i \cap \cS$ and also check if $\mathrm{bd}(\cV^i)$ contains
boundary points of $\cS$. What we propose to do is to find the roots
of $\Delta_i=0$ incrementally by searching along the ray
$\Gamma_\theta$, or more precisely, the line segment $\Gamma_\theta
\cap (E_i \cap \cS) = [x_i,~\overline{x}_{\theta}]$, for a different
$\theta\in [0,2\pi]$ at each time. For a given $\theta \in
[0,2\pi]$, we will denote as $P_{\theta}^{i}$ the point-set
comprised of the roots of the equation $\Delta_i=0$ in
$[x_i,~\overline{x}_{\theta}[$, that is,
\begin{equation}\label{eq:defnPi}
P_{\theta}^{i} := \{x\in [x_i,~\overline{x}_{\theta}[
:~\Delta_i(x;X) = 0 \}.
\end{equation}
If $P_{\theta}^{i} \neq \varnothing$, then let $M := \mathrm{card}(
P_{\theta}^{i})$ and let us consider the ordered point-set
\[
\mP_{\theta}^{i} =\{ \mfp_m \in [x_i,~\overline{x}_{\theta}]:~ m\in
[0, M+1]_{\mZ} \},
\]
which is comprised of the same points as the set $P_{\theta}^{i}
\cup \{x_i, \overline{x}_{\theta} \}$ with the latter points be
arranged as follows:
\begin{align}\label{p:orderline}
\mfp_0:= x_i,~|x_i - \mfp_1| < \dots < |x_i -
\mfp_M|,~\mfp_{M+1}:=\overline{x}_{\theta}.
\end{align}
The points of $\mP_{\theta}^{i}$ determine a partition $\{I^m:
m\in[1,M+1]_{\mZ}\}$, where $I^m := [\mfp_{m-1} ,~\mfp_m]$, of the
line segment $[x_i,\overline{x}_{\theta}]$.
Next, we provide one of the main results of this section regarding
the characterization of the intersection of the cell $\cV^i$ and its
boundary $\mathrm{bd}(\cV^i)$ with the ray $\Gamma_\theta$.

\begin{proposition}\label{prop:main1}
Let $i\in [1,n]_{\mathbb{Z}}$ and $\theta \in [0, 2\pi]$. Let also
$\{I^m := [\mfp_{m-1} ,~\mfp_m]: m\in[1,M+1]_{\mZ}\}$ be the
partition of $[x_i,~\overline{x}_{\theta}]$ that is induced by the
ordered point-set $\mP_{\theta}^{i}$ whose points are arranged
according to \eqref{p:orderline}. In addition, let $\hat{\mfp}_m$
denote the midpoint of the line segment $I^m$ and let
$\hat{\Delta}^m_i :=\Delta_i(\hat{\mfp}_m;X)$. Further, let us
consider the index-sets
\begin{align*}
\cM^{+} & := \{m \in [1,M+1]_{\mZ}: \hat{\Delta}^m_i>0 \} \\
\cM^{-} & := \{m \in [1,M+1]_{\mZ}: \hat{\Delta}^m_i<0 \}.
\end{align*}
Then,
\begin{equation}\label{eq:thebasiceq}
\cV^i \cap \Gamma_\theta =
\mathfrak{f}_{\theta}(x_i),~~~\mathrm{bd}(\cV^i) \cap \Gamma_\theta
= \mathfrak{g}_{\theta}(x_i),
\end{equation}
where the set-valued maps $\mathfrak{f}_\theta(\cdot) : X
\rightrightarrows \wp([x_i,~\overline{x}_{\theta}])$ and
$\mathfrak{g}_\theta(\cdot) : X \rightrightarrows
\wp([x_i,~\overline{x}_{\theta}])$
are defined as follows:

\noindent $\mathrm{i})$ If $P_{\theta}^{i}=\varnothing$, then
\begin{align*}
\mathfrak{f}_{\theta}(x_i) :=
[x_i,\overline{x}_{\theta}],~~~\mathfrak{g}_{\theta}(x_i) := \{
\overline{x}_{\theta} \}.
\end{align*}

\noindent $\mathrm{ii})$ If $P_{\theta}^{i} \neq \varnothing$, then
\begin{equation*}
\mathfrak{f}_{\theta}(x_i) := \bigcup_{m \in \cM_{\mff}}
I^{m},~~~\mathfrak{g}_{\theta}(x_i) := \{\mfp_m: m \in \cM_{\mfg}
\},
\end{equation*}
where $\cM_{\mff} = \cM^{+}$ 
and $\cM_{\mfg} := \cM_{\mfg}^+ \cup \cM_{\mfg}^{-}$.
In particular, the index-set $\cM_{\mfg}^{+}$ is comprised of all $m
\in \cM^{+} \cap [1,M]_{\mZ}$ such that $m+1 \in \cM^{-}$ plus the
index $M+1$ if $M+1 \in \cM^{+}$. Finally, the index-set
$\cM_{\mfg}^{-}$ is comprised of all $m \in \cM^{-} \cap
[1,M]_{\mZ}$ such that $m+1 \in \cM^{+}$.
\end{proposition}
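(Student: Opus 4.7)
The plan is to reduce the two-dimensional cell identification along the ray $\Gamma_\theta$ to a one-dimensional sign analysis of the continuous function $\Delta_i(\cdot;X)$ restricted to the segment $[x_i,\overline{x}_\theta]$. By Proposition~\ref{prop:main} the cell $\cV^i$ is contained in $E_i \cap \cS$, and by construction $\Gamma_\theta \cap (E_i \cap \cS) = [x_i,\overline{x}_\theta]$, so $\cV^i \cap \Gamma_\theta = \cV^i \cap [x_i,\overline{x}_\theta]$. Proposition~\ref{prop:globineq} further identifies this intersection with the nonnegative sublevel set $\{x \in [x_i,\overline{x}_\theta] : \Delta_i(x;X) \geq 0\}$, while Assumption~\ref{assumption1} gives $\Delta_i(x_i;X) > 0$, anchoring the sign at the left endpoint. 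Continuity of $\Delta_i$, as a finite minimum of quadratic forms, will let me invoke the intermediate value theorem to control the sign on each subinterval of the partition induced by $\mP_{\theta}^{i}$.

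For case i), when $P_{\theta}^{i} = \varnothing$, continuity and $\Delta_i(x_i;X)>0$ together force $\Delta_i > 0$ throughout $[x_i,\overline{x}_\theta[$, since any zero in this half-open segment would contradict $P_{\theta}^{i}=\varnothing$. Passing to the limit at $\overline{x}_\theta$ gives $\Delta_i(\overline{x}_\theta;X) \geq 0$, so the full segment lies in $\cV^i$. Moreover, any point strictly between $x_i$ and $\overline{x}_\theta$ admits a planar neighborhood contained in $\intr(E_i \cap \cS)$ on which $\Delta_i$ remains positive, so it belongs to $\intr(\cV^i)$; whereas $\overline{x}_\theta \in \mathrm{bd}(E_i \cap \cS)$ has every neighborhood meeting the complement of $E_i \cap \cS \supseteq \cV^i$, placing $\overline{x}_\theta$ in $\mathrm{bd}(\cV^i)$. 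This yields $\mathfrak{f}_\theta(x_i)=[x_i,\overline{x}_\theta]$ and $\mathfrak{g}_\theta(x_i)=\{\overline{x}_\theta\}$.

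For case ii), the ordered points of $\mP_{\theta}^{i}$ partition $[x_i,\overline{x}_\theta]$ into the closed subintervals $I^m$. By definition of $P_{\theta}^{i}$, $\Delta_i$ has no zeros on the open interior of any $I^m$; continuity and the intermediate value theorem then force $\Delta_i$ to keep a constant nonzero sign there, and this sign is registered by $\hat{\Delta}_i^m$. Consequently $\cM^+$ and $\cM^-$ index exactly the subintervals on whose open interior $\Delta_i$ is strictly positive and strictly negative, respectively. Since each $I^m$ with $m \in \cM^+$ has endpoints that are either the anchor points $x_i, \overline{x}_\theta$ (with $\Delta_i \geq 0$) or zeros of $\Delta_i$, it lies entirely in $\cV^i$; and the open interior of each $I^m$ with $m \in \cM^-$ lies in the complement of $\cV^i$. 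Taking the union gives $\cV^i \cap \Gamma_\theta = \bigcup_{m \in \cM^+} I^m = \mathfrak{f}_\theta(x_i)$.

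The main technical step is the identification of $\mathrm{bd}(\cV^i) \cap \Gamma_\theta$. Interior points of any $I^m$ with $m \in \cM^+$ are ruled out as above, since they sit in $\intr(\cV^i)$. A zero $\mfp_m$ with $m \in [1,M]_{\mZ}$ belongs to $\cV^i$ and, provided the ray meets the active bisector at $\mfp_m$ transversally (the implicit nondegeneracy assumption on $\Gamma_\theta$), the sign of $\Delta_i$ changes across $\mfp_m$ along $\Gamma_\theta$; these are exactly the indices collected by $\cM_\mfg^+$ (change from $+$ to $-$) and $\cM_\mfg^-$ (change from $-$ to $+$), and for such $\mfp_m$ every planar neighborhood contains points with $\Delta_i < 0$, so $\mfp_m \in \mathrm{bd}(\cV^i)$. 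Finally, $\overline{x}_\theta$ sits on $\mathrm{bd}(E_i \cap \cS)$ and joins $\mathrm{bd}(\cV^i)$ exactly when it belongs to $\cV^i$, which occurs iff $M+1 \in \cM^+$; this is the extra clause in $\cM_\mfg^+$. Combining the three observations gives $\mathrm{bd}(\cV^i) \cap \Gamma_\theta = \{\mfp_m : m \in \cM_\mfg\} = \mathfrak{g}_\theta(x_i)$. The most delicate aspect to articulate carefully is the transversality used for zeros in the interior of the segment, which, together with item~2 of Proposition~\ref{prop:globineq}, ensures that sign changes of $\Delta_i$ along $\Gamma_\theta$ and boundary points of $\cV^i$ on $\Gamma_\theta$ coincide.
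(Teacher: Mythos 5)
Your proof follows essentially the same route as the paper's: reduce to a one-dimensional sign analysis of $\Delta_i$ along $[x_i,\overline{x}_{\theta}]$ using continuity, Assumption~\ref{assumption1} at the left endpoint, and Proposition~\ref{prop:globineq}, then read off $\cV^i\cap\Gamma_\theta$ from the subintervals with positive midpoint sign and $\mathrm{bd}(\cV^i)\cap\Gamma_\theta$ from the sign changes at the $\mfp_m$ plus the endpoint $\overline{x}_{\theta}$. Your explicit flagging of the tangential (non-sign-changing) zero as an implicit transversality assumption is a degenerate case the paper's proof passes over silently, but otherwise the two arguments coincide.
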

\begin{proof}
First, we consider the case when $P_{\theta}^{i}=\varnothing$, that
is, $\Delta_i$ has no roots in $[x_i, \overline{x}_{\theta}[$. In
view of Assumption~\ref{assumption1}, we have
\[
\Delta_i(x_i;X) = \min_{\ell \neq i} \delta_{\ell}(x_i;x_{\ell}) -
\mu_i > 0.
\]
By continuity, we conclude that in this case $\Delta_i(x;X)
> 0$, for all $x\in [x_i, \overline{x}_{\theta}[$, which implies that
$\mathfrak{f}_{\theta}(x_i) = \cV^i \cap \Gamma_\theta = [x_i,
\overline{x}_{\theta}]$ and $\mathfrak{g}_{\theta}(x_i) =
\mathrm{bd}(\cV^i) \cap \Gamma_\theta = \{ \overline{x}_{\theta}\}$.

Next, we consider the case when $P_{\theta}^{i} \neq \varnothing$.
By definition, $\hat{\Delta}^m_i>0$ for all $m \in \cM^{+}$. By
continuity of $\Delta_i$, we have that $\Delta_i(x;X) \geq 0$ for
all $x\in I_m = [\mfp_{m-1},~\mfp_m]$ and for all $m \in \cM^{+}$.
Therefore, in view of equation \eqref{eq:globineq}, we have
$\bigcup_{m \in \cM_{\mff}}I_m = \cV^i \cap \Gamma_{\theta} =
\mathfrak{f}_{\theta}(x_i)$ with $\cM_{\mff} = \cM^{+}$. 
Now, a point $\mfp_m$ with $m\in [1,M]_{\mZ}$ belongs to
$\mathrm{bd}(\cV^i) \cap \Gamma_\theta = \mathfrak{g}_{\theta}(x_i)$
if and only if as one transverses $\Gamma_\theta$ (with direction
from $x_i$ towards $\overline{x}_{\theta}$), one of the following
two events takes place: 1) $\Delta_i$, which is negative ``before''
$\mfp_{m}$, becomes positive ``after'' $\mfp_{m}$ (in which case $m
\in \cM_{\mfg}^{+}$) or 2) $\Delta_i$, which is positive ``before''
$\mfp_{m}$, becomes negative ``after'' $\mfp_{m}$ (in which case $m
\in \cM_{\mfg}^{-}$). Finally, if $\Delta_i(\hat{\mfp}_{M+1};X) >
0$, then $\mfp_{M+1} \in \mathfrak{g}_{\theta}(x_i)$ and $M+1\in
\cM_{\mfg}^{+}$. This completes the proof.
\end{proof}

\noindent \textit{Example:} To better understand the implications of
Proposition~\ref{prop:main1} as well as the meaning of each
index-set introduced therein, let us consider the example
illustrated in Figure~\ref{f:invpend}. We have that
$\mathfrak{P}_\theta^i =\{ \mfp_m:~m\in[0,5]_{\mZ} \}$ where
$\mfp_0\equiv x_i$ and $\mfp_5 \equiv \overline{x}_\theta$ and its
induced partition is $\{ I_m = [\mfp_{m-1}, \mfp_m]:~m\in[1,5]_{\mZ}
\}$. The sign of $\Delta_i$ at the mid-points of the segments $I_1,
I_2, I_4, I_5$, which are enclosed by dashed blue ellipses in the
figure, is positive and thus $\cM^{+} =\cM_{\mff} = \{ 1, 2, 4, 5
\}$ whereas $\cM^{-} =\{ 3 \}$. We conclude that
$\mathfrak{f}_{\theta}(x_i) = [x_i,\mfp_2] \cup [\mfp_3,
\overline{x}_{\theta}]$. Furthermore, as one transverses
$\Gamma_\theta$ (from $x_i$ towards $\overline{x}_{\theta}$)
$\Delta_i$ changes sign from positive to negative at $\mfp_2$ and in
addition, $\Delta_i>0$ at the mid-point of  $I_5$; thus,
$\cM^{+}_\mfg = \{2, 5\}$. Also, $\Delta_i$ changes sign from
negative to positive at $\mfp_3$, and thus $\cM^{-}_{\mfg} =\{3\}$.
Hence, $\cM_{\mfg} = \cM_{\mfg}^{+} \cup \cM_{\mfg}^{-} = \{2 ,3, 5
\}$. We conclude that $\mathfrak{g}_{\theta}(x_i) = \{\mfp_2,
\mfp_3, \mfp_5 \}$. The points from $\mathfrak{P}^i_{\theta}$ that
form $\mathfrak{g}_{\theta}(x_i)$ are encircled by blue circles in
Figure~\ref{f:invpend}.

\begin{remark}
A careful interpretation of the results presented in Proposition~\ref{prop:main1} reveals that under some mild and intuitive modifications, one can characterize the cell $\cV^i$ and its boundary $\mathrm{bd}(\cV^i)$ even for the more general case when Assumption~1 may not hold true. For instance, in the previous example, the sign of $\Delta_i$ in the segment $]x_i,\mfp_1[$ will not necessarily be positive (it is always positive if Assumption~1 holds true) and, instead, it will be equal to the sign of $\Delta_i$ at any interior point in that segment. For the sake of the argument, let us take the latter sign to be negative. Then, assuming that the signs of $\Delta_i$ in all the other segments remain the same as in Fig.~\ref{f:invpend}, it follows that $\mathfrak{g}_{\theta}(x_i) = \{\mfp_1, \mfp_2,
\mfp_3, \mfp_5 \}$ and $\cM^{+} =\cM_{\mff} = \{2, 4, 5
\}$.
\end{remark}

\begin{proposition}\label{prop:main2}
Let us consider a family of rays $\{ \Gamma_\theta:~\theta\in [0, 2
\pi] \}$, where the ray $\Gamma_\theta$ emanates from $x_i$ and is
parallel to the unit vector $e_\theta:=[\cos \theta,~\sin
\theta]\t$. Then,
\begin{align}
\cV^i & = \bigcup_{\theta \in [0,2\pi]}
\mathfrak{f}_{\theta}(x_i), \qquad 
\mathrm{bd}(\cV^i) = \bigcup_{\theta \in [0,2\pi]}
\mathfrak{g}_{\theta}(x_i), \label{eq:bdparameta2}
\end{align}
where the set-valued maps $\mathfrak{f}_{\theta}(\cdot)$ and
$\mathfrak{g}_{\theta}(\cdot)$ are defined as in
Proposition~\ref{prop:main1} for each $\theta\in[0,2\pi]$.
\end{proposition}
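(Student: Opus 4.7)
The plan is to leverage two already established facts: (i) by Proposition~\ref{prop:main}, the cell $\cV^i$ is contained in the compact convex set $E_i\cap\cS$, and (ii) by Proposition~\ref{prop:main1}, the intersection of $\cV^i$ (resp.\ $\mathrm{bd}(\cV^i)$) with any single ray $\Gamma_\theta$ is exactly $\mathfrak{f}_\theta(x_i)$ (resp.\ $\mathfrak{g}_\theta(x_i)$). What is missing is only the observation that as $\theta$ sweeps $[0,2\pi]$, the rays $\Gamma_\theta$ cover all of $\cV^i$, and hence the pointwise identities of Proposition~\ref{prop:main1} can be assembled into the set identities \eqref{eq:bdparameta2}.

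First I would verify that $E_i\cap\cS$ is star-shaped with respect to $x_i$. Since $\cS$ is convex and $E_i$ is an ellipsoid (hence convex), the intersection $E_i\cap\cS$ is convex. Moreover, Assumption~\ref{assumption1} (applied with $j=0$) gives $\delta_i(x_i;x_i)=\mu_i<\delta_0(x_i;x_0)$, so $x_i\in\cS_{i,0}=E_i\cap\cS$, as shown in the proof of Proposition~\ref{prop:main}. Convexity together with $x_i\in E_i\cap\cS$ implies that for every $x\in E_i\cap\cS$, the segment from $x_i$ to $x$ lies entirely in $E_i\cap\cS$, so $x$ belongs to $[x_i,\overline{x}_\theta]$ for the unique $\theta\in[0,2\pi]$ with $x-x_i\in\mathbb{R}_{\geq 0}\,e_\theta$. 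Equivalently,
\[
E_i\cap\cS \;=\; \bigcup_{\theta\in[0,2\pi]}\bigl[x_i,\overline{x}_\theta\bigr] \;=\; \bigcup_{\theta\in[0,2\pi]}\bigl(\Gamma_\theta\cap(E_i\cap\cS)\bigr).
\]

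Next I would combine this ray decomposition with the inclusion $\cV^i\subseteq E_i\cap\cS$ from Proposition~\ref{prop:main}. Intersecting both sides of the displayed identity with $\cV^i$ yields
\[
\cV^i \;=\; \bigcup_{\theta\in[0,2\pi]}\bigl(\cV^i\cap\Gamma_\theta\bigr),
\]
since $\cV^i\cap\Gamma_\theta=\cV^i\cap[x_i,\overline{x}_\theta]$ owing to $\cV^i\subseteq E_i\cap\cS$. Applying the first identity of Proposition~\ref{prop:main1} to each term replaces $\cV^i\cap\Gamma_\theta$ by $\mathfrak{f}_\theta(x_i)$, which establishes the first equality in \eqref{eq:bdparameta2}.

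Finally, for the boundary identity I would observe that $\cV^i$ is closed—it is the sublevel set of the continuous function $\delta_i(\cdot;x_i)-\min_{j\neq i}\delta_j(\cdot;x_j)$ at level $0$ intersected with the closed set $\cS$—so $\mathrm{bd}(\cV^i)\subseteq\cV^i\subseteq E_i\cap\cS$. Repeating the argument verbatim with $\mathrm{bd}(\cV^i)$ in place of $\cV^i$, and invoking the second identity of Proposition~\ref{prop:main1}, gives the second equality in \eqref{eq:bdparameta2}. I do not anticipate any real obstacle: the only substantive ingredient is the star-shapedness of $E_i\cap\cS$ with respect to $x_i$, which is automatic from convexity once Assumption~\ref{assumption1} places $x_i$ inside $E_i\cap\cS$.
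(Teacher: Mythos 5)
Your proof is correct and follows essentially the same route as the paper's: both decompose $\cV^i$ (resp.\ $\mathrm{bd}(\cV^i)$) over the family of rays, distribute the intersection over the union, and invoke Proposition~\ref{prop:main1} ray by ray — the paper simply uses $\bigcup_{\theta}\Gamma_\theta=\mathbb{R}^2$ where you use the star-shapedness of $E_i\cap\cS$ about $x_i$. Your extra care in checking $x_i\in E_i\cap\cS$ and that $\mathfrak{f}_\theta(x_i)$ lives on the truncated segment $[x_i,\overline{x}_\theta]$ is a harmless (and slightly tidier) elaboration of the same argument.
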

\begin{proof}
We have that
\begin{align*}
\bigcup_{\theta \in [0,2\pi]} \mathfrak{f}_{\theta}(x_i) =
\bigcup_{\theta \in [0,2\pi]} (\cV^i \bigcap \Gamma_{\theta})= \cV^i
\bigcap (\bigcup_{\theta \in [0,2\pi]} \Gamma_{\theta})  = \cV^i,
\end{align*}
where in the first equality, we used \eqref{eq:thebasiceq} and in
the last one, we used that $\bigcup_{\theta \in [0,2\pi]}
\Gamma_{\theta} = \mathbb{R}^2$. Thus, we have proved that the first
equation in \eqref{eq:bdparameta2} holds true. The proof for the
second one follows similarly. 
\end{proof}

\begin{figure}[htb]
\centering
 \psfrag{a}[][][0.8]{$\mfp_1$}
  \psfrag{b}[][][0.8]{$~~\mfp_2$}
    \psfrag{c}[][][0.8]{$~~\mfp_3$}
    \psfrag{y}[][][0.8]{$\mfp_4$}
    \psfrag{X}[][][0.8]{$~~~\qquad~~\mfp_5 \equiv \overline{x}_{\theta}$}
    \psfrag{x}[][][0.8]{$\mfp_0 \equiv x_i~~~~$}
    \psfrag{A}[][][0.8]{$\Delta_i > 0~~~$}
    \psfrag{d}[][][0.8]{$~~~~\Delta_i > 0~~~$}
    \psfrag{f}[][][0.8]{$\Delta_i < 0$}
    \psfrag{D}[][][0.8]{$~~~~\Delta_i > 0$}
    \psfrag{F}[][][0.8]{$~~~~~~~\Delta_i > 0$}
    \psfrag{G}[][][0.8]{$~~~\Gamma_{\theta}$}
    \psfrag{E}[][][0.8]{$E_{i}$}
 \epsfig{file = 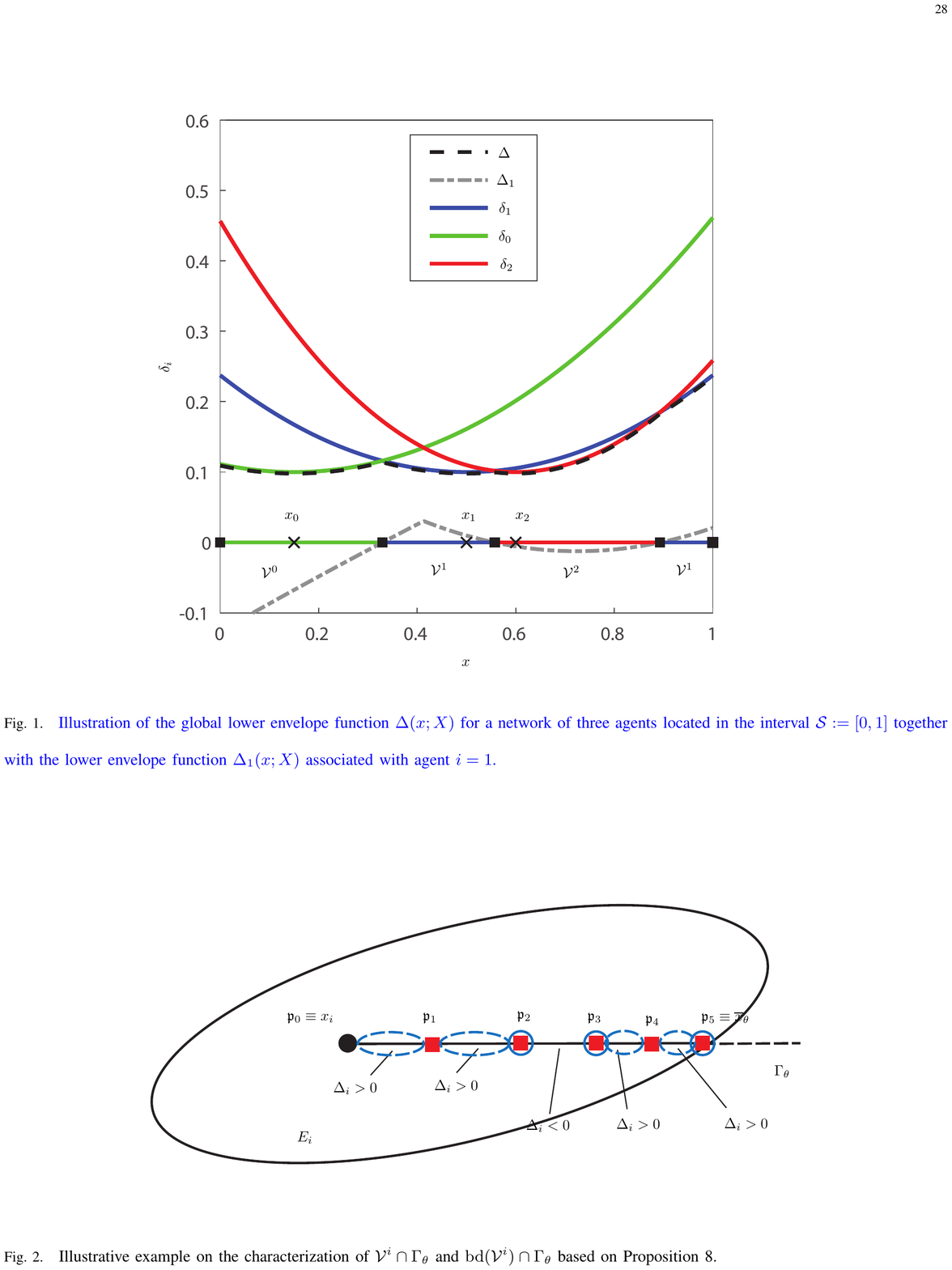,clip=,width=0.69\linewidth}
\caption{\small{Illustrative example on the characterization of
$\cV^i \cap \Gamma_{\theta}$ and $\mathrm{bd}(\cV^i) \cap
\Gamma_{\theta}$ based on Proposition \ref{prop:main1}.}}
\vspace{-2mm} \label{f:invpend}
\end{figure}

\section{A systematic approach for the computation of a finite
approximation of $\mathrm{bd}(\cV^i)$ and $\cV^i$}\label{s:mainalgo}

\subsection{Efficient computation of the roots of the equation $\Delta_i=0$}

In this section, we will leverage Propositions~\ref{prop:main1} and
\ref{prop:main2} to develop a systematic procedure to characterize
the boundary points of $\cV^i$ that lie on a given ray
$\Gamma_{\theta}$ after a finite number of steps. To this aim, let
$\overline{\rho}_{\theta}>0$ denote the length of
$[x_i,~\overline{x}_{\theta}]$, that is, $\overline{\rho}_{\theta}
:= | \overline{x}_{\theta} - x_i|$. Recall that
$\overline{x}_{\theta}$ corresponds to the intersection of
$\Gamma_{\theta}$ with $\mathrm{bd}(E_i \cap \cS)$.
In addition, let
\begin{align}\label{eq:isoeq}
\mathcal{R}^{i,j}_{\theta} & :=\{\rho \in
[0,~\overline{\rho}_{\theta}[:~\delta_{i}(x_i + \rho \e_\theta; x_i)
= \nonumber \\&~~~~\qquad~~~~~~~~~~~~~~~~ \delta_{j}(x_i + \rho
\e_\theta; x_j)\},
\end{align}
for $j \in [0,n]_{\mathbb{Z}}\backslash\{i\}$. Equivalently,
$\mathcal{R}^{i,j}_{\theta}$ consists of all $\rho \in
[0,~\overline{\rho}_{\theta}]$ that satisfy the following equation:
\begin{equation}\label{eq:quadreq}
\alpha \rho^2 + \beta \rho + \gamma = 0,
\end{equation}
where
\begin{subequations}
\begin{align}
\alpha & := |\bm{\Pi}_i^{1/2}
\e_\theta |^2 - |\bm{\Pi}_j^{1/2} \e_\theta |^2,\\
\beta & := 2 (x_j - x_i)\t \bm{\Pi}_j \e_\theta, \\
\gamma & := |\bm{\Pi}_j^{1/2} ( x_i - x_j )|^2 + \mu_i - \mu_j.
\end{align}
\end{subequations}

Note that if $\rho \in \mathcal{R}^{i,j}_{\theta}$, then the point
$p := x_i + \rho \e_\theta$ will belong to $\Gamma_{\theta} \cap
\cB_{i,j}$. Let $\mathcal{P}^{i,j}_{\theta}:=\{p \in [x_i,
\overline{x}_{\theta}[:~p= x_i + \rho \e_\theta,~\rho \in
\mathcal{R}^{i,j}_{\theta} \}$. Note that there is an (obvious)
one-to-one correspondence between the point-sets
$\mathcal{P}^{i,j}_{\theta}$ and $\mathcal{R}^{i,j}_{\theta}$, which
may both be empty for some $j \neq i$. Now, let
\begin{equation}\label{eq:RPitheta}
\mathcal{R}^i_{\theta} := \cup_{j\neq i }
\mathcal{R}^{i,j}_{\theta},~~~~\cP^i_{\theta} := \cup_{j\neq i }
\mathcal{P}^{i,j}_{\theta}.
\end{equation}
Note that a point $p \in
\cP^i_{\theta}\backslash\{x_i,\overline{x}_{\theta}\}$ is
necessarily equidistant from the $i$-th agent and at least a
different agent from the same extended network. This naturally leads
us to the following proposition.

\begin{proposition}\label{prop:RPbasic}
Let $P^i_{\theta}$ be the point-set which is defined as in
\eqref{eq:defnPi}. 
Then,
$\cP^i_{\theta} \supseteq P^i_{\theta}$ and thus,
\begin{align}\label{eq:basincl0}
P^i_{\theta} & =\{ x \in \cP^i_{\theta}: \Delta_i(x;X)=0\}.
\end{align}
\end{proposition}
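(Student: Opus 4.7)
The plan is to unpack the definitions of both sides and observe that the claim is essentially a tautology built on the fact that the minimum over a finite index set is attained. The inclusion $\cP^i_{\theta} \supseteq P^i_{\theta}$ is the nontrivial content; the displayed set-equality then drops out immediately as a corollary.

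First I would take an arbitrary point $x \in P^i_{\theta}$. By \eqref{eq:defnPi}, this point lies in $[x_i,\overline{x}_{\theta}[$ and satisfies $\Delta_i(x;X)=0$, which in view of \eqref{eq:Deltaeq} means $\delta_i(x;x_i) = \min_{\ell \neq i}\delta_{\ell}(x;x_{\ell})$. Because the minimization is over the finite index-set $[0,n]_{\mZ}\setminus\{i\}$, the minimum is attained at some index $j^{\star} \neq i$, giving $\delta_i(x;x_i) = \delta_{j^{\star}}(x;x_{j^{\star}})$. Since $x \in [x_i,\overline{x}_{\theta}[$, I can write $x = x_i + \rho e_\theta$ for a unique $\rho \in [0,\overline{\rho}_{\theta}[$, and the equidistance condition then places $\rho$ inside $\cR^{i,j^{\star}}_{\theta}$ by definition~\eqref{eq:isoeq}. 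Hence $x \in \cP^{i,j^{\star}}_{\theta} \subseteq \cP^i_{\theta}$, which proves the first inclusion.

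Once $P^i_{\theta} \subseteq \cP^i_{\theta}$ is established, the identity $P^i_{\theta} = \{x \in \cP^i_{\theta}: \Delta_i(x;X) = 0\}$ follows by two one-line inclusions: the $(\subseteq)$ direction combines the just-proved inclusion with the defining property $\Delta_i(x;X)=0$ of $P^i_{\theta}$; the $(\supseteq)$ direction is immediate since $\cP^i_{\theta} \subseteq [x_i,\overline{x}_{\theta}[$ (each $\cP^{i,j}_{\theta}$ is defined as a subset of that half-open segment) and thus any $x \in \cP^i_{\theta}$ with $\Delta_i(x;X)=0$ meets the two conditions in \eqref{eq:defnPi}.

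There is no real obstacle in this argument; it is pure definition-chasing, and the only point worth flagging is the (routine) observation that the minimum in the definition of $\Delta_i$ is attained, so that some particular bisector $\mB_{i,j^{\star}}$ can be identified as being responsible for the vanishing of $\Delta_i$ at $x$. Note that the reverse inclusion $\cP^i_{\theta} \subseteq P^i_{\theta}$ need not hold in general, since a point $p \in \cP^{i,j}_{\theta}$ merely satisfies $\delta_i(p;x_i) = \delta_j(p;x_j)$ but could still have $\delta_k(p;x_k) < \delta_i(p;x_i)$ for some $k \notin \{i,j\}$, in which case $\Delta_i(p;X) < 0$; this is precisely why the filtering $\Delta_i(x;X)=0$ on $\cP^i_{\theta}$ is required.
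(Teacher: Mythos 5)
Your proof is correct and takes the same route as the paper, which simply states that the result ``follows readily from the definitions'' of $\cP^i_{\theta}$ and $P^i_{\theta}$; you have merely spelled out the definition-chasing (attainment of the minimum at some $j^{\star}$, hence membership in $\cP^{i,j^{\star}}_{\theta}$) that the paper leaves implicit. Your closing remark about why the reverse inclusion can fail is also consistent with the paper's surrounding discussion.
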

\begin{proof}
The proof follows readily from the definitions of $\cP^i_{\theta}$
and $P^i_{\theta}$.
\end{proof}

Proposition \ref{prop:RPbasic} implies that for the characterization
of the set $P^i_{\theta}$ that consists of all the roots of
$\Delta_i=0$ in $[x_i,\overline{x}_{\theta}[$, one has to evaluate
the function $\Delta_i$ at the points of the finite point-set
$\cP^i_{\theta}$, which is a superset of the unknown set
$P^i_{\theta}$. In particular, $P^i_{\theta}$ is comprised of all
those points of $\cP^i_{\theta}$ at which $\Delta_i$ vanishes and
only them. 

\subsection{Line search algorithm for the computation of
$\mathfrak{f}_{\theta}(x_i)$ and $\mathfrak{g}_{\theta}(x_i)$}

Next, we present an algorithm that computes
$\mathfrak{f}_{\theta}(x_i)$ and $\mathfrak{g}_{\theta}(x_i)$ for a
given $\theta \in [0,2\pi[$ based on the previous discussion and analysis. The main steps of the
proposed algorithmic process can be found in
Algorithm~\ref{DVicell}. In particular, the first step is to compute
the point-set $\cP_\theta^i$ (line 5). If $\cP^i_{\theta} =
\varnothing$, then we set $\mathfrak{f}_{\theta}(x_i)$ and
$\mathfrak{g}_{\theta}(x_i)$ to be equal to, respectively, $[x_i,
\overline{x}_\theta]$ and $\{ \overline{x}_{\theta} \}$ and the
process is complete (lines 6-7). If $\cP^i_{\theta} \neq
\varnothing$, we characterize all of the points in $\cP^i_{\theta}$
that correspond to the roots of the equation $\Delta_i = 0$ in
$[x_i,\overline{x}_{\theta}[$ to form the point-set $P_{\theta}^i$
in accordance with Proposition~\ref{prop:RPbasic} (line 8). Next, we
apply a permutation to the point-set $P_{\theta}^i \cup \{x_i,
\overline{x}_{\theta} \}$ to obtain the point-set $\mP_{\theta}^i =
\{\mfp_m:~m\in[1,M+1]_{\mZ} \}$ whose points are ordered in
increasing distance from $x_i$ as in \eqref{p:orderline} (lines
9-10). Next, we start an iterative process for the characterization
of the index sets $\cM_{\mff}$ and $\cM_{\mfg}$, with
$\cM_{\mff}=\cM^{+}$ and $\cM_{\mfg}=\cM^+_{\mfg} \cup \cM^-_{\mfg}$
(lines 11-25), where the index sets $\cM^{+}$, $\cM^+_{\mfg}$ and
$\cM^-_{\mfg}$ are defined as in Proposition~\ref{prop:main1}.
Finally, we set $\mathfrak{f}_{\theta}(x_i) := \cup_{m \in
\cM_{\mff}}[\mfp_{m-1}, \mfp_m]$ and $\mathfrak{g}_{\theta}(x_i) :=
\{\mfp_m \in \mP_{\theta}^i:~m\in \cM_{\mfg} \}$ (lines 26-27).

Note that after the computation of $\mathfrak{g}_{\theta}(x_i)$,
then, in view of Proposition~\ref{prop:main1}, one can compute an
approximation of $\mathrm{bd}(\cV^i)$ by computing
$\mathfrak{g}_{\theta}(x_i)$ for all $\theta \in \Theta$, where
$\Theta$ is a finite point-set whose points define a partition of
$[0,2\pi]$.

\begin{algorithm}
\caption{Computation of point-sets $\mathfrak{f}_{\theta}(x_i) =
\cV^i \cap \Gamma_{\theta}$ and $\mathfrak{g}_{\theta}(x_i) =
\mathrm{bd}(\cV^i) \cap \Gamma_{\theta}$}\label{DVicell}
\begin{algorithmic}[1]
\Procedure{Cell computation}{} 
\BState \textit{Input data}: $X$, $\{ (\mathbf{P}_{\ell},
\mu_{\ell}):~\ell\in[0,n]_{\mZ}\}$ 
\BState \textit{Input variables}: $i$, $\theta$ 
\BState \textit{Output variables}:
$\mathfrak{f}_{\theta}(x_i),\mathfrak{g}_{\theta}(x_i)$ \State Find
$\cP^i_{\theta}$ \If {$\cP^i_{\theta} = \varnothing$} \State
$\mathfrak{f}_{\theta}(x_i)\gets [x_i, \overline{x}_{\theta}]$,
$\mathfrak{g}_{\theta}(x_i)\gets \{ \overline{x}_{\theta} \}$
\Return \EndIf \State Extract point-set $P^i_{\theta}$ from
$\cP^i_{\theta}$ based on Proposition~\ref{prop:RPbasic} \State
$\mathfrak{P}_{\theta}^i \gets
P^i_{\theta}\cup\{x_i,\overline{x}_{\theta}\}$. \State Re-arrange
points in $\mathfrak{P}_{\theta}^i = \{\mfp_m:~m\in[0,M+1]_{\mZ}\}$
based on increasing distance from $x_i$ according to
\eqref{p:orderline}
\State $\cM_{\mff} \gets \varnothing,~\cM_{\mfg}^{+} \gets
\varnothing,~\cM_{\mfg}^{-} \gets \varnothing$ \For{ $m=1:M+1$}
\State $\hat{x} \gets \tfrac{1}{2}(\mfp_{m-1} + \mfp_{m})$
\Comment{$\hat{x}$: midpoint of $I^{m}$} \State $\hat{\Delta} \gets
\Delta_i(\hat{x};X)$ \If{$\hat{\Delta}>0$} $\cM_{\mff} \gets
\cM_{\mff} \cup \{m\}$ \EndIf
 \If{ $m<M+1$} \State $\hat{x}' \gets \tfrac{1}{2}(\mfp_{m} +
\mfp_{m+1})$ \Comment{$\hat{x}'$: midpoint of $I^{m+1}$} \State
$\hat{\Delta}' \gets \Delta_i(\hat{x}';X)$ \If{ $\hat{\Delta}>0$ and
$\hat{\Delta}'<0$ } \State $\cM_{\mfg}^{+} \gets \cM_{\mfg}^{+} \cup
\{m\}$ \EndIf \If{ $\hat{\Delta} <0$ and $\hat{\Delta}'>0$ } \State
$\cM_{\mfg}^{-} \gets \cM_{\mfg}^{-} \cup \{m\}$ \EndIf \EndIf \If{
$\hat{\Delta}>0$ and $m = M+1$} \State $\cM^{+}_{\mfg} \gets
\cM^{+}_{\mfg} \cup \{ m\}$ \EndIf \EndFor
\State $\cM_{\mfg} \gets \cM^{+}_{\mfg} \cup \cM^{-}_{\mfg}$ \State
$\mathfrak{f}_{\theta}(x_i) \gets \{[\mfp_{m-1}, \mfp_m]:~m \in
\cM_{\mff} \}$ \State $\mathfrak{g}_{\theta}(x_i) \gets \{\mfp_m:~m
\in \cM_{\mfg} \}$
\EndProcedure
\end{algorithmic}
\end{algorithm}

\section{Discovery of Network Topology Induced by HQVP}\label{s:netwtopo}

In order to solve Problem~\ref{problemDP} in a distributed way, it
is necessary that the $i$-th agent can discover a superset of its
neighbors in the topology of HQVP before even computing its own
cell. Next, we characterize an upper bound on the distance of the
$i$-th agent, measured in terms of $\delta_i$, from the points in
its own cell.

\begin{proposition}\label{prop:boundD}
Let $E_i :=\mathcal{E}_{\ell_{i,0}}(\mathbf{P}_{i,0}^{-1}
\chi_{i,0}; \mathbf{P}_{i,0}^{-1} )$. Then,
\begin{equation}\label{eq:ineqbd}
\delta_i(x;x_i) \leq \overline{\delta}_i,~~~\forall~x\in \cV^i,
\end{equation}
where $\overline{\delta}_i :=\max \{\delta_i(x;x_i):~x\in
\mathrm{bd}(E_i \cap \cS) \}$.
\end{proposition}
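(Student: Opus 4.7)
The plan is to reduce the claimed inequality to an elementary fact about convex functions on compact convex sets, using the containment $\cV^i \subseteq E_i \cap \cS$ already established in Proposition~\ref{prop:main}. Specifically, since $\cV^i \subseteq E_i \cap \cS$, it suffices to show that $\max_{x \in E_i \cap \cS} \delta_i(x; x_i) = \overline{\delta}_i$, i.e., the maximum of $\delta_i(\cdot; x_i)$ over $E_i \cap \cS$ is attained on its boundary $\mathrm{bd}(E_i \cap \cS)$. Once this is established, the result follows immediately, since any $x \in \cV^i$ then lies in $E_i \cap \cS$ and hence satisfies $\delta_i(x; x_i) \leq \overline{\delta}_i$.

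First, I would verify that $E_i \cap \cS$ is a nonempty compact convex set. By construction, $E_i$ is a closed ellipsoid (hence compact and convex), and by hypothesis $\cS$ is compact and convex, so $E_i \cap \cS$ inherits both properties. Nonemptiness follows from the fact that $x_i \in \cV^i \subseteq E_i \cap \cS$ (recall that, by Assumption~\ref{assumption1}, $\Delta_i(x_i; X) > 0$, so $x_i$ lies in the interior of $\cV^i$). Next, I would observe that the function $\delta_i(\cdot; x_i)$ defined in \eqref{eq:deltadef} is convex and continuous on $\mathbb{R}^2$, because $\mathbf{P}_i \succ \mathbf{0}$ makes $(x - x_i)^{\mathsf{T}} \mathbf{P}_i (x - x_i)$ a (strictly) convex quadratic, and adding the constant $\mu_i$ preserves convexity.

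The key step is then to invoke the standard fact that a convex function attains its maximum over a nonempty compact convex set on the boundary of that set (indeed, at an extreme point, by the Bauer maximum principle). Applied to $\delta_i(\cdot; x_i)$ on $E_i \cap \cS$, this gives
\begin{equation*}
\max_{x \in E_i \cap \cS} \delta_i(x; x_i) \; = \; \max_{x \in \mathrm{bd}(E_i \cap \cS)} \delta_i(x; x_i) \; = \; \overline{\delta}_i.
\end{equation*}
Combining with $\cV^i \subseteq E_i \cap \cS$ from Proposition~\ref{prop:main} then yields \eqref{eq:ineqbd}.

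I do not anticipate any real obstacle here: the proof is essentially a two-line application of Proposition~\ref{prop:main} together with the maximum principle for convex functions. The only thing one should be mildly careful about is stating $\overline{\delta}_i$ is well-defined (i.e., that the max is attained), which follows from continuity of $\delta_i(\cdot; x_i)$ and compactness of $\mathrm{bd}(E_i \cap \cS)$ as a closed subset of the compact set $E_i \cap \cS$.
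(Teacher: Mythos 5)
Your proof is correct and follows essentially the same route as the paper: both reduce the claim to the set inclusion $\cV^i \subseteq E_i \cap \cS$ from Proposition~\ref{prop:main} and the fact that a convex function attains its maximum over a compact convex set on the boundary, so that $\overline{\delta}_i$ equals the maximum of $\delta_i(\cdot;x_i)$ over all of $E_i \cap \cS$. The extra care you take about nonemptiness and attainment of the maximum is a harmless refinement of the paper's argument.
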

\begin{proof}
Because $\delta_i(x;x_i)$ is a convex quadratic function, we
conclude that its restriction over the convex and compact set $E_i
\cap \cS$ attains its maximum value in the latter set and in
addition, at least one of its maximizers belongs to the boundary
$\mathrm{bd}(E_i \cap
\cS)$ of the same set. Consequently, 
\begin{align*}
\overline{\delta}_i & = \max \{\delta_i(x;x_i):~x\in E_i \cap \cS \}
\\ &= \max \{\delta_i(x;x_i):~x\in \mathrm{bd}(E_i \cap \cS) \}.
\end{align*}
Inequality~\eqref{eq:ineqbd} follows from the set
inclusion~\eqref{eq:mainsetincl}.
\end{proof}

\begin{proposition}\label{prop:neigh}
Let us consider the index-set $\widetilde{\cN}_i$ which is defined
as follows:
\begin{equation*}
\widetilde{\cN}_i := \{\ell \in [0,n]_{\mZ}\backslash \{i\}:
\delta_{\ell}(x;x_{\ell}) \leq \overline{\delta}_i,~\forall x\in
\mathrm{bd}(E_i \cap \cS) \}, 
\end{equation*}
where $\overline{\delta}_i :=\max \{\delta_i(x;x_i):~x\in
\mathrm{bd}(E_i \cap \cS) \}$. Then, the set inclusion
$\widetilde{\cN}_i \supseteq \cN_i$ holds true.
\end{proposition}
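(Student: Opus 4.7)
The plan is to proceed by a direct chain of implications, extracting from the geometric definition of a neighbor a single witness point at which $\delta_\ell$ is controlled by $\overline{\delta}_i$, and then translating this into the analytic condition defining $\widetilde{\cN}_i$. Suppose $\ell \in \cN_i$, so by Definition~\ref{def:Topology} there exists $x^{\star} \in \mathrm{bd}(\cV^i) \cap \mathrm{bd}(\cV^\ell)$. I would first dispose of the generic situation in which $x^{\star} \notin \mathrm{bd}(\cS)$; the residual boundary case is a minor technicality handled either by a small perturbation argument or by invoking a supporting hyperplane to $\cV^i$ at $x^{\star}$.

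In this generic case I would apply Proposition~\ref{prop:basicprop}(2) at $x^{\star}$, which yields some index $j_{x^{\star}} \neq i$ with $\delta_i(x^{\star};x_i) = \delta_{j_{x^{\star}}}(x^{\star};x_{j_{x^{\star}}})$. Because $x^{\star}$ simultaneously lies on $\mathrm{bd}(\cV^\ell)$, a symmetric appeal to Proposition~\ref{prop:basicprop} from the side of $\cV^\ell$ forces $j_{x^{\star}} = \ell$: otherwise $x^{\star} \in \cV^\ell$ together with the strict version of \eqref{eq:voronoi} would contradict the equality attained by a different index. This pins down $\delta_\ell(x^{\star};x_\ell) = \delta_i(x^{\star};x_i)$. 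Next, I would invoke Proposition~\ref{prop:main} to conclude $x^{\star} \in \cV^i \subseteq E_i \cap \cS$, and then Proposition~\ref{prop:boundD} to bound $\delta_i(x^{\star};x_i) \leq \overline{\delta}_i$. Combining with the previous equality yields $\delta_\ell(x^{\star};x_\ell) \leq \overline{\delta}_i$ at a witness $x^{\star} \in E_i \cap \cS$.

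The final step is to match this pointwise bound to the quantifier structure in the definition of $\widetilde{\cN}_i$. Here I would exploit the fact that $\delta_\ell(\cdot;x_\ell)$ is a convex quadratic with $\mathbf{P}_\ell \succ \mathbf{0}$, so its $\overline{\delta}_i$-sublevel set is a compact ellipsoid, and a continuity/connectedness argument along a ray from $x_\ell$ through $x^{\star}$ produces an intersection of this sublevel set with $\mathrm{bd}(E_i \cap \cS)$. Alternatively, one can argue by convexity that the maximum of $\delta_\ell$ over $E_i \cap \cS$ is attained on $\mathrm{bd}(E_i \cap \cS)$ and combine with the pointwise bound on the interior witness to reach the stated condition.

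The main obstacle I anticipate is precisely this last step — bridging the existential witness on $E_i \cap \cS$ to a statement phrased over $\mathrm{bd}(E_i \cap \cS)$. The earlier steps amount to bookkeeping with Propositions~\ref{prop:basicprop}, \ref{prop:main}, and \ref{prop:boundD}, but translating the witness $x^{\star}$, which generically lies in the interior of $E_i \cap \cS$, into the boundary condition requires the convex-geometric argument outlined above, together with careful handling of the corner case $x^{\star} \in \mathrm{bd}(\cS)$.
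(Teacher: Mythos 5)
Your witness construction is sound as far as it goes: for $\ell\in\cN_i$ you correctly extract a point $x^{\star}\in\mathrm{bd}(\cV^i)\cap\mathrm{bd}(\cV^\ell)$ with $\delta_\ell(x^{\star};x_\ell)=\delta_i(x^{\star};x_i)\le\overline{\delta}_i$ (in fact this equality follows directly from $x^{\star}\in\cV^i\cap\cV^\ell$ and \eqref{eq:voronoi}, with no need to treat $x^{\star}\in\mathrm{bd}(\cS)$ as a special case). The genuine gap is exactly the step you flag as the ``main obstacle,'' and neither of your proposed bridges closes it. The membership condition for $\widetilde{\cN}_i$, read literally, demands $\delta_\ell(x;x_\ell)\le\overline{\delta}_i$ for \emph{every} $x\in\mathrm{bd}(E_i\cap\cS)$, and a single witness cannot deliver a universally quantified bound. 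Your ray/connectedness argument produces at most one boundary point inside the $\overline{\delta}_i$-sublevel set of $\delta_\ell$, i.e.\ an existential conclusion; your convexity argument gives $\max\{\delta_\ell(x;x_\ell):x\in\mathrm{bd}(E_i\cap\cS)\}\ge\delta_\ell(x^{\star};x_\ell)$, which is an inequality in the wrong direction. Worse, the universal claim you would need is false in general: a neighbor $\ell$ sitting near one side of $E_i$ will typically have $\delta_\ell(x;x_\ell)>\overline{\delta}_i$ at boundary points of $E_i\cap\cS$ on the opposite side, so no amount of additional argument will rescue the direct implication under the literal reading of the definition.

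The paper avoids this trap by arguing the contrapositive: it assumes $\ell\in\widetilde{\cN}_i^{c}$, takes this to mean $\delta_\ell(x;x_\ell)>\overline{\delta}_i$ for all $x\in\mathrm{bd}(E_i\cap\cS)$, and combines it with $\delta_i(x;x_i)\le\overline{\delta}_i$ on $\cV^i$ (Proposition~\ref{prop:boundD}) to rule out any equidistant point on $\mathrm{bd}(\cV^i)$, hence $\ell\notin\cN_i$. Note that this negation is only correct if the defining condition of $\widetilde{\cN}_i$ is read existentially (``there exists $x\in\mathrm{bd}(E_i\cap\cS)$ with $\delta_\ell(x;x_\ell)\le\overline{\delta}_i$''), which is evidently the intended meaning and the only one under which the proposition holds. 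Under that reading your direct argument does essentially work: having obtained the witness $x^{\star}\in E_i\cap\cS$, you push it along the segment from $x_\ell$ to $x^{\star}$ (when $x_\ell\notin E_i\cap\cS$) to its crossing with $\mathrm{bd}(E_i\cap\cS)$, where $\delta_\ell$ can only decrease since it is a convex quadratic centered at $x_\ell$. So the verdict is: your opening moves match the first half of the paper's proof, but your plan cannot be completed against the stated ($\forall$) definition, and the fix is either to switch to the contrapositive as the paper does or to recognize that the quantifier in the definition of $\widetilde{\cN}_i$ must be existential.
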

\begin{proof}
In view of Proposition~\ref{prop:basicprop}, all points in
$\mathrm{bd}(\cV^i) \backslash \mathrm{bd}(\cS)$ are equidistant
from at least one different agent from the same network, that is,
for any point $x \in \mathrm{bd}(\cV^i) \backslash
\mathrm{bd}(\cS)$, there exists $j_x \in [0,n]_{\mZ} \backslash\{ i
\}$ (the index $j_x$ depends on $x$) such that $\delta_i(x;x_i) =
\delta_{j_x}(x;x_{j_x})$. Thus, in view of
Definition~\ref{def:Topology}, $j_x \in \cN_i$. Now let $\ell \neq i$ and let us assume that $\ell \in
\widetilde{\cN}_i^c$, where $\widetilde{\cN}_i^c := \{ \ell \in
[0,n]_{\mZ} \backslash \{i\}: \ell \notin \widetilde{\cN}_i\}$.
Then, $\delta_{\ell}(x;x_{\ell})
> \overline{\delta}_i$, $\forall x \in \mathrm{bd}(E_i \cap \cS)$. But, in view of
Proposition~\ref{prop:boundD}, $\delta_i(x;x_i) \leq
\overline{\delta}_i,~~\forall x\in \cV^i \supsetneq
\mathrm{bd}(\cV^i)$; consequently, there is no point
$x\in\mathrm{bd}(\cV^i)$ such that $\delta_i(x;x_i) =
\delta_{\ell}(x;x_{\ell})$. Thus, $\ell \in \cN_i^c$ where $\cN_i^c
:= \{ \ell \in [0,n]_{\mZ} \backslash \{i\}: \ell \notin \cN_i\}$,
which implies that $\widetilde{\cN}_i^c \subseteq \cN_i^c$. We
conclude that $\widetilde{\cN}_i \supseteq \cN_i$ and the proof is
complete.
\end{proof}

Next, we will leverage Proposition~\ref{prop:neigh} to show that the
$i$-th agent can find a subset of the spatial domain $\cS$ that will
necessarily contain its neighbors without having computed $\cV^i$.

\begin{proposition}\label{prop:neincl}
Let $i\in [1,n]_{\mZ}$ and let $\mathcal{A}_i$ denote the compact
set enclosed by the closed curve $\cC_i: [0,2\pi] \rightarrow
\mathbb{R}^2$ with
\begin{align}\label{eq:xphi}
\cC_i(\phi) & := \mathbf{P}_{i,0}^{-1} \chi_{i,0} +
\sqrt{\ell_{i,0}}\mathbf{P}_{i,0}^{-1/2} e_\phi \nonumber \\
& ~~~~ + (\sqrt{r} / \| \mathbf{P}_{0}^{-1/2} \mathbf{P}_{i,0}^{1/2}
e_\phi \|) \mathbf{P}_{0}^{-1}\mathbf{P}_{i,0}^{1/2} e_\phi,
\end{align}
where $e_{\phi}:=[\cos\phi,~\sin \phi]^{\mathrm{T}}$.
Then, all the neighbors of the $i$-th agent lie necessarily in
$\cA_i$, that is,
\begin{equation}\label{eq:neincl}
x_\ell \in \mathcal{A}_i \cap X,~~~\forall~\ell \in \cN_i.
\end{equation}
\end{proposition}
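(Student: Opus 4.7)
The plan is to identify $\mathcal{A}_i$ as the Minkowski sum $E_i \oplus \cE_r(0;\mathbf{P}_0^{-1})$ of the ellipsoid $E_i$ with a second ellipsoid shaped by $\mathbf{P}_0$, for an appropriately chosen $r>0$, and then to show that the location $x_\ell$ of any neighbor necessarily lies inside this sum.

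First, I would pick an arbitrary $\ell \in \cN_i$ and invoke Definition~\ref{def:Topology} to obtain a common boundary point $x \in \mathrm{bd}(\cV^i) \cap \mathrm{bd}(\cV^\ell)$. Since each cell is closed in view of \eqref{eq:voronoi}, $x$ belongs to both $\cV^i$ and $\cV^\ell$, from which $\delta_i(x;x_i) = \delta_\ell(x;x_\ell)$ follows immediately. Proposition~\ref{prop:boundD} applied to the left-hand side then yields $\delta_\ell(x;x_\ell) \leq \overline{\delta}_i$.

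Next, I would exploit Assumption~\ref{assumption2} to translate this bound into an inequality involving only the data $(\mathbf{P}_0,\mu_0)$ that the $i$-th agent can access. For $\ell \in [1,n]_{\mZ}$, the partial order $\mathbf{P}_\ell \succ \mathbf{P}_0$ together with $\mu_\ell \geq \mu_0$ gives
\[
(x-x_\ell)\t \mathbf{P}_0 (x-x_\ell) + \mu_0 \leq \delta_\ell(x;x_\ell) \leq \overline{\delta}_i,
\]
while the case $\ell=0$ is trivial since then $\mathbf{P}_\ell = \mathbf{P}_0$ and $\mu_\ell=\mu_0$. Setting $r := \overline{\delta}_i - \mu_0$ (which is non-negative by Proposition~\ref{prop:boundD}), this rearranges to $x_\ell \in \{x\} \oplus \cE_r(0;\mathbf{P}_0^{-1})$. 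Because $x\in \cV^i \subseteq E_i\cap\cS \subseteq E_i$ by Proposition~\ref{prop:main}, I conclude $x_\ell \in E_i \oplus \cE_r(0;\mathbf{P}_0^{-1})$.

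The main obstacle is the final identification of this Minkowski sum with the compact set $\mathcal{A}_i$ enclosed by $\cC_i$ in \eqref{eq:xphi}. I would settle it via the standard supporting-hyperplane characterization for Minkowski sums of strictly convex bodies: a point $a+b$ lies on $\mathrm{bd}(E_i \oplus \cE_r(0;\mathbf{P}_0^{-1}))$ iff $a \in \mathrm{bd}(E_i)$, $b \in \mathrm{bd}(\cE_r(0;\mathbf{P}_0^{-1}))$, and the outward normals to the two ellipsoids at $a$ and $b$ are positively proportional. Parametrizing $a = \mathbf{P}_{i,0}^{-1}\chi_{i,0} + \sqrt{\ell_{i,0}}\,\mathbf{P}_{i,0}^{-1/2} e_\phi$ on $\mathrm{bd}(E_i)$ produces outward normal direction $\mathbf{P}_{i,0}^{1/2} e_\phi$. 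Aligning the corresponding outward normal $\mathbf{P}_0^{1/2}\tilde{v}$ of $\mathrm{bd}(\cE_r(0;\mathbf{P}_0^{-1}))$ with this direction fixes the unit vector $\tilde{v}$ up to sign; the outward orientation gives
\[
b = \frac{\sqrt{r}}{\|\mathbf{P}_0^{-1/2}\mathbf{P}_{i,0}^{1/2} e_\phi\|}\,\mathbf{P}_0^{-1}\mathbf{P}_{i,0}^{1/2} e_\phi.
\]
Summing $a+b$ reproduces $\cC_i(\phi)$ exactly, so $\cC_i$ parametrizes $\mathrm{bd}(E_i \oplus \cE_r(0;\mathbf{P}_0^{-1}))$; the enclosed set is thus $\mathcal{A}_i$, and \eqref{eq:neincl} follows from the preceding containment.
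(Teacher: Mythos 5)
Your proof is correct, and it reaches the same curve $\cC_i(\phi)$ by the same essential computation (aligning the outward normals of $\mathrm{bd}(E_i)$ and of a $\mathbf{P}_0$-ellipsoid of level $r=\overline{\delta}_i-\mu_0$), but the logical route is genuinely different from the paper's. The paper argues in the contrapositive: it fixes a center $\mz$ on $\cC_i$, characterizes $\cC_i$ as the locus of centers of ellipsoids $\cE_{r_i}(\mz;\mathbf{P}_0^{-1})$ externally tangent to $E_i$, and then shows that an agent located at such a $\mz$ satisfies $\delta_\ell(y;\mz)>\overline{\delta}_i$ for all $y\in E_i\supseteq\cV^i$, so it cannot be equidistant with agent $i$ anywhere on $\mathrm{bd}(\cV^i)$ and hence is not a neighbor. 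You instead argue directly: a neighbor $\ell$ shares a boundary point $x$ with $\cV^i$, the equidistance $\delta_\ell(x;x_\ell)=\delta_i(x;x_i)\leq\overline{\delta}_i$ combined with Assumption~\ref{assumption2} traps $x_\ell$ in $\{x\}\oplus\cE_r(0;\mathbf{P}_0^{-1})$, and since $x\in\cV^i\subseteq E_i$ the neighbor lies in $E_i\oplus\cE_r(0;\mathbf{P}_0^{-1})$, which you then identify with $\cA_i$ via the standard normal-alignment description of the boundary of a Minkowski sum of smooth strictly convex bodies. Your framing buys two things: it gives a precise set-theoretic meaning to ``the compact set enclosed by $\cC_i$'' (namely the convex body $E_i\oplus\cE_r(0;\mathbf{P}_0^{-1})$), and it handles all agents outside $\cA_i$ in one stroke, whereas the paper's proof as written only tests centers lying exactly on the curve $\cC_i$ rather than arbitrary points outside $\cA_i$. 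The paper's version, on the other hand, makes explicit the tangent-ellipsoid picture that motivates the construction and reuses Proposition~\ref{pr:setcontain} to compare $\delta_\ell$ with $\delta_0$. Both hinge on the same inputs (Propositions~\ref{prop:main} and \ref{prop:boundD}, Assumption~\ref{assumption2}); yours is the tighter write-up.
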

\begin{proof}
Let $\mathsf{w} \in \mathrm{bd}(E_i)$, where $E_i :=
\mathcal{E}_{\ell_{i,0}}(\mathbf{P}_{i,0}^{-1} \chi_{i,0};
\mathbf{P}_{i,0}^{-1} )$, and let us consider a point $\mz$ such
that the intersection of the ellipsoid $\mathcal{E}_{r_i}(\mz;
\mathbf{P}_{0}^{-1})$, where $r_i := \overline{\delta}_i - \mu_0$
(note that $r_i>0$ in view of Assumption~\ref{assumption2}), with
$E_i$ corresponds to the singleton $\{ \mathsf{w} \}$, that is,
\begin{align*}
\{ \mathsf{w} \} & = E_i \cap \mathcal{E}_{r_i}(\mz;
\mathbf{P}_{0}^{-1} ) = \mathrm{bd}(E_i) \cap
\mathrm{bd}(\mathcal{E}_{r_i}(\mz; \mathbf{P}_{0}^{-1} )).
\end{align*}
Because $\mathsf{w} \in \mathrm{bd}(E_i ) \cap
\mathrm{bd}(\mathcal{E}_{r_i}(\mz; \mathbf{P}_{0}^{-1} ))$,
\begin{align*}
0& = \|\mathbf{P}_{i,0}^{1/2} (\mathsf{w} - \mathbf{P}_{i,0}^{-1}
\chi_{i,0})\| - \sqrt{\ell_{i,0}} 
 = \| \mathbf{P}_{0}^{1/2} (\mathsf{w}-\mathsf{z})\| - \sqrt{r_i},
\end{align*}
which implies that there exist $\phi, \varphi \in [0,2\pi[$ such
that
\[
\mathsf{w} = \mathbf{P}_{i,0}^{-1} \chi_{i,0} + \sqrt{\ell_{i,0}}
\mathbf{P}_{i,0}^{-1/2} e_\phi = \mathsf{z} + \sqrt{r_i}
\mathbf{P}_{0}^{-1/2} e_{\varphi},
\]
where $e_{\phi}:=[\cos\phi,~\sin \phi]^{\mathrm{T}}$ and
$e_{\varphi}:=[\cos\varphi,~\sin \varphi]^{\mathrm{T}}$. Thus,
\[
\mathsf{z} = \mathbf{P}_{i,0}^{-1} \chi_{i,0} + \sqrt{\ell_{i,0}}
\mathbf{P}_{i,0}^{-1/2} e_\phi - \sqrt{r_i} \mathbf{P}_{0}^{-1/2}
e_{\varphi}.
\]
The normal vectors of the ellipsoids $E_i$ and
$\mathcal{E}_{r_i}(\mz; \mathbf{P}_{0}^{-1} )$ at point $\mathsf{w}$
(contact point) are anti-parallel, that is, there exists $\lambda >
0$ such that
\begin{align*}
&\tfrac{\partial}{\partial x}\big((x-
\mathbf{P}_{i,0}^{-1}\chi_{i,0} )\t \mathbf{P}_{i,0} (x-
\mathbf{P}_{i,0}^{-1}\chi_{i,0} ) -
\ell_{i,0} \big)\big|_{x=\mathsf{w}} \\
&~~\qquad~~~\qquad~~= - \lambda \tfrac{\partial}{\partial x}\big((x-
\mz )\t \mathbf{P}_{0} (x- \mz ) -r_i \big)\big|_{x=\mathsf{w}},
\end{align*}
from which it can be shown (see, for instance, Lemma~5 in
\cite{p:colavoidell17}) that
\[
e_{\varphi} = - (1/\| \mathbf{P}_{0}^{-1/2} \mathbf{P}_{i,0}^{1/2}
e_\phi \|) \mathbf{P}_{0}^{-1/2} \mathbf{P}_{i,0}^{1/2} e_\phi
\]
and thus, we conclude that $\mathsf{z} = \cC_i(\phi)$ where
$\cC_i(\phi)$ is defined in \eqref{eq:xphi}.

Now, let $\cA_i$ be the compact set enclosed by the closed curve
$\cC_i$. We will show that all the neighbors of the $i$-th agent are
located in $\mathcal{A}_i$, that is, $\mathcal{A}_i \supsetneq
\{x_{k} \in X:~k\in \cN_i\}$. In view of
Proposition~\ref{pr:setcontain}, the set inclusion
$\mathcal{E}_{r_i}(\mathsf{z}; \mathbf{P}_{0}^{-1} ) \supsetneq
\mathcal{E}_{r_i}(\mz; \mathbf{P}_{\ell}^{-1} )$ holds true for all
$\mz \in \cC_i$ and for all $\ell \neq i$. Now, for a given $\mz \in
\cC_i$, we have that
\[
\delta_0(x; \mz) = (x - \mz)\t \mathbf{P}_{0} (x - \mz) + \mu_0 =
\overline{\delta}_i,
\]
for all $x \in \mathrm{bd}(\mathcal{E}_{r_i}(\mz;
\mathbf{P}_{0}^{-1} ))$ whereas
\[\delta_\ell(y;\mz) = (y - \mz)\t \mathbf{P}_{\ell} (y-
\mz) + \mu_\ell = \overline{\delta}_i + \mu_{\ell}-\mu_0,
\]
for all $y \in \mathrm{bd}(\mathcal{E}_{r_i}(\mathsf{z};
\mathbf{P}_{\ell}^{-1} ))$. Because, $\mu_{\ell}-\mu_0\geq 0$, we
conclude that $\max \{ \delta_\ell(y;\mz): y \in
\mathcal{E}_{r_i}(\mz; \mathbf{P}_{\ell}^{-1} ) \} \geq \max
\{\delta_0(x;\mz): x \in \mathcal{E}_{r_i}(\mz; \mathbf{P}_{0}^{-1}
)\}$ which together with the set inclusion
$\mathcal{E}_{r_i}(\mathsf{z}; \mathbf{P}_{0}^{-1} ) \supsetneq
\mathcal{E}_{r_i}(\mathsf{z}; \mathbf{P}_{\ell}^{-1} )$ imply that
$\delta_{\ell}(y;\mz)> \overline{\delta}_i$ for all $y \in E_i
\supseteq E_i \cap \cS \supseteq \cV^i$ (the last set inclusion
follows from Proposition~\ref{prop:main}). Therefore,
\begin{align}
\delta_\ell(x;\mz)> \overline{\delta}_i \geq \max\{
\delta_\ell(y;x_i):~y \in \cV^i \},~~~\forall x \in \cV^i.
\end{align}
Therefore, there is no point $x \in \mathrm{\bd}(\cV^i)$ such that
$\delta_\ell(x;\mz) = \delta_\ell(x;x_i)$ for any $\mz \in \cC_i$.
Thus, in view of Proposition \ref{prop:basicprop}, it follows that
$\ell \notin \cN_i$ and the proof is complete.
\end{proof}

Proposition~\ref{prop:neincl} implies that the neighbors of the
$i$-th agent are necessarily confined in the subset $\cA_i \subseteq
\cS$ which is known to this agent before computing its cell $\cV^i$.
In practice, the $i$-th agent can communicate and exchange
information directly with its neighbors (e.g., by means of
point-to-point communication) provided that its communication radius
$\eta_i>0$ is sufficiently large such that its communication region
$\cB_{\eta_i}(x_i) \supseteq \cA_i$.

\begin{proposition}\label{prop:lowerbdCR}
The neighbors of the $i$-th agent are necessarily located in the
communication region $\cB_{\eta_i}(x_i)$ of the $i$-th agent, that
is,
\begin{align}\label{eq:setcomm}
\cB_{\eta_i}(x_i) \supseteq \{x_k \in X:~k\in \cN_i\},~~~~\forall
\eta_i\geq \underline{\eta_i}
\end{align}
where $\underline{\eta_i} := \max_{\phi \in [0,2\pi] } \|
\cC_i(\phi) -x_i\|$, 
with $\cC_i(\phi)$ defined as in \eqref{eq:xphi}.
\end{proposition}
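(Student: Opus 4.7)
The plan is to reduce this statement to Proposition~\ref{prop:neincl} together with an elementary ``farthest point lies on the boundary'' fact. First I would invoke Proposition~\ref{prop:neincl} to obtain the set inclusion $\{x_k \in X : k \in \cN_i\} \subseteq \cA_i$, where $\cA_i$ is the compact region enclosed by the closed curve $\cC_i$. Since $\underline{\eta_i}$ is, by construction, the maximum Euclidean distance from $x_i$ to any point on $\cC_i = \mathrm{bd}(\cA_i)$, the statement will follow as soon as one establishes the set inclusion $\cA_i \subseteq \cB_{\underline{\eta_i}}(x_i)$.

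To show $\cA_i \subseteq \cB_{\underline{\eta_i}}(x_i)$, I would argue that the continuous function $f(y) := \|y - x_i\|^2$ attains its maximum over the compact set $\cA_i$ at some $y^{\star} \in \cA_i$. Since $\nabla f(y) = 2(y - x_i)$ vanishes only at $y = x_i$ (which yields $f=0$ and so cannot be a maximizer in any nondegenerate case), $y^{\star}$ cannot lie in $\mathrm{int}(\cA_i)$ and must therefore belong to $\mathrm{bd}(\cA_i)$. Equivalently, there exists $\phi^{\star} \in [0,2\pi]$ with $y^{\star} = \cC_i(\phi^{\star})$, so
\begin{equation*}
\max_{y \in \cA_i} \|y - x_i\| \;=\; \|\cC_i(\phi^{\star}) - x_i\| \;=\; \max_{\phi \in [0,2\pi]} \|\cC_i(\phi) - x_i\| \;=\; \underline{\eta_i}.
\end{equation*}
Hence every $y \in \cA_i$ satisfies $\|y - x_i\| \leq \underline{\eta_i}$, which is exactly $\cA_i \subseteq \cB_{\underline{\eta_i}}(x_i)$.

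Finally, for any $\eta_i \geq \underline{\eta_i}$ I have $\cB_{\eta_i}(x_i) \supseteq \cB_{\underline{\eta_i}}(x_i) \supseteq \cA_i \supseteq \{x_k \in X : k \in \cN_i\}$, which is precisely \eqref{eq:setcomm}. The only potentially delicate step is the justification that the max of $\|\cdot - x_i\|$ over $\cA_i$ is attained on $\mathrm{bd}(\cA_i)$; I do not expect this to be a serious obstacle, as it can alternatively be proved by a ray-shooting argument (any interior point $y \neq x_i$ lies on the segment from $x_i$ to some boundary point $y' \in \cC_i$, and thus $\|y - x_i\| \leq \|y' - x_i\|$).
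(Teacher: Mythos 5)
Your proposal is correct and follows essentially the same route as the paper: show that $\cB_{\eta_i}(x_i)$ contains $\cA_i$ (via the fact that $\underline{\eta_i}$ bounds the distance from $x_i$ to every point of $\mathrm{bd}(\cA_i)=\cC_i([0,2\pi])$ and that the farthest point of the compact set $\cA_i$ from $x_i$ lies on its boundary), then invoke Proposition~\ref{prop:neincl} to place the neighbors inside $\cA_i$. Your write-up is in fact slightly more careful than the paper's, which passes from containment of $\mathrm{bd}(\cA_i)$ to containment of $\cA_i$ without spelling out the boundary-maximizer argument you supply.
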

\begin{proof}
By the definition of $\underline{\eta_i}$, we have that
\[
\cB_{\eta_i}(x_i) \supsetneq \{ \cC_i(\phi):~\phi\in[0,2\pi]\} =
\mathrm{bd}(\cA_i),
\]
and thus $\cB_{\eta_i}(x_i) \supseteq \cA_i$,
for all $\eta_i\geq \underline{\eta_i}$. Because $\cA_i$ contains all the neighbors of the $i$-th agent in
view of Proposition~\ref{prop:neincl}, then so does the closed ball
$\cB_{\eta_i}(x_i)$, for any $\eta_i\geq \underline{\eta_i}$. Thus,
the set inclusion \eqref{eq:setcomm} holds true.
\end{proof}

\begin{proposition}\label{prop:distrVP}
Let $\cI\subseteq \cN_i \subseteq \widetilde{\cN}_i \subseteq [0,n]_{\mathbb{Z}}$, where the index sets $\cN_i$ and $\widetilde{\cN}_i$ are defined as in \eqref{eq:Ni} and Proposition~\ref{prop:neigh}, respectively, and let $\Delta_i^{\cI}(x) := \min_{\ell \in \cI \backslash \{i\} } \delta_{\ell}(x;x_{\ell}) - \delta_i(x;x_i)$. Then 
\begin{equation}\label{eq:DeltaiI}
\Delta_i^{\cI}(x;x_i)= \Delta_i(x;x_i)=0,~~~\forall x \in \mathrm{bd}(\cV^i) \backslash \mathrm{bd}(\cS),
\end{equation}
where $\Delta_i(x;x_i)$ is defined as in \eqref{eq:Deltaeq}.
\end{proposition}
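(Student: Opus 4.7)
The plan is to establish the two equalities in \eqref{eq:DeltaiI} separately, with the second one, $\Delta_i(x;x_i)=0$ on $\mathrm{bd}(\cV^i) \backslash \mathrm{bd}(\cS)$, being essentially an immediate restatement of Proposition~\ref{prop:basicprop} (item 2), combined with the definition of $\Delta_i$ in \eqref{eq:Deltaeq}: the existence of some $j=j_x\neq i$ with $\delta_i(x;x_i)=\delta_{j_x}(x;x_{j_x})=\min_{\ell\neq i}\delta_\ell(x;x_\ell)$ at every boundary point of $\cV^i$ not lying on $\mathrm{bd}(\cS)$ directly yields $\Delta_i(x;x_i)=0$ there.

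The substantive claim is therefore the first equality, $\Delta_i^{\cI}(x;x_i)=\Delta_i(x;x_i)$, for the restricted index set $\cI$ (interpreted so that $\cI$ contains every actual neighbor of the $i$-th agent, i.e., $\cN_i \subseteq \cI \subseteq \widetilde{\cN}_i$, which is the case of interest for the distributed implementation). I would proceed in two short steps. First, because $\cI\backslash\{i\}\subseteq[0,n]_{\mZ}\backslash\{i\}$, the minimum over the smaller index set cannot be smaller, so
\begin{equation*}
\Delta_i^{\cI}(x;x_i) \;=\; \min_{\ell\in\cI\backslash\{i\}}\delta_\ell(x;x_\ell) - \delta_i(x;x_i) \;\geq\; \min_{\ell\neq i}\delta_\ell(x;x_\ell) - \delta_i(x;x_i) \;=\; \Delta_i(x;x_i).
\end{equation*}
Second, for $x\in\mathrm{bd}(\cV^i)\backslash\mathrm{bd}(\cS)$, take $j_x\neq i$ as above. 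By Definition~\ref{def:Topology}, the point $x$ lies in $\mathrm{bd}(\cV^i)\cap\mathrm{bd}(\cV^{j_x})$, so $j_x\in\cN_i$, and hence $j_x\in\cI$ by hypothesis. Therefore
\begin{equation*}
\min_{\ell\in\cI\backslash\{i\}}\delta_\ell(x;x_\ell) \;\leq\; \delta_{j_x}(x;x_{j_x}) \;=\; \delta_i(x;x_i),
\end{equation*}
which gives $\Delta_i^{\cI}(x;x_i)\leq 0 = \Delta_i(x;x_i)$. Combining with the first inequality yields the desired equality.

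There is no real obstacle here: the argument is a bookkeeping exercise on which indices realize the minimum that defines the lower envelope. The only delicate point is identifying the correct semantics of the hypothesis $\cI\subseteq\cN_i\subseteq\widetilde{\cN}_i$, namely that at each boundary point $x$ the minimizing index $j_x$ must belong to $\cI$; this is precisely what is guaranteed by the fact that $j_x\in\cN_i$, which is supplied by Proposition~\ref{prop:basicprop} and Definition~\ref{def:Topology}. With this observation in place, the proposition follows and, as an operational consequence, the $i$-th agent can reconstruct the zero set of $\Delta_i$ on $\mathrm{bd}(\cV^i)\backslash\mathrm{bd}(\cS)$ using only information exchanged with agents in any such $\cI$, which is exactly what is needed to make the line-search partitioning procedure of Algorithm~\ref{DVicell} distributed.
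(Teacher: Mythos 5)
Your proof is correct and takes essentially the same route as the paper's: both rest on the monotonicity of the minimum under shrinking the index set (giving $\Delta_i^{\cI}\geq\Delta_i$) together with the observation that the index $j_x$ realizing $\min_{\ell\neq i}\delta_\ell(x;x_\ell)$ at a point of $\mathrm{bd}(\cV^i)\backslash\mathrm{bd}(\cS)$ belongs to $\cN_i$ and hence to $\cI$; the paper merely packages the second half as a proof by contradiction where you argue directly. You were also right to flag the hypothesis: the inclusion must be read as $\cN_i\subseteq\cI$ (the printed $\cI\subseteq\cN_i$ appears to be a typo, since the paper's own concluding ``contradiction'' — $j_z\in\cN_i$ yet $j_z\notin\cI$ — only contradicts $\cN_i\subseteq\cI$, and the intended application is $\cI=\widetilde{\cN}_i\supseteq\cN_i$).
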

\begin{proof}
By definition, $\Delta_i^{\cI}(x) \geq \Delta_i(x;X)$, for all $x\in \cS$, given that the $\min$ operator in the definition of $\Delta_i^{\cI}$ is applied over an index set which is a subset of the one that appears in the definition of $\Delta_i$ in \eqref{eq:Deltaeq}. In addition, in view of Prop.~\ref{prop:globineq}, $\Delta_i(x;X) = 0$ for all $\mathrm{bd}(\cV^i) \backslash \mathrm{bd}(\cS)$, which implies that $\Delta_i^{\cI}(x) \geq 0$ for all $x\in\mathrm{bd}(\cV^i) \backslash \mathrm{bd}(\cS)$. Next, we show that the previous non-strict inequality can only hold as an equality. Let us assume that there exists $z \in \mathrm{bd}(\cV^i) \backslash \mathrm{bd}(\cS)$ such that $\Delta_i^{\cI}(z) > 0$. However, since $\Delta_i(z;X) = 0$, there is $j_z \notin \cI$ such that $\delta(z;x_i) = \delta(z;x_{j_z})$, which implies that the agent $j_z$ is a neighbor of the $i$-th agent, or equivalently, $j_z \in \cN_i$. However, $j_z \notin \cI$ and we know that, by hypothesis, $\cI \subseteq \cN_i$; thus, we have reached a contradiction and the proof is complete.
\end{proof}

\begin{remark}
Proposition \ref{prop:distrVP} implies that the Voronoi cell $\cV^i$ and its boundary $\mathrm{bd}(\cV^i)$, which are fully characterized in Proposition~\ref{prop:main1}, can be computed in a distributed way that relies on the exchange of information of the $i$-th agent with only the set of agents whose index belongs to $\widetilde{\cN}_i \supseteq \cN_i$ (the latter set of agents contains necessarily the set of neighbors of the $i$-th agent in view of Proposition~\ref{prop:neigh}). In other words, the cell $\cV^i$ and its boundary $\mathrm{bd}(\cV^i)$ can be computed in a \textit{distributed} way, which is a key result of this work.
\end{remark}

\begin{remark}
Let us assume that the $i$-th agent can communicate with all of its teammates in order to compute the point-set $P_{\theta}^i$, which according to Proposition~8 plays a key role in the complete characterization of $\cV^i$ and $\mathrm{bd}(\cV^i)$. For a given $\theta\in[0,2\pi[$, the point-set $P_{\theta}^i$ will consist of $M$ points, which means that the $i$-th agent will have to exchange at least $M$ messages with the other agents from the same network, under the assumption of an all-to-all type communication. To each pair $(i,j)$ corresponds at most two points in $P_{\theta}^i$ (note that the quadratic equation \eqref{eq:quadreq} has at most 2 solutions whose corresponding points $p = x_i+\rho e_{\theta}$ can lie in $\cS$). Thus, in the worst case, $M = 2 n$ assuming the exchange of 2 messages for each un-ordered pair $(i,j)$. The most expensive part of the proposed partitioning algorithm is the ordering of the points in $P_{\theta}^i$ in accordance with \eqref{p:orderline} to construct the (ordered) point-set $\mathfrak{P}_{\theta}^i$. The process of ordering the point-set $P_{\theta}^i$ (equivalent to sorting a list) has worst-case time complexity in $\mathcal{O}(M \mathrm{ln}(M))$ or $\mathcal{O}(2n \mathrm{ln}(2n))$. Let $n_i$ denote the number of the agents which are located in the set $\cA_i$, which according to Prop.~~\ref{prop:neincl} contains the locations of all the neighbors of the $i$-th agent. Now, let $\zeta_i = n_i/n$, then the number of messages that the $i$-th agent has to exchange is $\zeta_i M$ and the worst-time complexity for ordering the points of $P_{\theta}^i$ that lie in $\cA_i$ in accordance with \eqref{p:orderline} is in $\mathcal{O}(2\zeta_i n \, \mathrm{ln}(2\zeta_i n ))$.
\end{remark}

\section{Numerical Simulations}~\label{s:simu}
We consider a heterogeneous multi-agent network of $n=24$ agents
(plus the $0$-th agent) with different distance operators. For our
simulations, we consider the spatial domain $\cS = [-4,4]\times
[-4,4]$ and we take $\mathbf{P}_i = \mathbf{U}_i \mathbf{D}
\mathbf{U}_i\t$, with $\mathbf{D} = \big[
\begin{smallmatrix} 8 & 0\\ 0 & 3 \end{smallmatrix} \big]$
and $\mathbf{U}_i = \big[ \begin{smallmatrix} \cos \phi_i & -\sin \phi_i\\
\sin \phi_i & \cos \phi_i \end{smallmatrix} \big]$, where $\phi_i =
2\pi i/n$, for $i \in [1,n]_{\mZ}$, and $\mu_i=0$ for all $i \in
[1,n]_{\mZ}$. Clearly, $\lambda_{\min}(\mathbf{P}_i) = 3$ and
$\lambda_{\max}(\mathbf{P}_i) = 8$ for all $i\in[1,n]$ and thus, the
ratio $\lambda_{\max}(\mathbf{P}_i) / \lambda_{\min}(\mathbf{P}_i) =
8/3$, which indicates the presence of strong anisotropic features.
Furthermore, we take $x_0=(1/n) x_i$ (average position of the agents
of the actual network), $\mathbf{P}_0 = \lambda_0 \mathbf{I}$ with
$\lambda_0 \in \{1.7, 2.9\}$ and $\mu_0=0$ (note that $0< \lambda_0
< \lambda_{\min}(\mathbf{P}_i)$ for all $i \in [1, n]_{\mZ}$). With
this particular selection of parameters, both Assumptions~1 and 2
are clearly satisfied. The HQVPs generated by the positions of the
extended network are illustrated in Fig.~\ref{F:HQVPMAS1} for
$\lambda_0=1.7$ and in Fig.~\ref{F:HQVPMAS2} for $\lambda_0=2.9$.
The partitions in Figure~\ref{F:HQVPMAS} have been computed by means
of exhaustive numerical techniques and the obtained results are
included here mainly for verification purposes. In the same figure,
we have included contours (level sets) of the proximity metric of
each agent restricted on their own cells to illustrate the
anisotropic features in this partitioning problem. The cell $\cV^0$
corresponds to the red cell which is placed near the center of the
spatial domain $\cS$. We observe that $\cV^0$ is smaller when
$\lambda_0 =2.9$ than when $\lambda_0=1.7$. Note that by letting
$\lambda_0$ get closer (from below) to
$\lambda_{\min}(\mathbf{P}_i)=3$, the matrix $\mathbf{P}_0$ gets
``closer'' to violating Assumption~\ref{assumption2} whereas the
coverage hole $\cV^0$ becomes smaller. Thus, selection of
$\lambda_0$ has to strike a balance between well-posedness of the
proposed partitioning algorithm and smallness of the coverage hole
$\cV^0$. Another interesting observation is that the cell $\cV^{15}$ in both partitions is comprised of two disconnected components (only one of them contains in its interior the corresponding generator $x_{15}$).

Figure~\ref{F:HQVPDEC} illustrates the cells $\cV^{14}$ and
$\cV^{23}$ of the HQVP computed by means of the proposed distributed
algorithm for $\lambda_0 = 1.7$ (Figs.~\ref{F:HQVPDEC1}-\ref{F:HQVPDEC2}) and $\lambda_0 = 2.9$ (Figs.~\ref{F:HQVPDEC3}-\ref{F:HQVPDEC4}). For these simulations, we have used a uniform grid of $[0,2\pi[$ comprised of 360 nodes for the parameter (angle) $\theta$. The cross markers denote the
generators $x_{14}$ and $x_{23}$ whereas the small red circles and
red disks correspond to the positions of the rest of the agents of
the \textit{extended} network. In particular, the red (filled) disks in
Fig.~\ref{F:HQVPDEC} correspond to the neighbors
of the $i$-th agent in the topology of the HQVP, for $i=14$ and
$i=23$, respectively. The red dashed-dotted curves in the same figures indicate the boundaries of the ellipsoids $E_{14}$ and $E_{23}$
(recall that the latter ellipsoids contain the cells $\cV^{14}$ and
$\cV^{23}$ in view of Proposition~\ref{prop:baslemma}) whereas the
blue dashed curves denote the boundaries of the sets $\cA_{14}$ and
$\cA_{23}$ which contain the neighbors of the $i$-th agent for,
respectively, $i=14$ and $i=23$ in view of
Proposition~\ref{prop:neincl}. We observe that the cells $\cV^{14}$
and $\cV^{23}$ in Fig.~\ref{F:HQVPDEC} match
with their corresponding cells in Fig.~\ref{F:HQVPMAS1}. In
addition, the results illustrated in Fig.~\ref{F:HQVPDEC1}
--\ref{F:HQVPDEC4} are in agreement with
Propositions~\ref{prop:main} and~\ref{prop:neincl}. In particular,
the ellipsoids $E_{14}$ and $E_{23}$ contain, respectively, the
cells $\cV^{14}$ and $\cV^{23}$. Furthermore, the sets $\cA_{14}$
and $\cA_{23}$ contain the neighbors of the $i$-th agent for,
respectively, $i=14$ and $i=23$, which are denoted as filled red disks. 

We observe that 
the sets $E_{14}$, $E_{23}$, $\cA_{14}$
and $\cA_{23}$ in Figs.~\ref{F:HQVPDEC1}-\ref{F:HQVPDEC2} (corresponding to $\lambda_0=1.7$) are significantly smaller than their counterparts in Figs.~\ref{F:HQVPDEC3}-\ref{F:HQVPDEC4} (corresponding to $\lambda_0=2.9$). We conclude that although the decrease of the value of the parameter $\lambda_0$ may increase the size of the coverage hole (cell $\cV^0$), it may, on the other hand, render the problem of discovering the network topology induced by HQVP more meaningful in the sense that by solving the latter problem each agent will be able to identify a rather small subset of the spatial domain that necessarily contains its neighbors. In this way, each agent will be able to avoid communicating with non-neighboring agents which cannot contribute to the process of computing its own cells. In our simulations, we observe that while the cells $\cV^{14}$ for $\lambda_0=1.7$ and $\lambda_0=2.9$ are identical and their agents have the exact same sets of neighbors in both cases, the agent $i=14$ has to communicate with more agents (the ones that lie within the set $\cA_{14}$ in view of Prop.~\ref{prop:neincl}) and also search for the boundary points of its own cell over a larger set (in view of the Prop.~\ref{prop:main}, $\cV^{14}$ is a subset of $E_{14}$) when $\lambda_0=2.9$ than when $\lambda_0 = 1.7$. The situation is similar for $\cV^{23}$ although the changes on the sets $E_{23}$ and $\cA_{23}$ have a less substantial effect mainly because the agent $i=23$ is isolated from the majority of its teammates and is located close to the boundary of the spatial domain $\cS$.

\begin{figure}[htb]
\centering
\psfrag{x}[][][0.7]{$x$} \psfrag{y}[][][0.7]{$y$}
\psfrag{O}[][][0.7]{$~\cV^0$} \psfrag{V}[][][0.7]{$\cV^{15}~~~$}
\subfigure[$\mathbf{P}_0 = 1.7 \mathbf{I}$]%
{\epsfig{file =
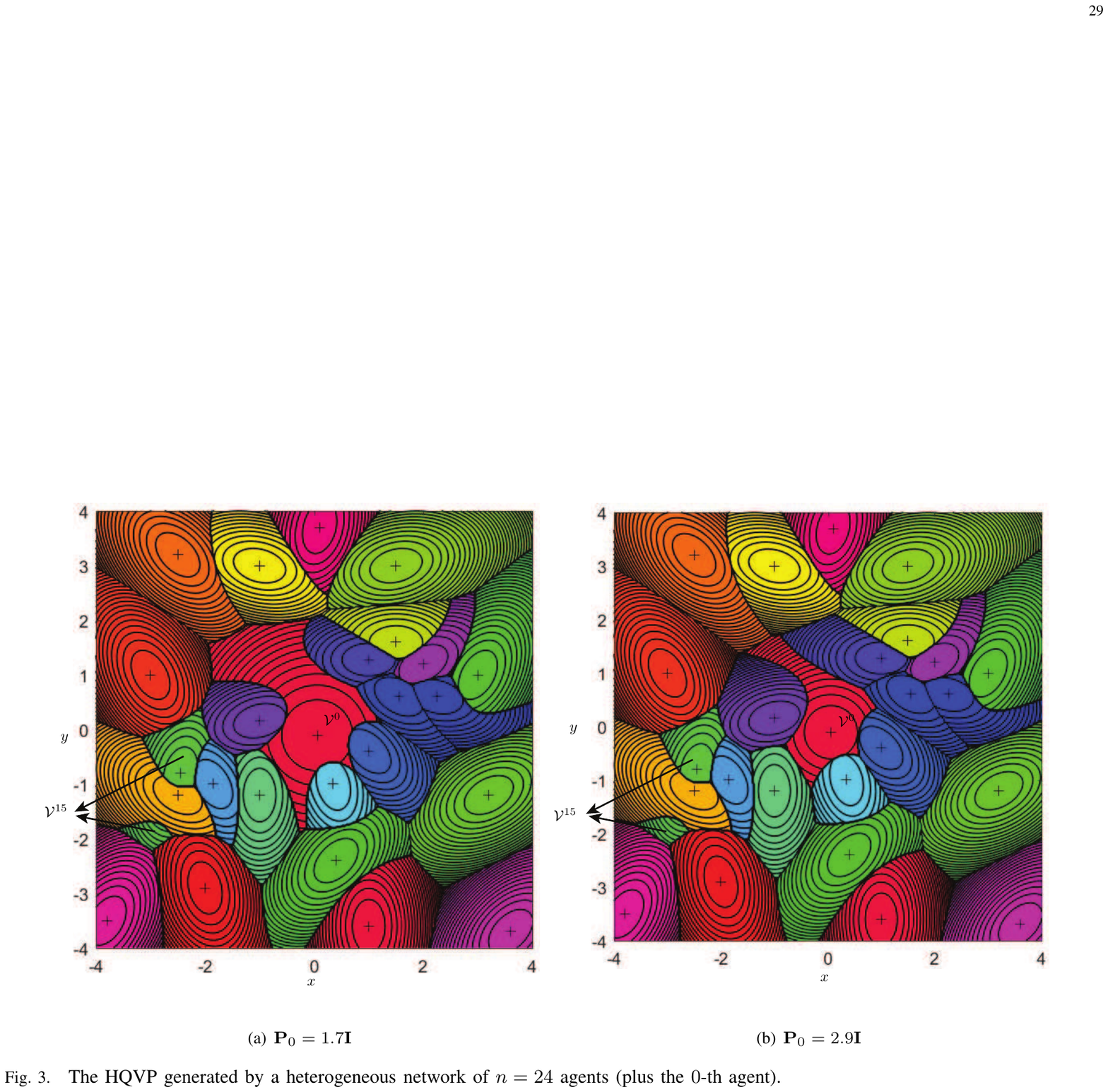,clip=,width=0.44\linewidth}\label{F:HQVPMAS1}}\hspace{3mm}
\subfigure[$\mathbf{P}_0 = 2.9 \mathbf{I}$]{\epsfig{file =
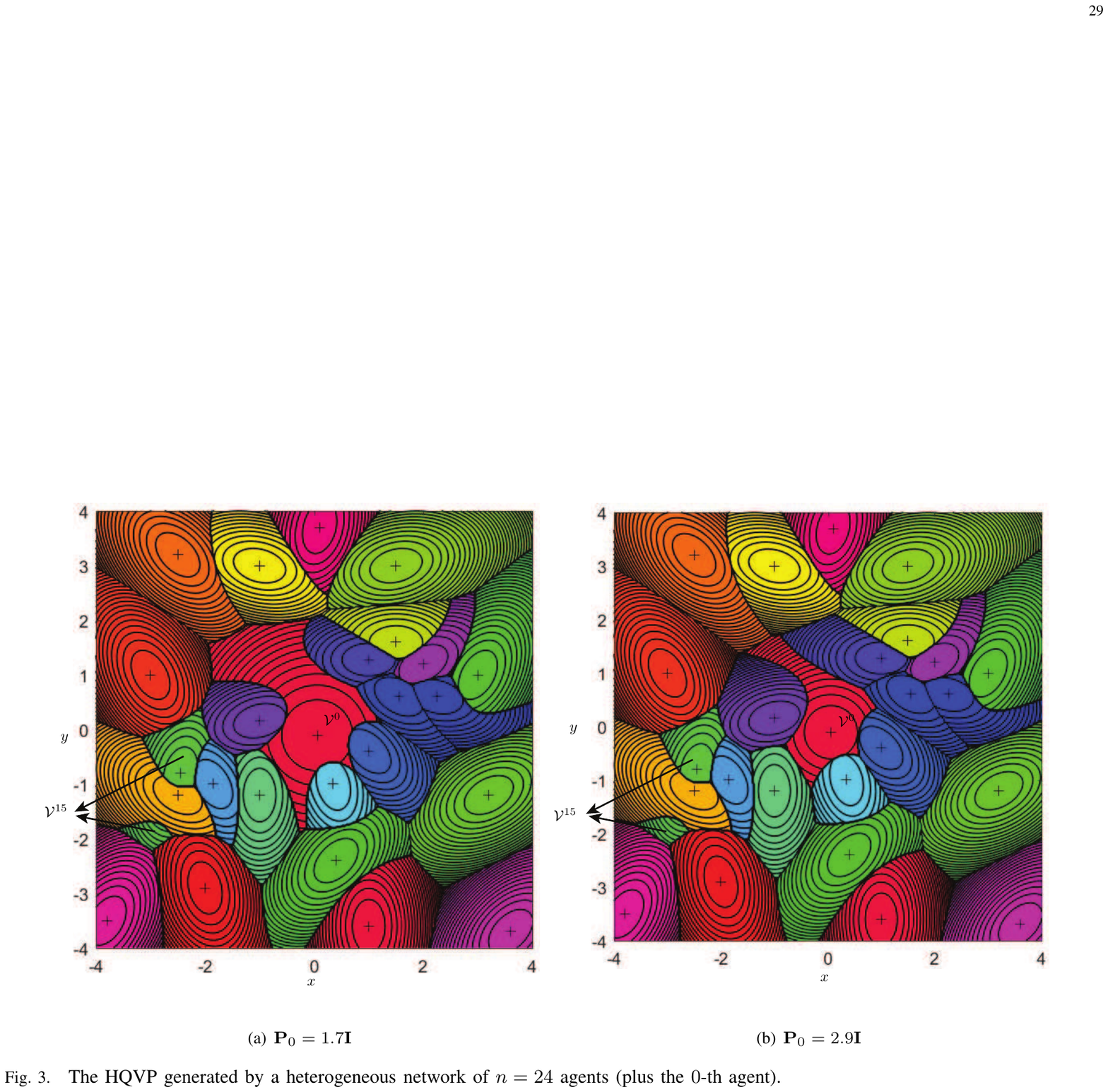,clip=,width=0.455\linewidth}\label{F:HQVPMAS2}}
\caption{\small{The HQVP generated by a heterogeneous network of
$n=24$ agents (plus the $0$-th agent).}}\label{F:HQVPMAS}
\end{figure}

\begin{figure}[htb]
\centering \psfrag{x}[][][0.8]{$x$} \psfrag{y}[][][0.8]{$y$}
\psfrag{a}[][][0.8]{$\cA_{14}$} \psfrag{A}[][][0.8]{$\cA_{23}$}
\psfrag{v}[][][0.8]{$\cV^{14}$} \psfrag{V}[][][0.8]{$~~~\cV^{23}$}
\psfrag{e}[][][0.8]{$E_{14}$} \psfrag{E}[][][0.8]{$E_{23}$}
\subfigure[$i = 14$ ($\lambda_0 =1.7$)]{\epsfig{file =
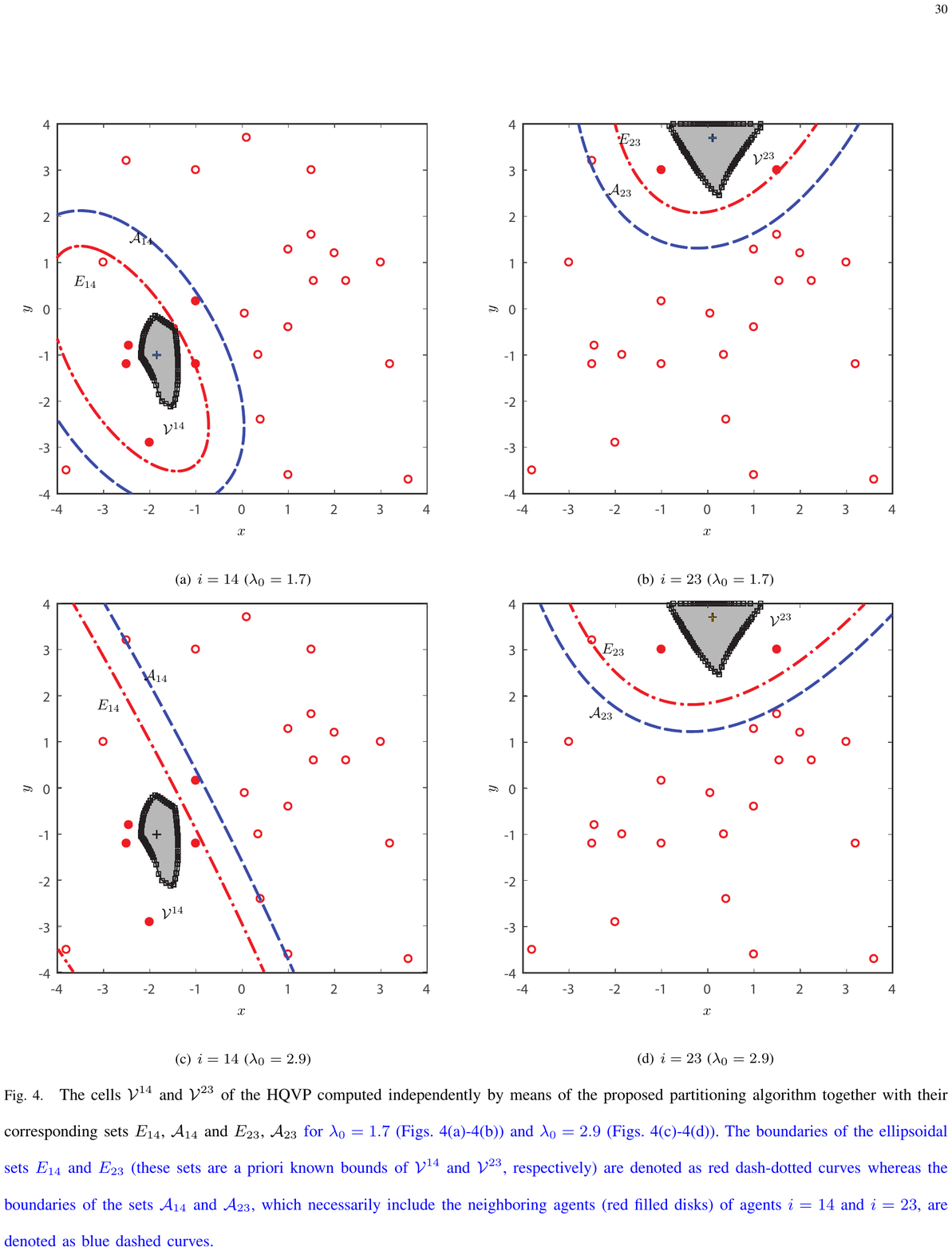,clip=,width=0.48\linewidth}\label{F:HQVPDEC1}}
\subfigure[$i = 23$ ($\lambda_0 =1.7$)]{ \epsfig{file =
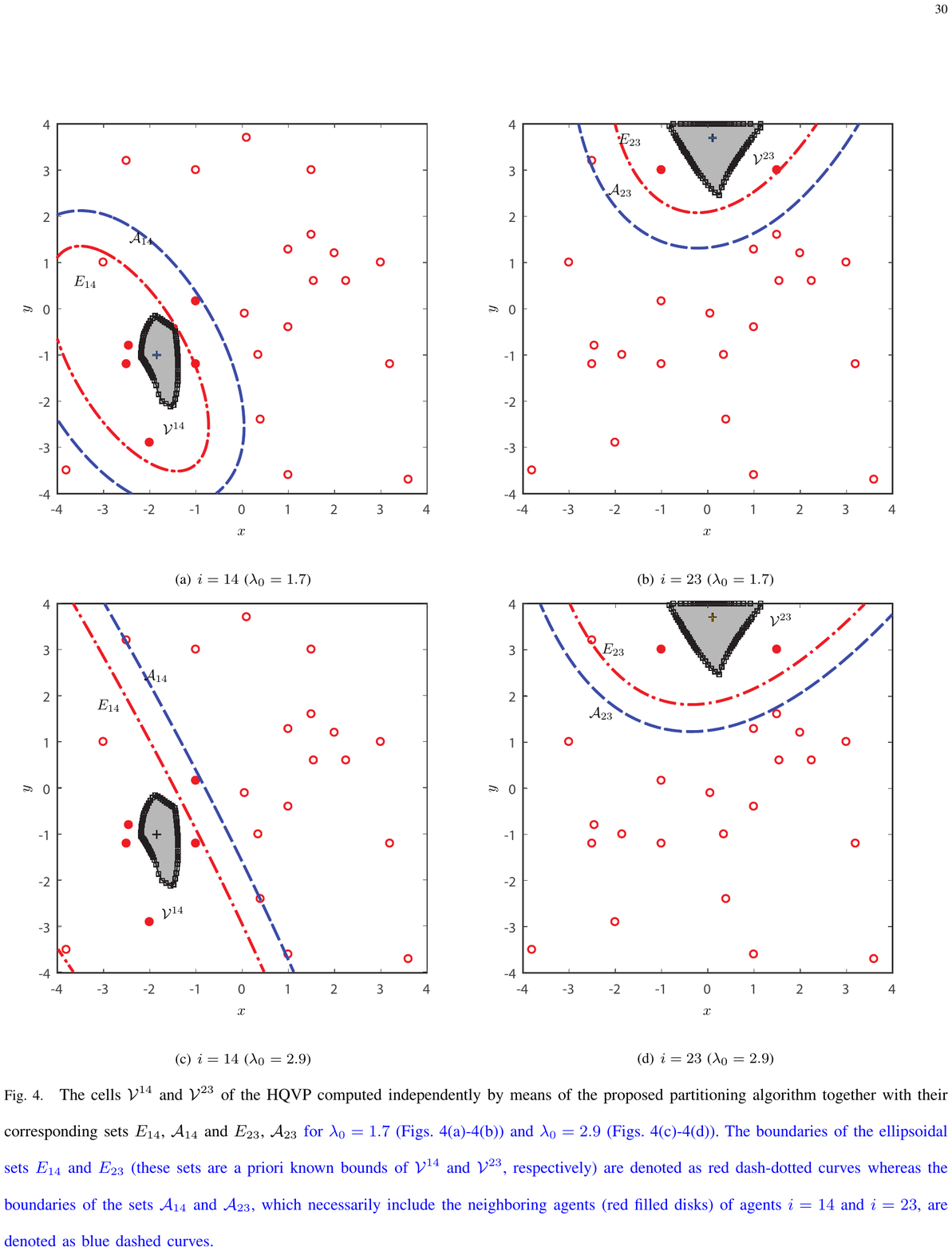,clip=,width=0.48\linewidth}\label{F:HQVPDEC2}}
\subfigure[$i = 14$ ($\lambda_0 =2.9$)]{\epsfig{file =
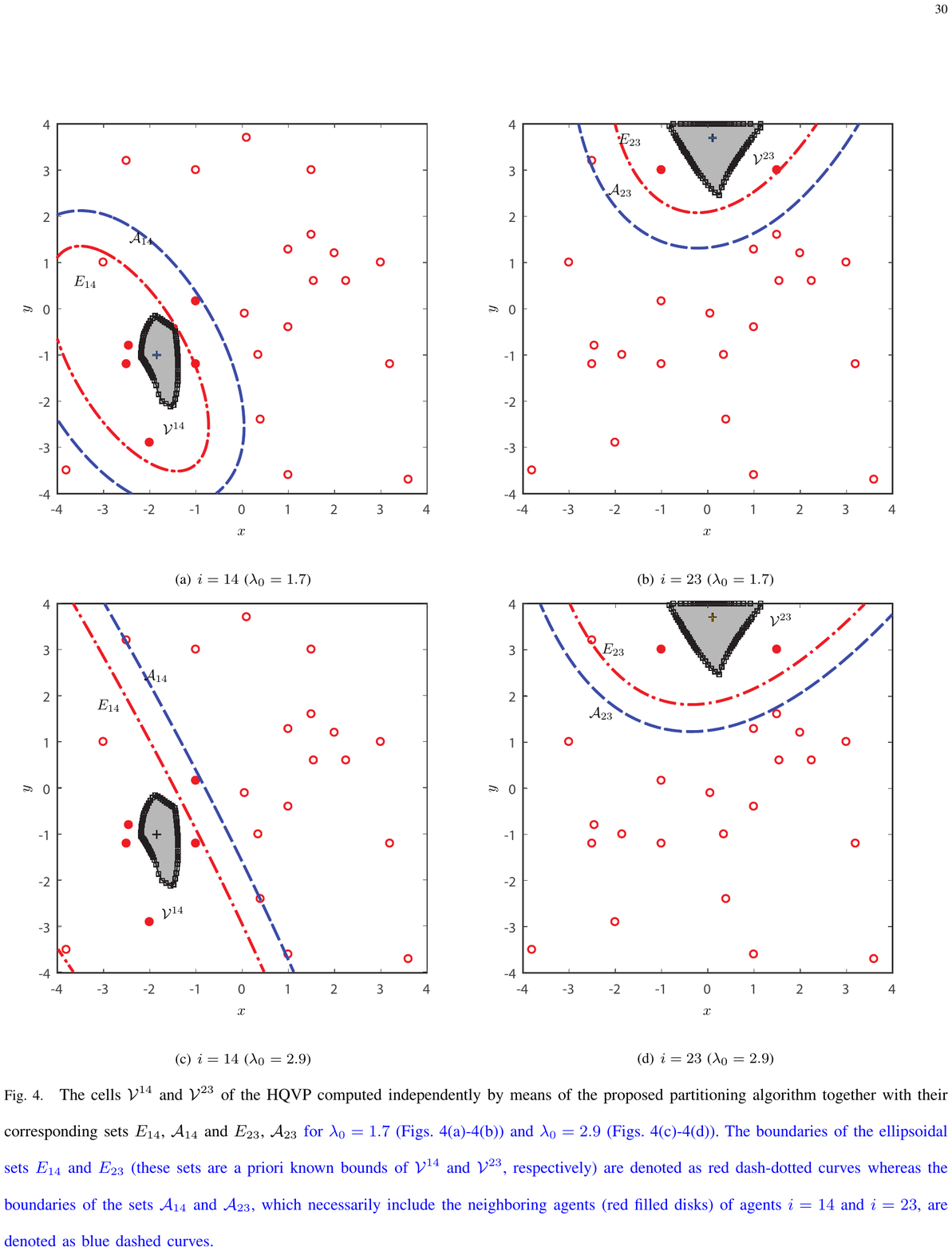,clip=,width=0.48\linewidth}\label{F:HQVPDEC3}}
\subfigure[$i = 23$ ($\lambda_0 =2.9$)]{ \epsfig{file =
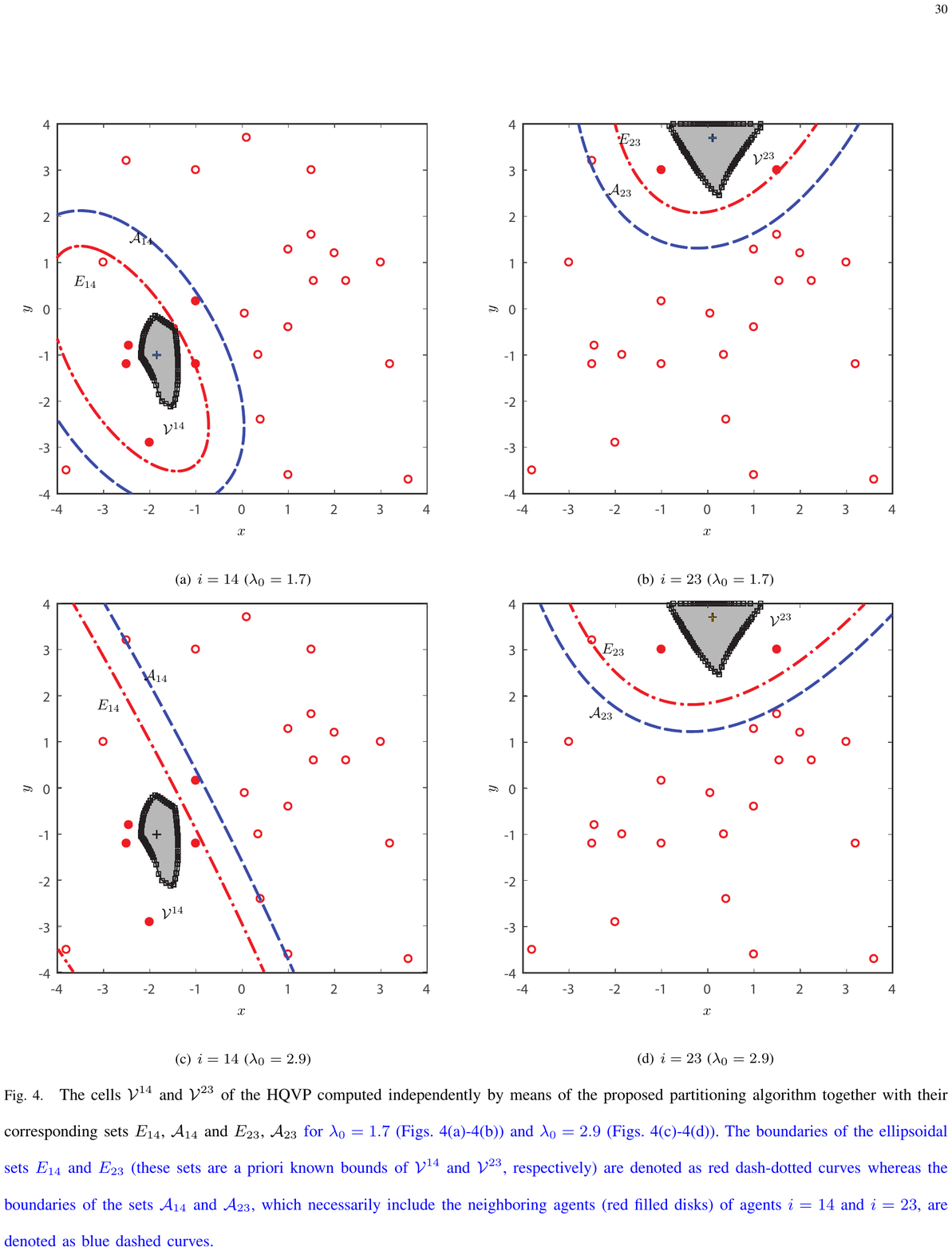,clip=,width=0.48\linewidth}\label{F:HQVPDEC4}}
\caption{\small{The cells $\cV^{14}$ and $\cV^{23}$ of the HQVP
computed independently by means of the proposed partitioning
algorithm together with their corresponding sets $E_{14}$, $\cA_{14}$ and $E_{23}$, $\cA_{23}$ for $\lambda_0 = 1.7$ (Figs.~\ref{F:HQVPDEC1}-\ref{F:HQVPDEC2}) and $\lambda_0 = 2.9$ (Figs.~\ref{F:HQVPDEC3}-\ref{F:HQVPDEC4}). The boundaries of the ellipsoidal sets $E_{14}$ and $E_{23}$ (these sets are a priori known bounds of $\cV^{14}$ and $\cV^{23}$, respectively) are denoted as red dash-dotted curves whereas the boundaries of the sets $\cA_{14}$ and $\cA_{23}$, which necessarily include the neighboring agents (red filled disks) of agents $i=14$ and $i=23$, are denoted as blue dashed curves.} }\label{F:HQVPDEC}
\end{figure}

\section{Conclusion}~\label{s:concl}
In this work, we have presented distributed algorithms for workspace
partitioning and network topology discovery problems for
heterogeneous multi-agent networks whose agents employ different
quadratic proximity metrics. The proposed algorithms leverage the
underlying structure of the solutions to the problems considered. In
our future work, we will explore how the proposed algorithms can be
integrated in solution techniques for distributed optimization and
estimation problems for heterogeneous networks operating in
anisotropic environments.

\bibliographystyle{ieeetr}
\bibliography{bakolas}

\end{document}